\documentclass[journal]{IEEEtran}
\usepackage{amsmath,amssymb,amsfonts,bm}
\usepackage{amsthm}
\usepackage{booktabs}
\usepackage{graphicx}
\usepackage{cite}
\usepackage[colorlinks=false,hidelinks]{hyperref}
\usepackage{orcidlink}
\usepackage{xcolor}
\usepackage{algorithm}
\usepackage{algorithmic}
\newcommand{\im}{\jmath}                        
\newcommand{\Imag}{\operatorname{Im}}
\newcommand{\Real}{\operatorname{Re}}

\newcommand{\diag}{\operatorname{diag}}
\newcommand{\dd}{\mathrm{d}}

\newcommand{\br}{\mathbf{r}}
\newcommand{\bu}{\bar{\mathbf{u}}}
\newcommand{\bk}{\bar{\mathbf{k}}}
\newcommand{\bp}{\mathbf{p}}
\newcommand{\bw}{\mathbf{w}}
\newcommand{\bh}{\mathbf{h}}
\newcommand{\bG}{\mathbf{G}}
\newcommand{\bI}{\mathbf{I}}
\newcommand{\bC}{\mathbf{C}}
\newcommand{\bB}{\mathbf{B}}
\newcommand{\bR}{\mathbf{R}}
\newcommand{\bX}{\mathbf{X}}
\newcommand{\bZ}{\mathbf{Z}}
\newcommand{\bj}{\mathbf{j}}
\newcommand{\bq}{\mathbf{q}}
\newcommand{\surf}{\mathcal{S}}
\newcommand{\Utt}{\mathcal{U}}
\newcommand{\kerZ}{\boldsymbol{\mathcal{Z}}}
\newcommand{\kerR}{\boldsymbol{\mathcal{R}}}
\newcommand{\kerX}{\boldsymbol{\mathcal{X}}}
\newcommand{\bsb}{\bar{\mathbf{s}}}             
\newcommand{\brb}{\bar{\mathbf{r}}}             
\newcommand{\bpb}{\bar{\mathbf{p}}}             
\newcommand{\btau}{\bar{\bm{\tau}}}             
\newcommand{\ba}{\mathbf{a}}
\newcommand{\bA}{\mathbf{A}}
\newcommand{\Herm}{\operatorname{Herm}}

\newcommand{\be}{\mathbf{e}}
\newcommand{\bE}{\mathbf{E}}
\newcommand{\bg}{\mathbf{g}}
\newcommand{\bv}{\mathbf{v}}
\newcommand{\bM}{\mathbf{M}}                    
\newcommand{\bT}{\mathbf{T}}
\newcommand{\bW}{\mathbf{W}}
\newcommand{\bY}{\mathbf{Y}}
\newcommand{\biota}{\bm{\iota}}
\newcommand{\bLam}{\bm{\Lambda}}
\newcommand{\bPhi}{\bm{\Phi}}                   
\newcommand{\bPsi}{\bm{\Psi}}                   
\newcommand{\bP}{\mathbf{P}}                    
\newcommand{\bF}{\mathbf{F}}                    

\newcommand{\bU}{\mathbf{U}}                    
\newcommand{\bQ}{\mathbf{Q}}
\newcommand{\bXi}{\bm{\Xi}}
\newcommand{\bTheta}{\bm{\Theta}}
\newcommand{\bphi}{\bm{\varphi}}
\newcommand{\netset}{\mathcal{N}}               
\newcommand{\plane}{\mathcal{P}}
\newcommand{\lagr}{\mathcal{L}}
\newcommand{\fieldE}{\bm{\mathcal{E}}}
\newcommand{\blkdiag}{\operatorname{blkdiag}}

\newcommand{\ind}[1]{\mathbf{1}\{#1\}}
\newcommand{\Iset}{\mathcal{I}}                 
\newif\ifwithappendix
\withappendixtrue      
\newcommand{\apx}[2]{Appendix~\ifwithappendix\ref{#1}\else#2 of the supplementary material\fi}
\newtheorem{theorem}{Theorem}
\newtheorem{proposition}{Proposition}
\newtheorem{lemma}{Lemma}
\newtheorem{corollary}{Corollary}
\newtheorem{remark}{Remark}
\begin{document}
\title{Stacked Fluid Metasurfaces: Mutual-Coupling-Aware Modeling and Optimization}
\author{Giovanni Iacovelli$^{\orcidlink{0000-0002-3551-4584}}$,~\IEEEmembership{Member,~IEEE}, Chandan~Kumar~Sheemar$^{\orcidlink{0000-0003-1676-5983}}$,~\IEEEmembership{Member,~IEEE}, and \\Symeon Chatzinotas$^{\orcidlink{0000-0001-5122-0001}}$,~\IEEEmembership{Fellow,~IEEE}%
\thanks{The authors are with the Signal Processing and Communications (SIGCOM) Research Group at Interdisciplinary Centre for Security, Reliability and Trust (SnT), University of Luxembourg, 1855 Luxembourg City, Luxembourg.}}

\maketitle

\begin{abstract}
Stacked intelligent metasurfaces process the transmitted field layer by layer, and fluid antennas make the position of every radiator a design variable. Combined, they pack radiators at sub-wavelength spacings within and across layers, where mutual coupling governs the physics that current models omit. This paper develops a coupling-consistent model of a fluid transmit layer illuminating a stack of passive fluid beyond-diagonal layers. The multiport impedance matrix of any port constellation is shown to be the impedance kernel sampled at the three-dimensional separations, in one closed spherical-Hankel form whose in-plane restriction is intra-layer coupling and whose axial restriction replaces the scalar Rayleigh--Sommerfeld propagator. Only the transmitter is driven: the layer currents are induced, and the familiar cascade is the single-pass limit of one matrix inversion. In the wavenumber domain the light circle still separates radiation from reaction, now with a plane-wave propagator attached, every Fourier mode sees a transmission line loaded by the layers, and an efficiency identity shows that what the stack costs in efficiency is set by its total induced-current norm, not by its layer count. Because the load is never assumed layer-diagonal, interconnections may join atoms on different sheets, and vertical networks repeated at every site are spectrally local. Alternating optimization designs precoders, positions, and loads with closed-form gradients. Numerically, fluid ports on a $\lambda/8$ grid perform within a few per cent of continuously movable ones, while a design based on the conventional cascade model can do worse than open-circuiting the stack.
\end{abstract}

\begin{IEEEkeywords}
Stacked intelligent metasurfaces, fluid antenna systems, movable antennas, beyond-diagonal RIS, mutual coupling, impedance kernel, wavenumber domain, multi-user spectral efficiency.
\end{IEEEkeywords}

\IEEEpeerreviewmaketitle

\section{Introduction}

\IEEEPARstart{T}{wo} recent ideas relocate signal processing from the baseband into the field. Stacked intelligent metasurfaces (SIMs) cascade programmable transmissive layers in front of a feed array, so that the field is shaped by propagation and re-modulation before it leaves the transmitter \cite{An2023SIM,An2023SIMmu,Sheemar2026SIMsurvey}. Movable and fluid antenna systems (MA/FAS) turn the position of a radiator into a design variable, continuously or over a dense port grid \cite{Zhu2024MA,Wong2021FAS,New2025FAStutorial}. The two combine naturally: a transmit layer of flexibly positioned ports followed by passive layers whose meta-atoms are themselves fluid and whose response is realized by a reconfigurable interconnection network, a beyond-diagonal RIS (BD-RIS) \cite{Shen2022BDRIS,Li2022BDRIS,Nerini2024graph,Li2025BDRIStutorial}. In front of a transmitter, such a layer refracts and reflects at once, the regime of STAR-RIS \cite{Xu2021STAR,Liu2021STAR,Sheemar2026STARBD}. Fluid-element SIMs \cite{Wei2026FSIM}, fluid STAR-RIS \cite{Liu2025FSTARNOMA,MESTARS2024}, liquid metasurfaces fed by fluid antennas \cite{Shen2025LIM}, and SIM-assisted FAS \cite{Papazafeiropoulos2026SIMFAS} have all appeared within the last year.

These architectures are compact in three dimensions: fluid ports sit well below half a wavelength apart, SIM layers are stacked at fractions of a wavelength, and a compact stack is the preferable one \cite{Wei2026FSIM}. There, currents on distinct radiators interact through the near field, within and across layers and back onto the feed, where the design models are least valid. The standard SIM model cascades diagonal phase masks with a scalar Rayleigh--Sommerfeld propagator \cite{An2023SIM,An2023SIMmu}. It has no intra-layer coupling, no inter-layer reflection, no feedback onto the transmitter, no polarization, and a norm-type power constraint whose physical meaning collapses once several radiators are active \cite{Iacovelli2026MC}. The fluid-element SIM \cite{Wei2026FSIM}, the fluid STAR-RIS \cite{Liu2025FSTARNOMA,MESTARS2024}, and the movable-element and fluid RIS \cite{Zhao2025MERIS,Ye2025fRIS,Zhu2025FRISsec,Zhu2026FRISIM} inherit the same abstraction, the last flagging mutual coupling as an open challenge.

Closest in intent, Xu et al. \cite{Xu2026MABDRIS} ask whether the elements of a BD-RIS should be moved, interconnected, or both, with a far-field, coupling-free model in which the two freedoms act on channel phases and never on each other. Once ports are packed below half a wavelength, moving an element changes the impedance matrix that the network loads, so position and connectivity reshape one operator, and a stack adds an axial dimension in which both act.

\subsection{Two modeling cultures, and a third dimension}

The literature that takes coupling seriously in stacked and reconfigurable surfaces is circuit-first. Nerini and Clerckx derived a physically consistent scattering model of SIM, including a BD-RIS implementation, and the assumptions that recover the cascade \cite{Nerini2024SIMBD}. Yahya \emph{et al.} recast it in transfer-scattering form, showing that coupling and inter-layer feedback can raise the sum rate \cite{Yahya2025Tparam}, and Abrardo \emph{et al.} modeled and optimized the stack as an electromagnetic collaborative object \cite{Abrardo2025SIM,Pettanice2026SIMvalid,Abrardo2026nonlinear}; the survey \cite{Sheemar2026SIMsurvey} lists tractable, physically consistent SIM models among the open challenges. Single-layer BD-RIS has coupling-aware models in scattering and impedance form \cite{Li2023BDRISMC,Nerini2024closedform,Qian2020MC}, and so does the amplifying STAR BD-RIS \cite{Sheemar2026STARBD}. In all of these the geometry is fixed and the coupling matrix comes from a full-wave solver or a per-element formula: the models are consistent, but they cannot be differentiated in the positions. Movable arrays have only just admitted coupling, showing with circuit models of elementary radiators that it can lift the half-wavelength spacing rule and raise directivity and capacity \cite{Zhu2026MAMC,Xu2026MAdir,Liao2026MAMC}.

The field-first culture of holographic MIMO and continuous apertures models coupling as an operator, the impedance kernel, with closed-form resistive and reactive parts and an asymptotically diagonal wavenumber representation \cite{Pizzo2025MC,Wang2025CAPAMC}. The two cultures were unified in \cite{Iacovelli2026MC} for a single flexible-position surface: the impedance matrix is the kernel sampled at the port separations, the light circle splits the kernel exactly into radiation and reaction, and a moving port is a constant-modulus wavenumber codeword. The low-rank structure that keeps large beyond-diagonal surfaces affordable followed in \cite{Iacovelli2026XLBD}. All of this lives in two dimensions, and a stack lives in three. The dyadic kernel of \cite{Iacovelli2026MC} already depends on the three-dimensional separation, but what it says across layers has not been used.

As Table~\ref{tab:sota} shows, physically consistent stack models have fixed ports and numerical impedances, flexible-position surfaces are coupling-free, coupling-aware movable arrays are single arrays of elementary radiators, and closed-form wavenumber models are single-layer. A coupling-consistent, closed-form, three-dimensional model of a fluid transmitter and its stack of passive fluid beyond-diagonal layers, differentiable in every position and load, has not been developed~yet.

\begin{table}[t]
\centering
\caption{State of the art on mutual coupling (MC) in stacked and flexible-position architectures.}
\label{tab:sota}
\scriptsize
\setlength{\tabcolsep}{3pt}
\begin{tabular}{@{}p{0.20\columnwidth}p{0.26\columnwidth}p{0.28\columnwidth}p{0.16\columnwidth}@{}}
\toprule
Reference & Architecture & MC model & Domain \\
\midrule
\cite{An2023SIM,An2023SIMmu,Wei2026FSIM} & SIM, fluid-element SIM & none, RS cascade & spatial \\
\cite{Nerini2024SIMBD,Yahya2025Tparam} & SIM (D/BD-RIS) & $S$/$T$-param., num. & circuit \\
\cite{Abrardo2025SIM,Pettanice2026SIMvalid,Abrardo2026nonlinear} & SIM (ECO) & $Z$/$S$-par., FW & circuit \\
\cite{Papazafeiropoulos2026SIMFAS} & SIM Tx + FAS Rx & none & statistical \\
\cite{Shen2025LIM,Zhao2025MERIS,Ye2025fRIS,Zhu2025FRISsec,Zhu2026FRISIM,Liu2025FSTARNOMA,MESTARS2024} & liquid, ME, fluid (STAR-)RIS & none & spatial \\
\cite{Xu2026MABDRIS} & movable BD-RIS & none & spatial \\
\cite{Li2023BDRISMC,Nerini2024closedform,Qian2020MC,Sheemar2026STARBD} & (STAR) BD-RIS, RIS & $S$/$Z$-param. & circuit \\
\cite{Pizzo2025MC,Wang2025CAPAMC} & HMIMO / CAPA & kernel, closed form & wavenumber \\
\cite{Zhu2026MAMC,Xu2026MAdir,Liao2026MAMC} & MA array & $Z$-param., elementary radiators & circuit \\
\cite{Iacovelli2026MC} & FAS/MA on surface & full kernel $\kerZ$ & wavenumber \\
This work & fluid Tx + fluid BD stack & 3-D kernel $\kerZ$, cl.\ form & wavenumber \\
\bottomrule
\end{tabular}
\end{table}

\subsection{Contributions}

Our contributions are the following.
\begin{itemize}
\item \textbf{Three-dimensional circuit-field equivalence.} The impedance matrix of any constellation of ports on parallel planes is the kernel sampled at the three-dimensional separations, in the closed spherical-Hankel form of \cite{Iacovelli2026MC} (Theorem~\ref{thm:cfe3d}): intra-layer coupling, inter-layer transfer, and feedback onto the transmitter are one function. The Rayleigh--Sommerfeld coefficient of the SIM literature is its scalar, unilateral counterpart with a flat spectral weight (Remark~\ref{rem:RS}).
\item \textbf{Induced currents, and the cascade as a limit.} Terminating the stack on one reconfigurable network gives the induced currents, the multi-user channel, and the stack-loaded input impedance in one inversion (Section~\ref{sec:loaded}). The SIM cascade is its single-pass limit under two explicit assumptions, whose failure in compact stacks is structural (Proposition~\ref{prop:cascade}, Remark~\ref{rem:stiffness}), and the transmitter pays for the dissipation of every sheet (Lemma~\ref{lem:passivity}).
\item \textbf{Interconnection across sheets.} The network may join atoms on different sheets (Remark~\ref{rem:interlayer}). Vertical networks repeated at every site are spectrally local and terminate each modal transmission line with a lossless reciprocal $L$-port (Corollary~\ref{cor:vertical}).
\item \textbf{Wavenumber dichotomy with propagator.} Across a separation $h$ the kernel spectrum acquires $e^{\im k_zh}$ (Theorem~\ref{thm:dichotomy3d}); the reaction splits in closed form into a Fresnel and an evanescent part, the latter dominating below $h\approx0.17\lambda$ (Corollary~\ref{cor:axial}); Fourier modes diagonalize every inter-plane block (Lemma~\ref{lem:diag3d}), each mode sees a loaded transmission line (Corollary~\ref{cor:tline}), and a fluid port on any plane is a constant-modulus codeword (Proposition~\ref{prop:codeword3d}).
\item \textbf{Coupling-aware joint design.} The weighted sum spectral efficiency is maximized over precoders, positions on every plane, and loads under the stack-loaded power, voltage, and induced-current constraints, with closed-form gradients through one inversion and a holographic bound as certificate (Section~\ref{sec:design}), and evaluated with every coupling entry computed for the actual element profile, which shows, among other things, that fluid ports on a $\lambda/8$ grid perform within a few per cent of continuously movable ones (Section~\ref{sec:numerics}).
\end{itemize}

\subsection{Notation}
Boldface lowercase (uppercase) letters denote vectors (matrices), calligraphic letters sets, surfaces, and kernels, and $(\cdot)^{\mathsf T}$, $(\cdot)^{\mathsf H}$, $(\cdot)^*$ transpose, conjugate transpose, and conjugate. $\Real\{\cdot\}$ and $\Imag\{\cdot\}$ act entrywise, $\im$ is the imaginary unit, and the convention $e^{-\im2\pi ft}$ makes outgoing waves carry $e^{+\im\kappa R}$, $\kappa=2\pi/\lambda$. Points are $\br=[\brb^{\mathsf T},z]^{\mathsf T}$ with planar part $\brb\in\mathbb R^2$, planar wavevectors are $\bk$ with $k_z=\sqrt{\kappa^2-\|\bk\|^2}$, $\Imag\{k_z\}\ge0$. $j_\ell,y_\ell,h_\ell=j_\ell+\im y_\ell$ are the spherical Bessel and Hankel functions, $\delta_{ii'}$ the Kronecker delta, $\bI_d$ the identity, $\be_n$ the $n$-th canonical vector. Indices: $l,l'\in\{0,\ldots,L\}$ label planes, plane $0$ being the transmitter; $n,m$ ports within a plane; $k,k'$ users; $i,i'$ Fourier modes. The block subscripts $0$ and $\mathrm s$ denote the transmit ports and the union of all stack ports, respectively; $N_{\rm s}=\sum_{l\ge1}N_l$ is the number of stack ports and $\bW=[\bw_1,\ldots,\bw_K]$ collects the precoders.

\section{Multi-Plane Electromagnetic Model}\label{sec:model}

\subsection{Sheets, kernel, and complex power}

Consider a quasi-static narrowband system with $L+1$ parallel planes $\plane_l=\{z=z_l\}$, $l=0,\ldots,L$, with $0=z_0<z_1<\cdots<z_L$. Plane $0$ hosts the transmit ports. Planes $1,\ldots,L$ host the sheets of the stack. The $K$ single-antenna users lie beyond the last sheet, $z>z_L$, so that every sheet is interposed between the transmitter and the users. Each plane carries a rectangular aperture $\surf_l\subset\plane_l$, modeled as a thin conducting sheet of surface resistance $Z_s\ge0$, and a tangential monochromatic current density $\bj_l(\bsb)\in\mathbb C^3$. The total current is $\bj(\br)=\sum_l\bj_l(\brb)\,\delta(z-z_l)$ and it radiates, in free space,
\begin{equation}
\fieldE(\br)=\im\kappa Z_0\sum_{l=0}^{L}\int_{\surf_l}\bG(\br,\mathbf s)\bj_l(\bsb)\,d\bsb,\quad \mathbf s=[\bsb^{\mathsf T},z_l]^{\mathsf T},
\end{equation}
with $Z_0$ the impedance of vacuum and $\bG$ the dyadic Green function, depending on its arguments only through $\bm\tau=\br-\mathbf s$, with closed forms \cite[Lemma~1]{Iacovelli2026MC}
\begin{align}
\Imag\{\bG(\bm\tau)\}&=\tfrac{\kappa}{4\pi}\big[(j_0-\tfrac{j_1}{x})\bI_3+j_2\hat{\bm\tau}\hat{\bm\tau}^{\mathsf T}\big],\label{eq:ImG}\\
\Real\{\bG(\bm\tau)\}&=-\tfrac{\kappa}{4\pi}\big[(y_0-\tfrac{y_1}{x})\bI_3+y_2\hat{\bm\tau}\hat{\bm\tau}^{\mathsf T}\big],\label{eq:ReG}
\end{align}
$x=\kappa\|\bm\tau\|$, $\hat{\bm\tau}=\bm\tau/\|\bm\tau\|$. Nothing in (\ref{eq:ImG})--(\ref{eq:ReG}) is planar: $\bm\tau$ is a three-dimensional separation, and this is the observation on which the paper rests.

The complex power that must be delivered to sustain $\bj$ is, exactly as in \cite[Sec.~II-B]{Iacovelli2026MC}, the quadratic form
\begin{equation}
P_{\rm cx}=\tfrac12\langle\bj,\kerZ\bj\rangle,\label{eq:Pcx}
\end{equation}
of the impedance kernel
\begin{equation}
\kerZ(\br,\br')=Z_s\delta(\brb-\brb')\ind{z=z'}\bI_3-\im\kappa Z_0\bG(\br,\br'),\label{eq:kernel}
\end{equation}
which splits as $\kerZ=\kerR+\im\kerX$ with $\kerR=Z_s\delta\,\ind{z=z'}\bI_3+\kappa Z_0\Imag\{\bG\}$ and $\kerX=-\kappa Z_0\Real\{\bG\}$, both real symmetric. At short range $\kerR$ stays bounded, since $j_0\to1$ and $j_1/x\to1/3$, whereas $y_2\sim-3x^{-3}$ makes $\kerX$ grow as $R^{-3}$: the singularity is reactive alone. The power balance of \cite[Prop.~1]{Iacovelli2026MC} holds for the union of sheets, its proof integrating Poynting's theorem over a ball containing all of them:
\begin{equation}
\Real\{P_{\rm cx}\}=\sum_{l=0}^{L}P_{\Omega,l}+P_{\rm rad},\qquad \Imag\{P_{\rm cx}\}=2\omega(W_e-W_m),\label{eq:balance}
\end{equation}
with $P_{\Omega,l}$ the Joule dissipation on sheet $l$, $P_{\rm rad}$ the radiated power, and $W_e,W_m$ the stored energies. The ohmic delta acts on \emph{every} sheet, so an induced current dissipates in its own conductor exactly as an impressed one does, whereas the radiative and reactive kernels connect all sheets to each other.

\subsection{Ports on planes and the 3-D circuit-field equivalence}

Plane $l$ carries $N_l$ identical, translated elements. Element $n$ is a port of profile $\xi(\bsb-\bu_{l,n})\bp$, built on the common real, even profile $\xi$ of support diameter $\Delta\ll\lambda$, $\int\xi=\ell_{\rm e}$, and the common tangential polarization $\bp\perp\hat{\mathbf z}$. The port has the three-dimensional position $\br_{l,n}=[\bu_{l,n}^{\mathsf T},z_l]^{\mathsf T}$, and the collection of all planar positions is $\Utt=\{\bu_{l,n}\}$. Any pair of ports is separated by a planar offset and an axial offset, and both are needed throughout: we write
\begin{equation}
\btau\triangleq\bu_{l,n}-\bu_{l',m},\quad h\triangleq z_l-z_{l'},\quad R\triangleq\sqrt{\|\btau\|^2+h^2},\label{eq:seps}
\end{equation}
so that $h=0$ inside a plane and $R$ is the distance the kernel sees. With port currents $\biota_l\in\mathbb C^{N_l}$ (impressed by the generators on plane $0$, induced through the loads on planes $l\ge1$) the sheet current is $\bj_l=\sum_n\iota_{l,n}\xi(\cdot-\bu_{l,n})\bp$, and we stack $\biota=[\biota_0^{\mathsf T},\ldots,\biota_L^{\mathsf T}]^{\mathsf T}\in\mathbb C^{N}$, $N=\sum_l N_l$.

\begin{theorem}[Three-dimensional circuit-field equivalence]\label{thm:cfe3d}
\mbox{}

(i) Substituting the port currents into (\ref{eq:Pcx}) gives its port equivalent $P_{\rm cx}=\tfrac12\biota^{\mathsf H}\bZ(\Utt)\biota$ with the complex symmetric multiport impedance matrix $\bZ(\Utt)\in\mathbb C^{N\times N}$, whose entries are
\begin{equation}
[\bZ(\Utt)]_{(l,n),(l',m)}=\big(\xi\star\zeta_p(\cdot\,;h)\star\xi\big)(\btau),\label{eq:Zports}
\end{equation}
with $\star$ the planar correlation and the kernel restricted to the polarization
\begin{equation}
\zeta_p(\btau;h)\triangleq Z_s\delta(\btau)\ind{h=0}-\im\kappa Z_0\bp^{\mathsf T}\bG([\btau^{\mathsf T},h]^{\mathsf T})\bp.\label{eq:zetap}
\end{equation} It is the unique matrix matching the complex power of every excitation, and $\Real\{\bZ\}$, $\Imag\{\bZ\}$ are the port-sampled resistive and reactive kernels $\kerR$ and $\kerX$.

(ii) As $\kappa\Delta\to0$, every mutual entry, on the same plane or across planes, converges to the kernel sampled at the three-dimensional separation,
\begin{equation}
[\bZ]_{(l,n),(l',m)}\to R_r\,\vartheta(\kappa R,c),\qquad c\triangleq\bpb^{\mathsf T}\btau/R,\label{eq:theta3d}
\end{equation}
$R_r=Z_0\kappa^2\ell_{\rm e}^2/(6\pi)$, and
\begin{equation}
\vartheta(x,c)=\rho+\im\chi=\tfrac32\Big[h_0(x)-\tfrac{h_1(x)}{x}+h_2(x)\,c^2\Big],\label{eq:theta}
\end{equation}
the function of \cite[Eq.~(17)]{Iacovelli2026MC}, with $\rho$ ($j_\ell$ family) the normalized resistive kernel and $\chi$ ($y_\ell$ family) the normalized reactive kernel. The self-terms are as in \cite[Thm.~1(iii)-(iv)]{Iacovelli2026MC} and are the same on every plane: the radiation self-resistance tends to $R_r$, the ohmic self-resistance is $R_\Omega=\varepsilon R_r$, and the self-reactance is $X_A=\zeta_AR_r$.

(iii) $\bZ$ is symmetric, $[\bZ]_{(l,n),(l',m)}=[\bZ]_{(l',m),(l,n)}$, so every inter-plane block satisfies $\bZ_{ll'}=\bZ_{l'l}^{\mathsf T}$: the transfer from plane $l'$ to plane $l$ and the backward transfer from $l$ to $l'$ are the same numbers, the reflection of a sheet being its diagonal block $\bZ_{ll}$ together with its load.
\end{theorem}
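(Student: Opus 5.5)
The plan is to substitute the port expansion $\bj_l=\sum_n\iota_{l,n}\xi(\cdot-\bu_{l,n})\bp$ into the quadratic form (\ref{eq:Pcx}) and reduce everything to the single-plane argument of \cite[Thm.~1]{Iacovelli2026MC}, with the only new ingredient being the axial offset $h$.

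\textbf{Part (i).} Expanding bilinearly gives $P_{\rm cx}=\tfrac12\sum_{(l,n),(l',m)}\iota_{l,n}^*\iota_{l',m}\,Z_{(l,n),(l',m)}$, with
\[
Z_{(l,n),(l',m)}=\int\!\!\int\xi(\bsb-\bu_{l,n})\,\bp^{\mathsf T}\kerZ(\mathbf s,\mathbf s')\bp\,\xi(\bsb'-\bu_{l',m})\,d\bsb\,d\bsb' .
\]
Since $\kerZ$ depends only on $\mathbf s-\mathbf s'=[(\bsb-\bsb')^{\mathsf T},h]^{\mathsf T}$, I change variables $\bsb=\bu_{l,n}+\mathbf a$, $\bsb'=\bu_{l',m}+\mathbf b$. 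Using that $\xi$ is real and even, the double integral becomes the planar correlation $\xi\star\zeta_p(\cdot\,;h)\star\xi$ evaluated at $\btau$. The ohmic term survives only for $h=0$, through $\ind{z=z'}$, which gives (\ref{eq:zetap}).

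The remaining claims of (i) follow directly:
\begin{itemize}
\item \emph{Symmetry.} $\bZ$ is complex symmetric because $\bG(\bm\tau)$ is even in $\bm\tau$: (\ref{eq:ImG})--(\ref{eq:ReG}) involve only $\|\bm\tau\|$ and $\hat{\bm\tau}\hat{\bm\tau}^{\mathsf T}$, and the symmetrically placed $\xi$'s swap under $\btau\to-\btau$, $h\to-h$.
\item \emph{Uniqueness.} Among complex symmetric matrices, the quadratic form $\biota^{\mathsf H}\bZ\biota$ for all $\biota$ determines $\bZ$ by polarization. I would state it, as in the prior paper, as uniqueness of the symmetric $\bZ$ whose real and imaginary parts are the real symmetric matrices giving $\Real\{P_{\rm cx}\}$ and $\Imag\{P_{\rm cx}\}$.
\item \emph{Resistive and reactive parts.} Since $\xi$ and $\bp$ are real, taking real and imaginary parts commutes with the integrals, so $\Real\{\bZ\}$ and $\Imag\{\bZ\}$ sample $\kerR$ and $\kerX$.
\end{itemize}

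\textbf{Part (ii).} For a mutual entry, $R>0$. Either $h\neq0$, or $h=0$ and $\|\btau\|>\Delta$ for distinct ports. The kernel is then smooth on the support of $\xi\star\xi$, which has diameter $O(\Delta)$. A Taylor expansion gives $Z\to\ell_{\rm e}^2\,\zeta_p(\btau;h)$ as $\kappa\Delta\to0$. Now insert (\ref{eq:ImG})--(\ref{eq:ReG}):
\[
-\im\kappa Z_0\bp^{\mathsf T}\bG\bp=\kappa Z_0\big(\Imag\bG-\im\Real\bG\big)_{pp}.
\]
Because $\bp\perp\hat{\mathbf z}$, we have $\bp^{\mathsf T}\hat{\bm\tau}=\bpb^{\mathsf T}\btau/R=c$, so the axial component drops out of the angular factor. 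Collecting the $j_\ell$ and $y_\ell$ terms gives $\tfrac{\kappa^2Z_0}{4\pi}[h_0-h_1/x+h_2c^2]$, and multiplying by $\ell_{\rm e}^2$ and factoring out $R_r$ produces the $\tfrac32$ in (\ref{eq:theta}). The self-terms involve $h=0$ and coincident supports, so they are literally the single-plane computation. I would invoke \cite[Thm.~1(iii)-(iv)]{Iacovelli2026MC} verbatim, noting that $\xi$, $Z_s$ and $\bp$ are common to all planes.

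\textbf{Part (iii).} This is the symmetry from (i) read blockwise.

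The main obstacle is the uniform limit in (ii) for same-plane neighbours at $\|\btau\|$ comparable to $\Delta$. There the $R^{-3}$ reactive singularity makes the Taylor step non-uniform. I would handle this by assuming non-overlapping supports and bounding the remainder by $\Delta^2\sup|\nabla^2\zeta_p|\sim\Delta^2R^{-5}$ relative to $R^{-3}$, which vanishes as long as $\Delta/R\to0$. I would also state this condition explicitly, since for $h\ge\Delta$ across planes it holds automatically.
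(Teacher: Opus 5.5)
Your proposal is correct and follows essentially the same route as the paper's proof. For (i) it uses the shift $\bsb=\bu_{l,n}+\mathbf a$, $\bsb'=\bu_{l',m}+\mathbf b$ with a real profile; for (ii) it uses a Taylor expansion about $\btau$ (remainder controlled by $\Delta\ll R$ and $\kappa\Delta\ll1$), recombines the $j_\ell$ and $y_\ell$ families into $h_\ell$ via $\bp^{\mathsf T}\hat{\bm\tau}=\bpb^{\mathsf T}\btau/R$, and takes the self-terms from the single-plane theorem; for (iii) it relies on the symmetry of the free-space dyadic, where your evenness of $\bG$ is equivalent to the paper's reciprocity $\bG(\bm\tau)=\bG(-\bm\tau)^{\mathsf T}$. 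One small tightening: across planes $h\ge\Delta$ only makes the integrand smooth (any $h>0$ does), not the remainder small. That needs $\Delta/h\to0$, which is exactly what the paper flags for facing ports of adjacent sheets.
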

\begin{proof}
See \apx{app:cfe3d}{A}.
\end{proof}

Consequently,
\begin{align}
\bZ(\Utt)&=R_r\Big[(1+\varepsilon+\im\zeta_A)\bI_N+\bTheta(\Utt)\Big],\label{eq:Znorm}\\
[\bTheta]_{(l,n),(l',m)}&=\begin{cases}\vartheta(\kappa R,c), & (l,n)\ne(l',m),\\ 0, & (l,n)=(l',m),\end{cases}\nonumber
\end{align}
For a lone driven sheet, $L=0$, the real part of (\ref{eq:Znorm}) is $R_r\bC_\varepsilon$, with the single-plane matrix
\begin{equation}
\bC_\varepsilon\triangleq\varepsilon\bI_{N_0}+\bC,\qquad [\bC]_{nm}=\rho(\kappa R,c),\quad [\bC]_{nn}=1,\label{eq:Ceps}
\end{equation}
which prices radiation through $\bC$ and conduction loss through $\varepsilon\bI_{N_0}$ \cite[Eq.~(19)]{Iacovelli2026MC} and enters the power constraint of every coupling-aware single-layer design. What replaces it in front of a stack is derived next. The limit (\ref{eq:theta3d}) requires $\Delta\ll R$, which facing ports of adjacent sheets, only $h$ apart, may violate: for a Gaussian profile of standard deviation $\sigma$ the facing reactance of (\ref{eq:theta3d}) exceeds that of (\ref{eq:Zports}) by factors of $12$, $5$, and $1.2$ at $h=1.25\sigma$, $2\sigma$, and $5\sigma$. Compact stacks are therefore evaluated with the profile, $\vartheta$ being replaced in (\ref{eq:Znorm}) by the smoothed kernel of (\ref{eq:Zports}) and the unit radiation self-resistance by $\rho_\sigma(0)\le1$, the element spectrum $\hat\xi^2$ tapering the visible disc: the point limit is used where it holds, $\Delta\ll R$.

The diagonal blocks $\bTheta_{ll}$ are the in-plane coupling of \cite{Iacovelli2026MC} and the off-diagonal blocks $\bTheta_{ll'}$ the inter-plane transfer, with no new function for either. For facing ports, $\btau=\mathbf0$, one has $c=0$, so the axial transfer is the polarization-independent H-plane function $\tfrac32[h_0-h_1/x]$ at $x=\kappa h$, and for every other pair the axial offset keeps $|c|<1$, so the E-plane extreme of the in-plane kernel is never reached across planes. Moreover $R\ge h>0$, so no inter-plane entry is singular and the $R^{-3}$ growth of $\kerX$ is confined to the diagonal blocks.

\begin{remark}[The Rayleigh--Sommerfeld coefficient]\label{rem:RS}
The SIM literature models the transfer from atom $m$ of layer $l-1$ to atom $n$ of layer $l$ by the scalar first Rayleigh--Sommerfeld coefficient \cite{An2023SIM,Wei2026FSIM,Sheemar2026SIMsurvey} $w_{nm}=-2A_{\rm a}\partial_z e^{\im\kappa R}/(4\pi R)=A_{\rm a}(h/R^2)(1/(2\pi R)-\im/\lambda)e^{\im\kappa R}$, with $A_{\rm a}$ the atom area, $\theta$ the angle of the ray from $\hat{\mathbf z}$ and $\varphi$ its azimuth from $\bpb$, so that $h/R=\cos\theta$ and $c=\sin\theta\cos\phi$. In the far zone, $\kappa R\gg1$, (\ref{eq:theta}) gives $\vartheta\approx\tfrac32(1-c^2)e^{\im\kappa R}/(\im\kappa R)$ against $w_{nm}\approx-\im A_{\rm a}\cos\theta\,e^{\im\kappa R}/(\lambda R)$, the dipole pattern $1-c^2$ in place of the obliquity $\cos\theta$: the two agree at broadside, but the kernel falls as $\cos^2\theta$ in the E-plane ($\varphi=0$) and not at all in the H-plane ($\varphi=\pi/2$), so the polarization-blind coefficient overestimates the E-plane transfer by $1/\cos\theta$ and underestimates the H-plane one by $\cos\theta$. In the near zone, where compact stacks operate, (\ref{eq:theta}) is led by the static dipole term $\propto(3c^2-1)/x^3$, which changes sign on the cone $c^2=1/3$, at the lateral offset $h/\sqrt2$ in the E-plane, whereas $w_{nm}\approx A_{\rm a}h/(2\pi R^3)$ keeps its sign; and $w_{nm}$ acts from layer $l-1$ to $l$ alone, whereas (\ref{eq:Zports}) connects every pair of planes. The models agree up to a constant only near broadside and without intra-layer coupling or inter-layer reflection \cite{Nerini2024SIMBD}, not the regime of compact stacks \cite{Wei2026FSIM}.
\end{remark}

\subsection{Loads, interconnection patterns, and induced currents}\label{sec:loaded}

Only plane $0$ is driven. Its currents $\biota_0\in\mathbb C^{N_0}$ are impressed by the generators, one RF chain per port, and are the design variable. Every port of every stack sheet is instead terminated on a single reconfigurable network, of impedance matrix
\begin{equation}
\bM=\bR_M+\im\bX\in\mathbb C^{N_{\rm s}\times N_{\rm s}},\qquad \bX\in\netset,\quad \bR_M\succeq0,\label{eq:load}
\end{equation}
with $\bX$ and $\bR_M$ real symmetric by reciprocity. The default $\bR_M=\mathbf 0$ is the lossless network of the BD-RIS literature \cite{Shen2022BDRIS,Nerini2024graph}; a nonzero $\bR_M$ models the RF loss of the network itself, set by its technology, distinct from the element loss $\varepsilon$ and the binding constraint on long connections \cite{Nerini2025lossy,Peng2026lossy}.

The interconnection pattern $\netset\subseteq\mathbb S^{N_{\rm s}}$ is a linear subspace of real symmetric matrices, the support of a reactance graph on \emph{all} stack ports, deliberately not required to be block-diagonal by layer. With $\bX=[\bX_{ll'}]$ in per-plane blocks, the families of interest are
\begin{itemize}
\item \emph{layer-diagonal} patterns, $\bX_{ll'}=\mathbf0$ for $l\ne l'$, inside which each layer carries its own conventional architecture: diagonal for a D-RIS layer, block-diagonal for group-connected, tree or arrowhead for tree-connected, dense for fully connected \cite{Nerini2024graph,Li2025BDRIStutorial}. This is the family the stacked-metasurface literature assumes.
\item \emph{vertically connected} patterns, in which every off-diagonal block $\bX_{ll'}$ is diagonal, so that connections join the atoms of one in-plane site across sheets and each site owns a lossless $L$-port down the depth of the stack. That $L$-port may join all $L(L-1)/2$ pairs of sheets at the site, or only adjacent ones, the $L-1$ blocks $\bX_{l,l+1}$ of a nearest-neighbour ladder.
\item \emph{stack-wide} patterns, up to the full $\mathbb S^{N_{\rm s}}$, in which any two atoms of any two sheets may be joined.
\end{itemize}
The families are nested, so their optima are ordered at fixed geometry, and nothing below uses block-diagonality: every result needs only $\bM$ complex symmetric with $\Real\{\bM\}\succeq0$ and $\bX$ in a linear subspace.

\begin{remark}[Inter-layer connections are the shorter ones]\label{rem:interlayer}
What limits a reactance network is the electrical length of its connections, which must stay short for the lumped model to hold \cite{Nerini2024graph,Li2025BDRIStutorial}. Inter-layer connections span only the inter-sheet distance, $h\in[\lambda/16,\lambda/2]$ in a compact stack, shorter than most connections within one panel. The canonical transmissive unit cell, two facing radiators joined by a tunable two-port \cite{Li2022BDRIS,Nerini2024SIMBD,Sheemar2026SIMsurvey}, is itself one. Layer-diagonal modeling is therefore an over-restrictive assumption.
\end{remark}

Two distinct relations hold. The first is the \emph{multiport relation} $\bv=\bZ(\Utt)\biota$, which in block form reads
\begin{equation}
\begin{bmatrix}\bv_0\\ \bv_{\rm s}\end{bmatrix}
=\begin{bmatrix}\bZ_{00}&\bZ_{0\rm s}\\ \bZ_{\rm s0}&\bZ_{\rm ss}\end{bmatrix}
\begin{bmatrix}\biota_0\\ \biota_{\rm s}\end{bmatrix},\label{eq:multiport}
\end{equation}
where $\bZ_{\rm ss}$ contains the intra-plane blocks $\bZ_{ll}$ and the inter-plane blocks $\bZ_{ll'}$, $l,l'\ge1$. The second is the \emph{terminal law} of a loaded multiport \cite{Li2023BDRISMC,Nerini2024closedform}: the stack ports see no generator, only the network, so that, with $\biota_{\rm s}$ leaving the network,
\begin{equation}
\bv_{\rm s}=-\bM\,\biota_{\rm s}.\label{eq:terminal}
\end{equation}
Substituting (\ref{eq:terminal}) into the second row of (\ref{eq:multiport}) and solving,
\begin{equation}
(\bZ_{\rm ss}+\bM)\,\biota_{\rm s}=-\bZ_{\rm s0}\biota_0
\ \Longrightarrow\
\biota_{\rm s}=-\bT\bZ_{\rm s0}\biota_0\triangleq-\bB\,\biota_0,\label{eq:induced}
\end{equation}
with $\bT\triangleq(\bZ_{\rm ss}+\bM)^{-1}$, and the first row of (\ref{eq:multiport}) then gives $\bv_0=\bZ_{\rm in}\biota_0$ with the Schur complement (\ref{eq:Zin}).

\begin{lemma}[Well-posedness, input impedance, and where the power goes]\label{lem:passivity}
For $\varepsilon>0$ the matrix $\bZ_{\rm ss}+\bM$ is invertible for every $\bX$ and every geometry, since its Hermitian part is $\Real\{\bZ_{\rm ss}\}+\bR_M\succeq\varepsilon R_r\bI_{N_{\rm s}}$. The transmitter sees the stack-loaded input impedance
\begin{equation}
\bZ_{\rm in}(\Utt,\bM)=\bZ_{00}-\bZ_{0\rm s}\bT\bZ_{\rm s0},\label{eq:Zin}
\end{equation}
complex symmetric, whose real part prices the total average power drawn from the generators,
\begin{equation}
\tfrac12\biota_0^{\mathsf H}\Real\{\bZ_{\rm in}\}\biota_0=\underbrace{P_{\Omega,0}}_{\text{Tx sheet}}+\underbrace{\sum_{l=1}^{L}P_{\Omega,l}}_{\text{stack sheets}}+\underbrace{P_{M}}_{\text{network}}+P_{\rm rad},\label{eq:power3d}
\end{equation}
with $P_{\Omega,l}=\tfrac{\varepsilon R_r}{2}\|\biota_l\|^2$ and $P_M=\tfrac12\biota_{\rm s}^{\mathsf H}\bR_M\biota_{\rm s}\ge0$. All terms being non-negative and $P_{\Omega,0}$ being present for every $\biota_0$, $\bC_{\rm in}\triangleq\Real\{\bZ_{\rm in}\}/R_r\succeq\varepsilon\bI_{N_0}\succ0$.
\end{lemma}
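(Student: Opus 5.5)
The argument has three steps: invertibility from the Hermitian part of $\bZ_{\rm ss}+\bM$, symmetry of the Schur complement, and the power balance obtained by substituting the induced currents (\ref{eq:induced}) into the port form of Theorem~\ref{thm:cfe3d}(i) and then into (\ref{eq:balance}).

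\emph{Invertibility.} Theorem~\ref{thm:cfe3d} says $\bZ$ is complex symmetric. Hence its Hermitian part is $(\bZ+\bZ^{\mathsf H})/2=\Real\{\bZ\}$, and the same holds for $\bM$, so the Hermitian part of $\bZ_{\rm ss}+\bM$ is $\Real\{\bZ_{\rm ss}\}+\bR_M$. Since $\Real\{\bZ\}$ is the port sampling of $\kerR$, the quadratic form $\tfrac12\biota^{\mathsf H}\Real\{\bZ\}\biota$ equals $\Real\{P_{\rm cx}\}$, which by (\ref{eq:balance}) is $\sum_l P_{\Omega,l}+P_{\rm rad}$. The radiated power is non-negative. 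The ohmic part is exactly $\tfrac{\varepsilon R_r}{2}\|\biota\|^2$, because it comes from the $(1+\varepsilon)$ diagonal of (\ref{eq:Znorm}); equivalently, $R_\Omega=\varepsilon R_r$ is the per-port sampling of $Z_s\delta$, which is local in space and cannot couple distinct ports. I therefore split $\Real\{\bZ\}=\varepsilon R_r\bI_N+\Real\{\bZ\}_{\rm rad}$ with $\Real\{\bZ\}_{\rm rad}\succeq0$. Restricting to the principal submatrix gives $\Real\{\bZ_{\rm ss}\}\succeq\varepsilon R_r\bI_{N_{\rm s}}$, and adding $\bR_M\succeq0$ preserves this. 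For $\bA=\bZ_{\rm ss}+\bM$ and any $\bx\neq0$, $\Real\{\bx^{\mathsf H}\bA\bx\}=\bx^{\mathsf H}\Real\{\bA\}\bx>0$, so $\bA\bx\neq0$ and $\bA$ is invertible.

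\emph{Symmetry of $\bZ_{\rm in}$.} The Schur complement (\ref{eq:Zin}) follows from the first row of (\ref{eq:multiport}). Because $\bZ_{\rm ss}+\bM$ is symmetric, $\bT^{\mathsf T}=\bT$. Together with $\bZ_{0\rm s}^{\mathsf T}=\bZ_{\rm s0}$ this gives $\bZ_{\rm in}^{\mathsf T}=\bZ_{\rm in}$.

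\emph{Power balance.} For the full current vector, $\tfrac12\biota^{\mathsf H}\bZ\biota=\tfrac12\biota^{\mathsf H}\bv=\tfrac12(\biota_0^{\mathsf H}\bv_0+\biota_{\rm s}^{\mathsf H}\bv_{\rm s})$. Substituting $\bv_0=\bZ_{\rm in}\biota_0$ and the terminal law $\bv_{\rm s}=-\bM\biota_{\rm s}$ yields
\[
\tfrac12\biota_0^{\mathsf H}\bZ_{\rm in}\biota_0=P_{\rm cx}+\tfrac12\biota_{\rm s}^{\mathsf H}\bM\biota_{\rm s}.
\]
Taking real parts: since $\bZ_{\rm in}$ is complex symmetric, $\Real\{\biota_0^{\mathsf H}\bZ_{\rm in}\biota_0\}=\biota_0^{\mathsf H}\Real\{\bZ_{\rm in}\}\biota_0$, and likewise $\Real\{\biota_{\rm s}^{\mathsf H}\bM\biota_{\rm s}\}=\biota_{\rm s}^{\mathsf H}\bR_M\biota_{\rm s}$. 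Inserting (\ref{eq:balance}) then gives (\ref{eq:power3d}). Each term is non-negative, and $P_{\Omega,0}=\tfrac{\varepsilon R_r}{2}\|\biota_0\|^2$ is present for every $\biota_0$. Hence $\tfrac12\biota_0^{\mathsf H}\Real\{\bZ_{\rm in}\}\biota_0\ge\tfrac{\varepsilon R_r}{2}\|\biota_0\|^2$, which is $\bC_{\rm in}\succeq\varepsilon\bI_{N_0}$.

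The main obstacle is justifying that the ohmic contribution is exactly $\varepsilon R_r\bI$ and that the radiative remainder is positive semidefinite in the matrix sense, not merely in the kernel sense. My plan is to use that $\xi$ is real with disjoint supports, so the $Z_s\delta$ term of (\ref{eq:Zports}) vanishes off the diagonal. The radiative part is $\tfrac12\biota^{\mathsf H}(\cdot)\biota=P_{\rm rad}\ge0$ for every $\biota$, by Poynting's theorem on a ball containing all sheets. If ports overlap on a fine grid, the ohmic block is instead the Gram matrix of $\xi$ translates, which is still positive definite; I would treat that case by a remark.
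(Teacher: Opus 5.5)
Your proposal is correct and follows essentially the paper's own proof: the Hermitian part $\Real\{\bZ_{\rm ss}\}+\bR_M$ with $\Real\{\bZ_{\rm ss}\}=R_r(\varepsilon\bI_{N_{\rm s}}+\bC_{\rm ss})$, $\bC_{\rm ss}\succeq0$, then symmetry of the Schur complement of a symmetric matrix, then the port-level balance in which the load networks deliver $-P_M$ and (\ref{eq:balance}) accounts for the rest. The one caveat is your closing remark on overlapping profiles: there the ohmic Gram matrix has diagonal $\varepsilon R_r$, so its smallest eigenvalue falls below $\varepsilon R_r$ as soon as any off-diagonal entry is nonzero, and that case gives invertibility but neither $P_{\Omega,l}=\tfrac{\varepsilon R_r}{2}\|\biota_l\|^2$ nor the constant in $\bC_{\rm in}\succeq\varepsilon\bI_{N_0}$ — the lemma, like your main argument, relies on the diagonal ohmic term of (\ref{eq:Znorm}).
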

\begin{proof}
See \apx{app:passivity}{B}.
\end{proof}

Two readings of (\ref{eq:power3d}) matter. The induced currents dissipate on their own conductors, and this loss is billed to the transmitter through the Schur complement: a lossy stack is expensive not because its loads absorb, but because the currents it must sustain heat its sheets. And $\bC_{\rm in}$ replaces the single-plane $\bC_\varepsilon$ of (\ref{eq:Ceps}) by a position- and load-dependent matrix: the stack pulls the input impedance of the feed, an effect absent from the cascade model, represented numerically in \cite{Nerini2024SIMBD,Yahya2025Tparam} and given here in closed form.

\subsection{Effective channel and the cascade limit}\label{sec:effch}

User $k$ is a dipole of unit polarization $\bm\psi_k$ at $\br_k$ beyond the stack, observing $y_k=\int\bg_k^{\mathsf H}\bj+n_k$ with $\bg_k^{\mathsf H}(\mathbf s)=\im\kappa Z_0\bm\psi_k^{\mathsf H}\bG(\br_k,\mathbf s)$ and $n_k\sim\mathcal{CN}(0,\sigma_k^2)$. By linearity, $y_k=\bg_{k,0}^{\mathsf T}\biota_0+\bg_{k,\rm s}^{\mathsf T}\biota_{\rm s}+n_k$, where $[\bg_{k,l}]_n\triangleq\int\bg_k^{\mathsf H}(\mathbf s)\bp\,\xi(\bsb-\bu_{l,n})\,d\bsb$ is the profile-smoothed channel at the port position, the 3-D version of \cite[Eq.~(33)]{Iacovelli2026MC}. Substituting (\ref{eq:induced}),
\begin{equation}
y_k=\tilde\bg_k^{\mathsf T}\biota_0+n_k,\qquad \tilde\bg_k^{\mathsf T}(\Utt,\bM)\triangleq\bg_{k,0}^{\mathsf T}-\bg_{k,\rm s}^{\mathsf T}\bT\bZ_{\rm s0}.\label{eq:effchannel}
\end{equation}
The channel (\ref{eq:effchannel}) is the impedance-form counterpart of the loaded-scatterer one of \cite{Li2023BDRISMC,Abrardo2025SIM}, closed-form in the positions and with one $N_{\rm s}\times N_{\rm s}$ inversion in place of the nested cascade of \cite{Nerini2024SIMBD}, and it contains every multiple reflection, intra-layer coupling, and feedback, since $\bZ_{\rm ss}$ couples every pair of planes.

\begin{proposition}[The cascade is the single-pass limit, and its layer response is amplitude-phase locked]\label{prop:cascade}
Equation (\ref{eq:induced}) is the refractive picture solved self-consistently, every sheet re-radiating forward and backward at once. Then:

(i) (\emph{recursion}) the $l$-th block row of (\ref{eq:induced}) reads, in full,
\begin{equation}
(\bZ_{ll}+\bM_{ll})\biota_l+\!\!\sum_{l'\ge1,\,l'\ne l}\!\!(\bZ_{ll'}+\bM_{ll'})\biota_{l'}=-\bZ_{l0}\biota_0,\label{eq:blockrow}
\end{equation}
the cross terms carrying the field coupling through $\bZ_{ll'}$ and the circuit coupling through $\bM_{ll'}$. Two assumptions reduce it to a one-step recursion. \emph{(A1)} the network is layer-diagonal, $\bM_{ll'}=\mathbf0$, so that no sheet is wired to another. \emph{(A2)} the field coupling is \emph{nearest-neighbour forward}: $\bZ_{ll'}$ is kept only for $l'=l-1$, the transmit plane included, so plane $0$ illuminates sheet $1$ alone, the direct terms $-\bZ_{l0}\biota_0$, $l\ge2$, are discarded, and no sheet couples backward; dropping only the backward blocks, $l'>l$, is the weaker \emph{(A2$^-$)}. Under (A1) and (A2), (\ref{eq:blockrow}) becomes
\begin{equation}
\biota_l=\bPsi_l\bP_l\,\biota_{l-1},\quad \bPsi_l\triangleq-(\bZ_{ll}+\bM_{ll})^{-1},\ \ \bP_l\triangleq\bZ_{l,l-1},\label{eq:onestep}
\end{equation}
for $l=1,\ldots,L$, each sheet depending only on the one before.

(ii) (\emph{cascade}) iterating (\ref{eq:onestep}) gives
\begin{equation}
\biota_L=\bPsi_L\bP_L\cdots\bPsi_1\bP_1\,\biota_0=\prod_{l=L}^{1}\big[\bPsi_l\bP_l\big]\,\biota_0,\label{eq:cascade}
\end{equation}
which is the cascade of per-layer responses and propagation matrices of \cite{An2023SIM,Wei2026FSIM}, with $\bP_l$ now the closed-form polarized kernel of Theorem~\ref{thm:cfe3d} and $\bPsi_l$ the physical layer response.

(iii) (\emph{locked response}) if in addition the intra-layer coupling is neglected and the load is diagonal and lossless, $[\bM_{ll}]_{nn}=\im X_{l,n}$, then
\begin{equation}
[\bPsi_l]_{nn}=\frac{-1}{R_r(1+\varepsilon)+\im(X_A+X_{l,n})},\label{eq:lorentz}
\end{equation}
and as the single real knob $X_{l,n}$ sweeps $\mathbb R$ this traces a circle through the origin of diameter $1/(R_r(1+\varepsilon))$. Amplitude and phase are therefore not independently assignable. The modulus is largest at the element resonance $X_{l,n}=-X_A$, where the response is the negative real $-1/(R_r(1+\varepsilon))$, and the phase departs from $\pi$ by less than $\pi/2$, approaching $\pi\pm\pi/2$ only as the modulus vanishes. The unit-modulus phase mask $\diag(e^{\im\theta_{ln}})$ of the cascade model is not realizable by one lossless reactance per element, and lifting the lock needs a second degree of freedom per atom.
\end{proposition}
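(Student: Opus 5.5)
The plan is to read (i) directly off the block structure of (\ref{eq:induced}), obtain (ii) by induction on $l$, and treat (iii) as a scalar Möbius-map computation.

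For (i), partition $\bZ_{\rm ss}$ and $\bM$ into per-plane blocks $\bZ_{ll'}$ and $\bM_{ll'}$, $l,l'\ge1$, and $\bZ_{\rm s0}$ into the blocks $\bZ_{l0}$. The $l$-th block row of $(\bZ_{\rm ss}+\bM)\biota_{\rm s}=-\bZ_{\rm s0}\biota_0$ is then exactly (\ref{eq:blockrow}). Nothing beyond bookkeeping is needed. The "self-consistent forward and backward" reading follows from Theorem~\ref{thm:cfe3d}(iii): $\bZ_{ll'}=\bZ_{l'l}^{\mathsf T}$, so the blocks with $l'<l$ (forward) and $l'>l$ (backward) are both present and carry the same numbers.

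Now impose (A1) and (A2) on (\ref{eq:blockrow}):
\begin{itemize}
\item (A1) deletes every $\bM_{ll'}$ with $l\ne l'$;
\item (A2) deletes every $\bZ_{ll'}$ with $l'\ne l-1$, and treats $-\bZ_{l0}\biota_0$ as the $l'=0=l-1$ term for $l=1$ and as zero for $l\ge2$.
\end{itemize}
What remains is $(\bZ_{ll}+\bM_{ll})\biota_l=-\bZ_{l,l-1}\biota_{l-1}$. To solve it for $\biota_l$ I need $\bZ_{ll}+\bM_{ll}$ to be invertible. This follows exactly as in Lemma~\ref{lem:passivity}: its Hermitian part is a principal submatrix of $\Real\{\bZ_{\rm ss}\}+\bR_M\succeq\varepsilon R_r\bI$, so it is $\succeq\varepsilon R_r\bI\succ0$. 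Solving gives (\ref{eq:onestep}).

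For (ii), iterate (\ref{eq:onestep}) from $l=1$, where $\biota_1=\bPsi_1\bP_1\biota_0$, and induct to obtain (\ref{eq:cascade}). The identification with the cascade of \cite{An2023SIM} is then a term-by-term correspondence: $\bP_l$ plays the role of the propagation matrix and $\bPsi_l$ that of the diagonal layer response.

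For (iii), dropping the intra-layer coupling makes $\bZ_{ll}$ diagonal. By (\ref{eq:Znorm}) its entries are $R_r(1+\varepsilon)+\im X_A$, with $X_A=\zeta_AR_r$. Adding $\im X_{l,n}$ and inverting gives (\ref{eq:lorentz}). Write $a=R_r(1+\varepsilon)>0$ and $t=X_A+X_{l,n}\in\mathbb R$, and set $w=-1/(a+\im t)$. Then
\[
\Big|w+\frac{1}{2a}\Big|=\frac{1}{2a}\cdot\frac{|a-\im t|}{|a+\im t|}=\frac{1}{2a}.
\]
So $w$ lies on the circle of center $-1/(2a)$ and radius $1/(2a)$, which passes through the origin; as $t\to\pm\infty$, $w\to0$. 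The modulus $|w|=1/\sqrt{a^2+t^2}$ is maximal at $t=0$, i.e. $X_{l,n}=-X_A$, where $w=-1/a$.

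For the phase, $\Real\{w\}=-a/(a^2+t^2)<0$. Hence $\arg w$ lies strictly within $\pi/2$ of $\pi$, and it tends to $\pi\pm\pi/2$ only as $|t|\to\infty$, i.e. as $|w|\to0$. Two consequences follow:
\begin{itemize}
\item Each achievable phase corresponds to exactly one modulus, since the map $t\mapsto w$ is injective onto the circle minus the origin. Amplitude and phase are therefore locked.
\item Unit modulus at an arbitrary phase is unattainable, because $|w|\le1/a$ and the phases are confined to a half-plane. Independent control requires a second real parameter per atom.
\end{itemize}

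The main obstacle is in (i): making (A2) precise in how it handles the direct source terms $\bZ_{l0}$, and stating explicitly that invertibility of the reduced diagonal blocks is inherited from the passivity bound of Lemma~\ref{lem:passivity}. The remaining steps are algebra.
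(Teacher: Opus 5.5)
Your proposal is correct and follows the paper's own route. Parts (i) and (ii) are block-row bookkeeping followed by iteration, and (iii) is the M\"obius map $t\mapsto-1/(a+\im t)$: you read the circle off $|w+1/(2a)|=1/(2a)$ and the phase bound off $\Real\{w\}<0$, whereas the paper writes $\arg[\bPsi_l]_{nn}=\pi-\arctan(t/a)$ explicitly. You also check that $\bZ_{ll}+\bM_{ll}$ is invertible, its Hermitian part being a principal submatrix of $\Real\{\bZ_{\rm ss}\}+\bR_M\succeq\varepsilon R_r\bI_{N_{\rm s}}$ as in Lemma~\ref{lem:passivity}, which makes explicit a step the paper leaves implicit.
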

\begin{proof}
(i) and (ii) are the statements above. (iii) is the M\"obius map $x\mapsto-1/(a+\im(b+x))$ with $a=R_r(1+\varepsilon)>0$, which sends the real line to a circle through the origin, and $|[\bPsi_l]_{nn}|=(a^2+(X_A+X_{l,n})^2)^{-1/2}$, $\arg[\bPsi_l]_{nn}=\pi-\arctan((X_A+X_{l,n})/a)\in(\pi/2,3\pi/2)$, whence the stated extremes.
\end{proof}

The terms discarded by (A2) are not small in a compact stack: the backward blocks $\bZ_{l,l+1}=\bZ_{l+1,l}^{\mathsf T}$ equal the forward ones (Theorem~\ref{thm:cfe3d}(iii)) and carry the near field of the gap. And (A1) excludes the vertically connected stacks of Remark~\ref{rem:interlayer}, which admit no cascade form however the field coupling is truncated.

\begin{remark}[Why the truncation fails in the direction it does]\label{rem:stiffness}
Consider two facing atoms tuned to resonance, $X_{l,n}=-X_A$. Their block of $\bZ_{\rm ss}+\bM$ is $\left[\begin{smallmatrix}a&z\\ z&a\end{smallmatrix}\right]$, with the small real $a=R_r(1+\varepsilon)$ and the mutual impedance $z$ in both triangles by Theorem~\ref{thm:cfe3d}(iii), and $|z|$ grows as the gap closes, the reactive kernel growing as $R^{-3}$. Once $|z|\gg a$ the inverse has norm close to $1/|z|$: the coupling detunes the pair, and a compact resonant stack partly chokes itself. (A2) keeps $\left[\begin{smallmatrix}a&0\\ z&a\end{smallmatrix}\right]$, whose inverse has the entry $-z/a^2$, so the cascade predicts the largest currents where the true stack carries the smallest. (A2) also drops the direct blocks $\bZ_{l0}$, $l\ge2$, which are set by the standoff $z_1$ rather than by $h$ and do not fade as the stack is stretched; (A2$^-$) keeps them, and for $L=2$ it differs from (A2) by exactly these blocks.
\end{remark}

\begin{corollary}[Efficiency identity and the control-loss tension]\label{cor:eff}
The radiation efficiency of the whole structure, radiated power over power drawn from the transmit generators, is
\begin{equation}
\eta=\frac{P_{\rm rad}}{P_{\rm rad}+\tfrac{\varepsilon R_r}{2}\big(\|\biota_0\|^2+\sum_{l=1}^{L}\|\biota_l\|^2\big)+P_M},\label{eq:eta}
\end{equation}
so that, at fixed radiated power, the price of the stack is set by the total induced-current norm $\sum_{l\ge1}\|\biota_l\|^2$ and not by the layer count. Layer $l$ acts on the field only through its current, its contribution to $\tilde\bg_k^{\mathsf T}\biota_0$ being $\bg_{k,l}^{\mathsf T}\biota_l$, so a layer that costs nothing does nothing. The norm grows geometrically, as a compact stack couples the transmitter to every sheet rather than letting the first shadow the rest, and electrically, as an atom carries the most current when its load resonates the element, $X_{l,n}\approx-X_A$, exactly where the phase of (\ref{eq:lorentz}) moves fastest: strong control and strong dissipation are the same operating point.
\end{corollary}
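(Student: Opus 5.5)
The identity (\ref{eq:eta}) follows directly from the power balance of Lemma~\ref{lem:passivity}; the plan is to read it off and then justify the qualitative claims from (\ref{eq:induced}), (\ref{eq:effchannel}) and (\ref{eq:lorentz}).

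First, by definition $\eta=P_{\rm rad}/P_{\rm in}$ with $P_{\rm in}=\tfrac12\biota_0^{\mathsf H}\Real\{\bZ_{\rm in}\}\biota_0$ the average power drawn from the generators. Substituting the decomposition (\ref{eq:power3d}) with $P_{\Omega,l}=\tfrac{\varepsilon R_r}{2}\|\biota_l\|^2$ for every $l=0,\ldots,L$ (the ohmic delta of (\ref{eq:kernel}) acting identically on every sheet, and the self-resistance being the same on every plane by Theorem~\ref{thm:cfe3d}(ii)) yields (\ref{eq:eta}) at once. Because $\varepsilon R_r$ is the same per port on every plane, the sheets enter only through the aggregate $\sum_{l\ge1}\|\biota_l\|^2$. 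At fixed $P_{\rm rad}$ and fixed $\biota_0$ the denominator therefore depends on the stack only through this norm and $P_M$, and $L$ never appears separately. This gives the first claim.

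Second, the statement ``a layer that costs nothing does nothing'' follows by expanding $\bg_{k,\rm s}^{\mathsf T}\biota_{\rm s}=\sum_{l\ge1}\bg_{k,l}^{\mathsf T}\biota_l$ in $y_k$ before substituting (\ref{eq:induced}). Hence layer $l$ contributes $\bg_{k,l}^{\mathsf T}\biota_l$, which vanishes exactly when $\biota_l=\mathbf0$, i.e.\ exactly when its dissipation term $\tfrac{\varepsilon R_r}{2}\|\biota_l\|^2$ vanishes (for $\varepsilon>0$).

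Third, the two growth mechanisms are qualitative and need support rather than proof. For the geometric one, I will note that $\biota_{\rm s}=-\bT\bZ_{\rm s0}\biota_0$ contains the direct blocks $\bZ_{l0}$ for every $l$. In a compact stack these are not small, since $R\ge h$ is small and $\kerX\sim R^{-3}$, so every sheet is driven, unlike the shadowing picture of (A2). For the electrical one, I will invoke Proposition~\ref{prop:cascade}(iii): $|[\bPsi_l]_{nn}|=(a^2+(X_A+X_{l,n})^2)^{-1/2}$ peaks at $X_{l,n}=-X_A$, and there $\partial\arg/\partial X_{l,n}=-a/(a^2+(X_A+X_{l,n})^2)$ attains its maximal modulus $1/a$, so the point of largest current coincides with the steepest phase control. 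I expect the main obstacle to be stating this rigorously in the coupled setting rather than the locked single-atom one. I would present it as holding within the assumptions of Proposition~\ref{prop:cascade}(iii) and flag Remark~\ref{rem:stiffness} as the caveat for strongly coupled resonant pairs.
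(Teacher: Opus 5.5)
Your proposal is correct and takes the paper's route. The paper gives no separate proof: the identity is read off the power balance (\ref{eq:power3d}) of Lemma~\ref{lem:passivity} with $P_{\Omega,l}=\tfrac{\varepsilon R_r}{2}\|\biota_l\|^2$ on every sheet, the per-layer reading comes from splitting $\bg_{k,\rm s}^{\mathsf T}\biota_{\rm s}$ in (\ref{eq:effchannel}), and the resonance claim comes from the Lorentz form (\ref{eq:lorentz}) of Proposition~\ref{prop:cascade}(iii), as you do.

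There are two small slips in your supporting remarks. First, $\bg_{k,l}^{\mathsf T}\biota_l$ vanishes \emph{if} $\biota_l=\mathbf0$, not only if; only that direction is needed. Second, the direct blocks $\bZ_{l0}$ see the axial separation $z_l-z_0\ge z_1$, not $h$. A compact stack makes them comparable to $\bZ_{10}$ because every sheet sits near the standoff (Remark~\ref{rem:stiffness}), not large through the $R^{-3}$ growth of the reactive kernel.
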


\subsection{What a single sheet can do}\label{sec:sheet}

\begin{lemma}[Sheet scattering identity]\label{lem:sheet}
Let a plane wave of transverse wavenumber $\bk$, $\|\bk\|<\kappa$, and unit transverse amplitude be incident from $z<z_l$ on a sheet at $z=z_l$ carrying tangential currents, and let $t(\bk)$ and $r(\bk)$ be the normalized transverse amplitudes of the total field above the sheet and of the field returning below it. A tangential current sheet radiates the same transverse amplitude $\beta$ into both half-spaces, so $t=1+\beta$ and $r=\beta$, i.e., mode by mode,
\begin{equation}
t(\bk)=1+r(\bk).\label{eq:star}
\end{equation}
If the sheet is lossless, $|t|^2+|r|^2=1$ forces $r=\tfrac12(e^{\im\psi}-1)$ and $t=\tfrac12(e^{\im\psi}+1)$ for some $\psi$. Factoring the half angle,
\begin{equation}
t=\tfrac12e^{\im\psi/2}\big(e^{\im\psi/2}+e^{-\im\psi/2}\big)=\cos(\psi/2)\,e^{\im\psi/2},\label{eq:halfangle}
\end{equation}
so that the power split $|t|^2=\cos^2(\psi/2)$ and the transmission phase $\arg t=\psi/2$ are two readings of the same parameter and cannot be set independently. At the operator level, on the $N_{\Iset}$ modes of Section~\ref{sec:modes3d}, $\bT_{\rm sh}=\tfrac12(\bU+\bI_{N_{\Iset}})$ with $\bU=\bQ\diag(e^{\im\psi_i})\bQ^{\mathsf T}$ unitary and $\bQ$ real orthogonal, so the statement holds in every eigenchannel of the sheet, the Fourier modes for a shift-invariant one.
\end{lemma}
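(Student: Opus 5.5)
The plan follows the order of the statement: first the symmetric radiation of a current sheet, then losslessness as a circle condition, then the half-angle factorisation, and finally the lift to operators on the $N_{\Iset}$ modes.

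\emph{Symmetric radiation.} Take a single sheet at $z=z_l$ carrying the tangential Fourier component $\hat\bj(\bk)$ and expand its field in plane waves. Through the spectral form of $\bG$, each component propagates as $e^{\im k_z|z-z_l|}$ on both sides of the sheet. For tangential currents, the transverse electric field is continuous across the sheet, and only the tangential magnetic field jumps, by the surface current. The radiated transverse amplitude just above the sheet therefore equals the one just below it; call it $\beta(\bk)$. This is the even (in $z-z_l$) part of the Green function seen by a tangential source.

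Now superpose the incident wave of unit transverse amplitude. Above the sheet the total field is $1+\beta$; below it, the returning wave is $\beta$ alone. Normalising by the incident amplitude gives $t=1+\beta$ and $r=\beta$, which is (\ref{eq:star}). The key point to verify is that this holds for the self-consistent sheet current whatever the load is. It does, because the argument uses only the radiation of the current, not the law that sets it.

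\emph{Losslessness as a circle.} For a propagating mode, $\|\bk\|<\kappa$, $k_z$ is real, so power flux is proportional to squared transverse amplitude with the same factor on both sides. If the sheet dissipates nothing, flux conservation gives $|1+r|^2+|r|^2=1$, which simplifies to $|r|^2+\Real\{r\}=0$. Completing the square gives $|2r+1|=1$. Hence $2r+1=e^{\im\psi}$ for some $\psi$, and $t=1+r=\tfrac12(e^{\im\psi}+1)$. Factoring out $e^{\im\psi/2}$ yields (\ref{eq:halfangle}). Both $|t|^2$ and $\arg t$ then depend only on $\psi$, so the power split and the transmission phase cannot be set independently. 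The branch of $\arg t$ is absorbed by the sign of $\cos(\psi/2)$.

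\emph{Operator level.} Restrict to the $N_{\Iset}$ propagating modes. Reflection and transmission become matrices with $\bT_{\rm sh}=\bI+\bR_{\rm sh}$, since the scalar argument applies row by row by linearity. Losslessness becomes $\bT_{\rm sh}^{\mathsf H}\bT_{\rm sh}+\bR_{\rm sh}^{\mathsf H}\bR_{\rm sh}=\bI$, which expands to $2\bR_{\rm sh}^{\mathsf H}\bR_{\rm sh}+\bR_{\rm sh}+\bR_{\rm sh}^{\mathsf H}=0$. Setting $\bU\triangleq2\bR_{\rm sh}+\bI$, this identity is exactly $\bU^{\mathsf H}\bU=\bI$, so $\bU$ is unitary and $\bT_{\rm sh}=\tfrac12(\bU+\bI)$.

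Reciprocity (Theorem~\ref{thm:cfe3d}(iii) together with a symmetric $\bM$) makes $\bR_{\rm sh}$, and hence $\bU$, complex symmetric in a suitably normalised real modal basis. A unitary symmetric matrix has real and imaginary parts that commute and are both real symmetric. Diagonalising them simultaneously with one real orthogonal $\bQ$ gives $\bU=\bQ\diag(e^{\im\psi_i})\bQ^{\mathsf T}$. This step, justifying that the modal basis can be chosen real so that symmetry survives the normalisation, is the one I expect to need the most care. I would handle it by pairing $\pm\bk$ modes into real standing-wave combinations.

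In each eigenchannel the scalar identity (\ref{eq:halfangle}) then holds with $\psi=\psi_i$. For a shift-invariant sheet, Lemma~\ref{lem:diag3d}-type diagonalisation makes these eigenchannels the Fourier modes.
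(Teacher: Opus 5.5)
Your proposal is correct and follows the paper's proof step for step. Equal transverse radiation into both half-spaces gives $t=1+r$, losslessness gives $\Real\{r\}=-|r|^2$, i.e.\ $2r+1=e^{\im\psi}$, and at the operator level $\bU=2\bR_{\rm sh}+\bI$ is unitary and, by reciprocity, complex symmetric, so its commuting real symmetric parts are diagonalized by one real orthogonal $\bQ$.

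The differences are minor or in your favour. You obtain the equal amplitudes from continuity of tangential $\mathbf E$ across an electric current sheet, where the paper uses the sign-independent transverse block $\bI_2-\bk\bk^{\mathsf T}/\kappa^2$ of the Weyl projector. You also note correctly that plain symmetry holds only in a real, $\pm\bk$-paired standing-wave basis: in the Fourier basis, reciprocity equates the scattering $\bk'\to\bk$ with $-\bk\to-\bk'$, a point the paper's proof leaves implicit.
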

\begin{proof}
See \apx{app:star}{C}.
\end{proof}

Normalized by $R_r$, the circle traced by (\ref{eq:lorentz}) for a lossless element is exactly this locus of $r$, the Lorentz form being its Cayley parameterization, and conduction loss contracts it by $1/(1+\varepsilon)$. A single sheet cannot refract without reflecting: any $\beta=r\ne0$ that alters the transmitted field sends an equal wave backwards, and transmission phase is bought with transmitted power on a fixed curve, $|t|=|\cos(\arg t)|$ by (\ref{eq:halfangle}), so an eighth of a wave of phase, $\arg t=\pm\pi/4$, already costs $3$~dB, and at $\arg t=\pm\pi/2$ the whole incident power returns to the transmitter, where $\bZ_{\rm in}$ accounts for it. The unit-modulus phase mask of the cascade model thus describes no single sheet; a stack of $L$ such layers is physically $2L$ planes, and reflectionless operation with free phase, the Huygens condition, needs a two-sheet layer, i.e., the vertically connected pattern of Remark~\ref{rem:interlayer}.

\subsection{Multi-user signal model and constraints}\label{sec:constraints}

The multi-user transmit signal superposes $K$ precoded streams, $\biota_0=\sum_k\bw_kx_k$, and the SINR of user $k$ is
\begin{equation}
\Gamma_k=\frac{|\tilde\bg_k^{\mathsf T}\bw_k|^2}{\sum_{k'\ne k}|\tilde\bg_k^{\mathsf T}\bw_{k'}|^2+\sigma_k^2}.\label{eq:sinr}
\end{equation}
Three constraints accompany it, each a quadratic form of the precoders. Absorbing $R_r/2$ into $P$, the \emph{power} constraint $\sum_k\bw_k^{\mathsf H}\bC_{\rm in}\bw_k\le P$ bills the transmitter for radiation and for the dissipation of every sheet and load (Lemma~\ref{lem:passivity}). The \emph{transmit voltage} constraint $\sum_k|\be_n^{\mathsf T}\bar\bZ_{\rm in}\bw_k|^2\le\bar V_{\max}^2$, $\bar\bZ_{\rm in}=\bZ_{\rm in}/R_r$, caps the amplifier voltage, which coupled arrays can make large at modest delivered power \cite[Sec.~III-C]{Iacovelli2026MC}. The \emph{induced-current} constraint $\sum_k|\be_n^{\mathsf T}\bB\bw_k|^2\le\bar I_{\max}^2$, specific to a loaded stack, protects the tunable components that carry $\biota_{\rm s}=-\bB\biota_0$ and keeps the design away from the regime where Corollary~\ref{cor:eff} bites hardest.

\section{Wavenumber-Domain Representation Across Planes}\label{sec:wavenumber}

In the wavenumber domain the light-circle dichotomy of \cite[Prop.~2]{Iacovelli2026MC} survives across planes with a propagator attached, Fourier modes diagonalize every block of $\bZ$, and a fluid port on any plane remains a codeword.

\subsection{The light circle with a propagator}

\begin{theorem}[Light-circle dichotomy with propagator]\label{thm:dichotomy3d}
For $h\ge0$, in the sense of tempered distributions, $\zeta_p(\btau;h)=\frac{1}{4\pi^2}\int_{\mathbb R^2}\hat\zeta_p(\bk;h)e^{\im\bk\cdot\btau}d\bk$ with
\begin{equation}
\hat\zeta_p(\bk;h)=Z_s\ind{h=0}+\kappa Z_0\frac{1-c_{\bk}^2}{2k_z}\,e^{\im k_zh},\qquad c_{\bk}=\frac{\bk\cdot\bpb}{\kappa},\label{eq:zetahat3d}
\end{equation}
$k_z=\sqrt{\kappa^2-\|\bk\|^2}$, $\Imag\{k_z\}\ge0$. Consequently the resistive and reactive spectra across the separation $h$ are
\begin{equation}
\hat r_p(\bk;h)=Z_s\ind{h=0}+\kappa Z_0\frac{1-c_{\bk}^2}{2k_z}\cos(k_zh)\,\ind{\|\bk\|<\kappa},\label{eq:rhat3d}
\end{equation}
\begin{align}
\hat x_p(\bk;h)&=\kappa Z_0\frac{1-c_{\bk}^2}{2k_z}\sin(k_zh)\,\ind{\|\bk\|<\kappa}\nonumber\\
&\quad-\kappa Z_0\frac{1-c_{\bk}^2}{2|k_z|}e^{-|k_z|h}\,\ind{\|\bk\|>\kappa}.\label{eq:xhat3d}
\end{align}
At $h=0$ they reduce to \cite[Eqs.~(22)-(23)]{Iacovelli2026MC}. The weight $(1-c_{\bk}^2)/(2k_z)$ is the obliquity-and-polarization factor that the scalar coefficient of Remark~\ref{rem:RS} replaces by a flat weight, the spectrum of $w_{nm}$ being $A_{\rm a}e^{\im k_zh}$. Radiation still lives on the visible disc only. Reaction now has a propagating part, the Fresnel phase $\sin(k_zh)$ on the disc, and an evanescent part that decays as $e^{-|k_z|h}$ off the disc.
\end{theorem}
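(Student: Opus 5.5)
The plan is to obtain (\ref{eq:zetahat3d}) from the Weyl (angular-spectrum) expansion of the scalar Green function and then apply the dyadic operator. With the convention $e^{+\im\kappa R}$, for $h\ge0$,
\begin{equation*}
g(R)\triangleq\frac{e^{\im\kappa R}}{4\pi R}=\frac{1}{4\pi^2}\int_{\mathbb R^2}\frac{\im}{2k_z}e^{\im\bk\cdot\btau}e^{\im k_z h}\,d\bk ,
\end{equation*}
where the branch $\Imag\{k_z\}\ge0$ is the one selected by outgoing radiation (and convergence for $h>0$). The dyadic Green function is $\bG=(\bI_3+\kappa^{-2}\nabla\nabla^{\mathsf T})g$. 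Because $\bp\perp\hat{\mathbf z}$, the projection $\bp^{\mathsf T}\bG\bp$ involves only the tangential derivative $(\bpb\cdot\nabla_{\btau})^2$. Under the Fourier integral this derivative becomes $-(\bk\cdot\bpb)^2$, which gives the factor $1-c_{\bk}^2$. I will avoid the $z$-derivatives entirely, so that no $\delta(h)$ term arises from differentiating $e^{\im k_z|h|}$ at $h=0$; the tangential projection is precisely what makes this possible. Multiplying by $-\im\kappa Z_0$ turns $\im/(2k_z)$ into $\kappa Z_0/(2k_z)$. Adding the ohmic term $Z_s\delta(\btau)\ind{h=0}$, whose transform is $Z_s\ind{h=0}$, then yields (\ref{eq:zetahat3d}).

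For $h>0$ the integral converges absolutely off the disc, since the integrand decays like $e^{-|k_z|h}$. At $h=0$ it converges only as a distribution: the integrand grows like $\|\bk\|$ after the $c_{\bk}^2$ factor. I will justify it in the tempered sense in one of two ways. Either I take $h\downarrow0$ and note that the transforms converge in $\mathcal S'$, or I invoke the $h=0$ result \cite[Eqs.~(22)-(23)]{Iacovelli2026MC} directly. The integrable singularity $1/k_z$ at $\|\bk\|=\kappa$ needs only a remark.

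The resistive and reactive spectra come from separating $\kerZ=\kerR+\im\kerX$, where $\kerR,\kerX$ are the real-symmetric spatial kernels. Their Fourier transforms are the even parts of $\hat\zeta_p$, namely $\tfrac12[\hat\zeta_p(\bk)+\hat\zeta_p(-\bk)^*]$ and the corresponding odd combination. Since $\hat\zeta_p$ is even in $\bk$, they reduce to $\Real\{\hat\zeta_p\}$ and $\Imag\{\hat\zeta_p\}$.
\begin{itemize}
\item On the disc, $k_z$ is real, so $e^{\im k_zh}/k_z$ splits into $\cos(k_zh)/k_z$ and $\sin(k_zh)/k_z$.
\item Off the disc, $k_z=\im|k_z|$, so $\kappa Z_0(1-c_{\bk}^2)e^{-|k_z|h}/(2\im|k_z|)$ is purely imaginary with a negative sign.
\end{itemize}
This produces (\ref{eq:rhat3d}) and (\ref{eq:xhat3d}). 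Setting $h=0$ reproduces the single-plane pair. The comparison with $w_{nm}$ is the same Weyl identity differentiated in $z$: $-2A_{\rm a}\partial_z g$ brings down $-2\im k_z\cdot\im/(2k_z)=1$, hence the flat weight $A_{\rm a}e^{\im k_zh}$.

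The main obstacle I expect is rigor at $h=0$ and the sign and branch bookkeeping. Concretely, I need to check that the real part of (\ref{eq:zetahat3d}) matches $\kappa Z_0\Imag\{\bG\}$ from (\ref{eq:ImG}), including the $+\im$ versus $-\im$ conventions. As a sanity check I will integrate $\hat r_p$ over the disc at $\btau=0$, $h=0$, which should recover the self-resistance $\propto\kappa^2Z_0/(6\pi)$, consistent with $R_r$.
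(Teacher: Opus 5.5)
Your proposal is correct and follows essentially the paper's own route: the Weyl expansion of $e^{\im\kappa R}/(4\pi R)$, the dyadic operator sandwiched by the tangential $\bp$ to produce the factor $1-c_{\bk}^2$, the $h=0$ case and the ohmic term obtained as a limit or taken from \cite{Iacovelli2026MC}, and the pointwise split of $e^{\im k_zh}/(2k_z)$ on and off the disc. Two of your steps are left implicit in the paper: the evenness argument showing that $\Real\{\hat\zeta_p\}$ and $\Imag\{\hat\zeta_p\}$ really are the transforms of the real resistive and reactive kernels, and the check that $w_{nm}$ has the flat spectrum $A_{\rm a}e^{\im k_zh}$.
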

\begin{proof}
See \apx{app:dichotomy3d}{D}.
\end{proof}

\begin{corollary}[Axial split, closed form]\label{cor:axial}
For facing ports ($\btau=\mathbf0$, $x=\kappa h$) the $c^2$ term of (\ref{eq:theta}) is absent, and splitting (\ref{eq:xhat3d}) into its disc and ring parts splits the normalized reactive coupling $\chi(0,h)=\tfrac32\big[-\tfrac{\cos x}{x}+\tfrac{\cos x}{x^3}+\tfrac{\sin x}{x^2}\big]$ exactly into $\chi=\chi_{\rm pr}+\chi_{\rm ev}$, with
\begin{align}
\chi_{\rm pr}(h)&=\tfrac32\Big[\frac1{2x}-\frac1{x^3}-\frac{\cos x}{x}+\frac{\sin x}{x^2}+\frac{\cos x}{x^3}\Big],\nonumber\\
\chi_{\rm ev}(h)&=\tfrac32\Big[\frac1{x^3}-\frac1{2x}\Big],\label{eq:axialsplit}
\end{align}
while the resistive coupling $\rho(0,h)=\tfrac32\big[\tfrac{\sin x}{x}+\tfrac{\cos x}{x^2}-\tfrac{\sin x}{x^3}\big]$ comes from the disc alone. The evanescent part carries the $R^{-3}$ singularity, changes sign at $x=\sqrt2$, and decays only as $\tfrac{3}{4x}$, and evaluating (\ref{eq:axialsplit}) numerically it dominates for $h\lesssim0.17\lambda$, carrying $93\%$ of $\chi$ at $h=0.1\lambda$, $67\%$ at $0.15\lambda$, and $21\%$ at $0.2\lambda$; so the near-rim evanescent modes are not suppressed in the aggregate. For elements of finite size the ring integrand carries $\hat\xi^2$, which tempers the $h^{-3}$ growth: for a Gaussian profile of standard deviation $\lambda/20$ the facing reactance at $h=\lambda/16$ evaluates to $1.9R_r$ rather than $23R_r$.
\end{corollary}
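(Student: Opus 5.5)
The plan is to evaluate the inverse transform of Theorem~\ref{thm:dichotomy3d} at $\btau=\mathbf 0$ in polar coordinates, separately on the visible disc and on the evanescent ring, in the point-port limit of Theorem~\ref{thm:cfe3d}(ii). There each entry is $\ell_{\rm e}^2\zeta_p(\mathbf 0;h)$, normalized by $R_r=Z_0\kappa^2\ell_{\rm e}^2/(6\pi)$.

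\emph{Azimuthal average and substitution.} Write $\bk=k(\cos\phi,\sin\phi)$. Since $e^{\im\bk\cdot\btau}=1$ at $\btau=\mathbf0$, only the azimuthal mean of $1-c_{\bk}^2$ enters:
\[
\frac{1}{2\pi}\int_0^{2\pi}\big(1-c_{\bk}^2\big)\,d\phi=1-\frac{k^2}{2\kappa^2}.
\]
This already explains why the $c^2$ term of (\ref{eq:theta}) is absent and why the result is polarization independent. The radial integral then carries $k\,dk/(2k_z)$, and I will change variables to remove the $1/k_z$ factor.
\begin{itemize}
\item On the disc I set $k_z=s\in(0,\kappa)$, so that $k\,dk=-s\,ds$ and $1-k^2/(2\kappa^2)=\tfrac12+s^2/(2\kappa^2)$.
\item On the ring I set $k_z=\im q$, $q\in(0,\infty)$, so that $k\,dk=q\,dq$ and the weight becomes $\tfrac12-q^2/(2\kappa^2)$.
\end{itemize}
The rim singularity $1/k_z$ is cancelled by the Jacobian in both cases. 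This is also the point where the tempered-distribution statement of Theorem~\ref{thm:dichotomy3d} turns into ordinary absolutely convergent integrals for $h>0$, so I will check it explicitly.

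\emph{Elementary integrals.} The disc part of (\ref{eq:xhat3d}) becomes
\[
\propto\int_0^{\kappa}\Big(\tfrac12+\tfrac{s^2}{2\kappa^2}\Big)\sin(sh)\,ds.
\]
Integrating by parts with $x=\kappa h$ gives the bracket of $\chi_{\rm pr}$: the terms in $\cos x$ and $\sin x$, together with the lower-limit constants $\tfrac1{2x}-\tfrac1{x^3}$. The ring part becomes
\[
-\int_0^\infty\Big(\tfrac12-\tfrac{q^2}{2\kappa^2}\Big)e^{-qh}\,dq,
\]
which equals $-\tfrac1{2h}+\tfrac1{\kappa^2h^3}$ and yields $\tfrac32\big[\tfrac1{x^3}-\tfrac1{2x}\big]$ once the prefactor $\tfrac{1}{4\pi^2}\cdot2\pi\cdot\kappa Z_0\ell_{\rm e}^2/(2R_r)=\tfrac{3}{2\kappa}$ is tracked. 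The same disc substitution applied to $\cos(sh)$ in (\ref{eq:rhat3d}) gives $\rho(0,h)$, with no ring contribution because of the indicator $\ind{\|\bk\|<\kappa}$. For consistency I will add $\chi_{\rm pr}+\chi_{\rm ev}$, check that the constant terms $\pm(\tfrac1{2x}-\tfrac1{x^3})$ cancel, and compare the sum with $\tfrac32[h_0-h_1/x]$ at $x=\kappa h$ from (\ref{eq:theta}) with $c=0$.

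\emph{Qualitative claims and numbers.} The claims on $\chi_{\rm ev}$ are read off its closed form: it carries the $x^{-3}$ term, it changes sign at $x=\sqrt2$, and it decays like $-\tfrac{3}{4x}$. The percentages and the threshold $h\approx0.17\lambda$ are direct evaluations of the ratio $\chi_{\rm ev}/\chi$. For the finite element I will repeat the computation with $\hat\xi^2(k)=\ell_{\rm e}^2e^{-\sigma^2k^2}$ inside both radial integrals; the ring integral becomes a Gaussian-damped Laplace transform, which I would evaluate numerically at $\sigma=\lambda/20$, $h=\lambda/16$. I also need to normalize by $\rho_\sigma(0)$, consistent with the remark after (\ref{eq:Znorm}).

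I expect the main obstacle to be bookkeeping rather than analysis: fixing the $2\pi$ and $\ell_{\rm e}^2$ factors and the sign conventions ($-\im\kappa Z_0\bG$ and $\Imag\{k_z\}\ge0$), so that the disc and ring constants land with exactly the signs in (\ref{eq:axialsplit}). The exact cancellation of the constant terms in the sum will serve as the check.
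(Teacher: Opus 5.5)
Your route is the paper's own. You use polar coordinates at $\btau=\mathbf 0$ and the azimuthal mean $1-\varrho^2/(2\kappa^2)$. You take $k_z$ itself as the variable on the disc and $k_z=\im\nu$ on the ring, so the Jacobian cancels $1/k_z$. Then come integration by parts on the disc, Laplace moments on the ring, and direct evaluation for the shares. Your prefactor $3/(2\kappa)$ and both closed forms check out.

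One correction concerns the finite-size figure, which the paper's appendix does not derive: do not normalize by $\rho_\sigma(0)$. The remark after (\ref{eq:Znorm}) keeps $R_r$ as the unit and only replaces the unit self-resistance by $\rho_\sigma(0)=0.943$. With $\hat\xi^2=\ell_{\rm e}^2e^{-\sigma^2k^2}$, the factor $\ell_{\rm e}^2$ cancels against $R_r$. The ring integral is then closed-form in $\mathrm{erfc}$ at $h/(2\sigma)=0.625$ and contributes about $1.72$, while the disc contributes about $0.21$. This gives $1.92R_r$, matching the quoted $1.9R_r$. Dividing by $\rho_\sigma(0)$ would instead give about $2.04$.
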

\begin{proof}
See \apx{app:axial}{E}.
\end{proof}

Below about a sixth of a wavelength, the regime the fluid-element SIM prefers \cite{Wei2026FSIM}, the inter-layer reaction is the near-field term that no propagator-based model sees; above it, it is the Fresnel phase, which the scalar propagator carries with the wrong weight and without polarization.

\subsection{Fourier modes across planes}\label{sec:modes3d}

Assume a common aperture $\surf=[-D_x/2,D_x/2]\times[-D_y/2,D_y/2]$ on every plane (a smaller transmit aperture is embedded in it), and expand the scalar current of plane $l$ on the orthonormal modes $\phi_i(\bsb)=e^{\im\bk_i\cdot\bsb}/\sqrt{|\surf|}$, $\bk_i=2\pi[i_x/D_x,i_y/D_y]^{\mathsf T}$, so that $\bj_l(\bsb)=\sum_i q_{li}\,\phi_i(\bsb)\,\bp$ with $\bq_l=[q_{li}]_{i\in\Iset}$ the modal coefficients of that plane. Projecting the port expansion of $\bj_l$ on the modes gives, as in \cite[Eq.~(31)]{Iacovelli2026MC},
\begin{equation}
q_{li}=\int_{\surf}\phi_i^*(\bsb)\,\bp^{\mathsf T}\bj_l(\bsb)\,\dd\bsb\ \ \Longrightarrow\ \ \bq_l=\bXi\bPhi_l\,\biota_l,\label{eq:qmap}
\end{equation}
with $\bXi\triangleq\diag\{\hat\xi(\bk_i)\}$ the element taper, $\hat\xi(\bk)=\int\xi(\bsb)e^{-\im\bk\cdot\bsb}\dd\bsb$, and $\bPhi_l\triangleq[\bphi(\bu_{l,1}),\ldots,\bphi(\bu_{l,N_l})]$ the constant-modulus codewords $[\bphi(\bu)]_i=|\surf|^{-1/2}e^{-\im\bk_i\cdot\bu}$, the conjugate modes sampled at the ports. Computations retain the modes $\Iset=\{i:\|\bk_i\|\le k_{\max}\}$, $N_{\Iset}=|\Iset|$, which contain the $N_{\rm vis}=O(\kappa^2|\surf|)$ visible modes and, for $k_{\max}>\kappa$, a ring of evanescent ones; the reaction between planes is carried by that ring. The complex power becomes $P_{\rm cx}=\tfrac12\bq^{\mathsf H}\bY\bq$ with $\bq$ stacking the $q_{li}$ and the modal Gram blocks $\bY^{(ll')}\in\mathbb C^{N_{\Iset}\times N_{\Iset}}$, $[\bY^{(ll')}]_{ii'}=\langle\phi_i\bp,\kerZ_{ll'}\phi_{i'}\bp\rangle$, whose exact form is (\ref{eq:zetahat3d}) integrated against two Dirichlet kernels as in \cite[Eq.~(28)]{Iacovelli2026MC}.

\begin{lemma}[Asymptotic diagonalization across planes]\label{lem:diag3d}
Fix $\delta>0$ and $h=|z_l-z_{l'}|$, and let $\kappa D_x,\kappa D_y\to\infty$.

(i) For visible modes $\|\bk_i\|,\|\bk_{i'}\|\le\kappa-\delta$, $[\bY^{(ll')}]_{ii'}=\hat\zeta_p(\bk_i;h)\delta_{ii'}+\epsilon_{ii'}$ with $|\epsilon_{ii'}|\le\omega_h(D_{\min}^{-1/2})+O(D_{\min}^{-\alpha})$ for every $\alpha<1/2$, $D_{\min}\triangleq\kappa\min(D_x,D_y)$, $\omega_h$ the modulus of continuity of $\hat\zeta_p(\cdot\,;h)$ on the disc of radius $\kappa-\delta/2$. Numerical evaluation further shows that the leakage of a mode onto its eight lattice neighbours falls as $D_{\min}^{-1}$, faster than this bound.

(ii) For $h>0$ the reactive Gram converges for every mode, evanescent modes included, to $\hat x_p(\bk_i;h)\delta_{ii'}$, because the propagator $e^{-|k_z|h}$ makes $\hat\zeta_p(\cdot\,;h)$ integrable against the $1/k$ tails of the truncated exponentials.

(iii) For $h=0$ the statements of \cite[Lemma~2]{Iacovelli2026MC} apply, the reactive Gram of a truncated mode diverging logarithmically. The cell-averaged modal weights and the Dirichlet leakage governing the rate are those of \cite{Iacovelli2026VA}.
\end{lemma}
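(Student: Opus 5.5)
We write each Gram entry as a spectral integral and reduce everything to estimates on a smoothing kernel. By Theorem~\ref{thm:dichotomy3d} and Parseval,
\begin{equation*}
[\bY^{(ll')}]_{ii'}=\frac{1}{4\pi^2|\surf|}\int_{\mathbb R^2}\hat\zeta_p(\bk;h)\,D_{\surf}(\bk-\bk_{i'})\,D_{\surf}^*(\bk-\bk_i)\,\dd\bk,
\end{equation*}
where $D_{\surf}(\bk)=\int_{\surf}e^{-\im\bk\cdot\bsb}\dd\bsb$ is the product of two one-dimensional Dirichlet (sinc) kernels. This is the representation behind \cite[Eq.~(28)]{Iacovelli2026MC}. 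The factor $e^{\im k_zh}$ only changes the multiplier, so the single-plane argument of \cite[Lemma~2]{Iacovelli2026MC} carries over once the multiplier has the right regularity and decay.

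The weighted product $F_{ii'}(\bk)=D_{\surf}(\bk-\bk_{i'})D^*_{\surf}(\bk-\bk_i)/(4\pi^2|\surf|)$ has three properties that drive the proof:
\begin{itemize}
\item its total integral is $\delta_{ii'}$, by orthonormality of the modes;
\item its mass concentrates within distance $O(D^{-1})$ of $\bk_i$ and $\bk_{i'}$;
\item its $L^1$ tails outside a ball of radius $r$ around those points are $O((rD)^{-1}\log)$-type.
\end{itemize}

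For (i), we subtract $\hat\zeta_p(\bk_i;h)$ inside the integral, so that
\begin{equation*}
\epsilon_{ii'}=\int\big(\hat\zeta_p(\bk)-\hat\zeta_p(\bk_i)\big)F_{ii'}(\bk)\,\dd\bk .
\end{equation*}
We split the integral into three regions.
\begin{itemize}
\item \emph{A ball of radius $\kappa D_{\min}^{-1/2}$ about $\bk_i$.} There the difference is at most $\omega_h(D_{\min}^{-1/2})$. Because $\|\bk_i\|\le\kappa-\delta$, this ball lies in the disc of radius $\kappa-\delta/2$, where $\hat\zeta_p$ is smooth since $k_z\ge c(\delta)>0$. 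This gives the first term.
\item \emph{The rest of the disc of radius $\kappa-\delta/2$.} The multiplier is bounded and the kernel tail is $O(D_{\min}^{-1/2}\log D_{\min})$, which is $O(D_{\min}^{-\alpha})$ for every $\alpha<1/2$.
\item \emph{The complement, containing the rim singularity $1/k_z$ and the evanescent region.} Here $|\bk-\bk_i|\ge\delta/2$, so $|F_{ii'}|\lesssim D^{-2}|\bk-\bk_i|^{-1}|\bk-\bk_{i'}|^{-1}$ in each coordinate. The integrable singularity $|k_z|^{-1}$ at the rim and the growth of $|\hat\zeta_p|$ like $\|\bk\|$ at infinity then need care. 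For $h>0$ the factor $e^{-|k_z|h}$ kills the growth. For $h=0$ we defer to \cite{Iacovelli2026MC}, where the $1/k$ sinc tails still give a convergent, though slowly decaying, contribution.
\end{itemize}
The remark on nearest-neighbour leakage falling as $D_{\min}^{-1}$ is numerical and is cited as such, not proved.

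For (ii), we repeat the computation with the reactive multiplier $\hat x_p(\cdot;h)$ from (\ref{eq:xhat3d}), now for arbitrary $\bk_i$, including evanescent ones. The key point is that for $h>0$:
\begin{itemize}
\item $\hat x_p$ is continuous off the rim;
\item $\hat x_p$ is bounded by $C\|\bk\|e^{-\|\bk\|h/2}$ at infinity, hence in $L^1\cap L^\infty$ away from the rim;
\item near the rim it has the integrable singularity $|k_z|^{-1}$ (from $\sin(k_zh)/k_z$ it is actually bounded, and on the ring $e^{-|k_z|h}/|k_z|$ is integrable).
\end{itemize}
Since $F_{ii'}$ converges to $\delta_{ii'}\delta(\bk-\bk_i)$ against such test functions, via the $1/k$ tails of the truncated exponentials paired with an exponentially decaying weight, dominated convergence gives the limit $\hat x_p(\bk_i;h)\delta_{ii'}$. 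One modal point needs separate treatment: an evanescent $\bk_i$ sitting exactly on the rim is excluded or handled by continuity, since lattice points generically avoid $\|\bk\|=\kappa$.

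Part (iii) is the $h=0$ specialization. There $\hat x_p\sim-\|\bk\|/2$ grows linearly, and the pairing with $|D_{\surf}|^2\sim k^{-2}$ in each direction produces the logarithmic divergence established in \cite[Lemma~2]{Iacovelli2026MC}, which we simply invoke, together with \cite{Iacovelli2026VA} for the cell-averaged weights.

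The main obstacle is the rim term in (i): controlling the contribution of $|k_z|^{-1}$ near $\|\bk\|=\kappa$ against product sinc kernels. These are not radially symmetric and decay only like $1/k$ per axis. The plan is to use $|\bk-\bk_i|\ge\delta/2$ to bound the kernel by $D^{-1}$ times a function integrable against $|k_z|^{-1}$ in polar coordinates. This loses a logarithm, which accounts for the $\alpha<1/2$ rather than $\alpha=1/2$.
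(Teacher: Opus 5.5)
Your route is the paper's: the Dirichlet-kernel form of the Gram entry, the three-region argument of \cite[Lemma~2]{Iacovelli2026MC} rerun with the propagated weight for (i), dominated convergence driven by $e^{-|k_z|h}$ against the $1/k$ tails for (ii), and citation for (iii). One step, however, fails as written.

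In the third region of (i) you assert that for $h=0$ the evanescent contribution is ``convergent, though slowly decaying''. It is not. At $h=0$ the ring weight $|\hat x_p(\bk;0)|=\kappa Z_0|1-c_{\bk}^2|/(2|k_z|)$ grows like $\|\bk\|$ in every direction not orthogonal to $\bpb$. Along the Fej\'er strip parallel to such an aperture axis, the kernel integrated across the strip decays only like $D^{-1}k^{-2}$. The region-3 integral therefore behaves like $D^{-1}\int^{\infty}\dd k/k$ and is infinite at every finite aperture. That is exactly the logarithmic divergence you invoke in (iii), so your (i) and (iii) contradict each other at $h=0$.

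The repair is the paper's bookkeeping. In (i), run the three-region argument only on two pieces:
\begin{itemize}
\item $\hat r_p(\cdot\,;h)$, which vanishes off the disc and has an integrable $1/k_z$ rim;
\item the disc part $\propto\sin(k_zh)/k_z$ of $\hat x_p$, which is bounded at the rim.
\end{itemize}
Both work for every $h\ge0$. The ring part of $\hat x_p$ is left to (ii), which is available only for $h>0$.

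With that restriction, your argument is slightly sharper than the paper's for $h>0$. Use $|F_{ii'}|\le\tfrac12(F_{ii}+F_{i'i'})$, and outside the disc of radius $\kappa-\delta/2$ bound each diagonal kernel by $O(D^{-1}\delta^{-2})$ times a unit-mass Fej\'er strip. Along the lines of that strip, the integral of $|k_z|^{-1}$ over the disc is at most $\pi$. This gives the rim and evanescent contributions to visible-mode entries at rate $O(D_{\min}^{-1})$, with constants depending on $h$ and $\delta$; the paper's (ii) only asserts convergence there.

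The same estimate shows the rim costs no logarithm. So the restriction $\alpha<1/2$ comes from the off-diagonal tail in your second region, as you said there, and not from the rim, as your closing paragraph says.
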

\begin{proof}
See \apx{app:diag3d}{F}.
\end{proof}

In the limit the Gram blocks are therefore diagonal in the mode index, and we write
\begin{equation}
\bY^{(ll')}\to\bLam^{(ll')}\triangleq\diag\{\hat\zeta_p(\bk_i;|z_l-z_{l'}|)\},\label{eq:Lam}
\end{equation}
The distance between planes thus regularizes the rim for the reactive bookkeeping, as the element profile does in-plane: inter-plane reaction can be tracked mode by mode, and only the in-plane self-terms must be carried at port level \cite{Iacovelli2026MC}.

\begin{corollary}[Per-mode transmission line]\label{cor:tline}
By (\ref{eq:Lam}) the modal Gram $[\bY^{(ll')}]_{ii'}$ is diagonal in $(i,i')$, so it reorganizes into $N_{\Iset}$ blocks over plane indices, one per retained mode. Writing a mode index as a subscript and a plane pair as a superscript, (\ref{eq:zetahat3d}) gives
\begin{align}
[\bLam_i]_{ll'}&\triangleq[\bLam^{(ll')}]_{ii}=Z_s\delta_{ll'}+[\bar\bLam_i]_{ll'},\nonumber\\
[\bar\bLam_i]_{ll'}&=\frac{Z_{c,i}}{2}\,e^{\im k_{z,i}|z_l-z_{l'}|},\label{eq:tline}
\end{align}
with $Z_{c,i}=\kappa Z_0(1-c_{\bk_i}^2)/k_{z,i}$. The matrix $\bar\bLam_i$ is the mutual-impedance matrix of $L+1$ shunt current sources on a transmission line of propagation constant $k_{z,i}$ and characteristic impedance $Z_{c,i}$, the TE impedance $Z_0/\cos\theta_i$ in the H-plane and the TM impedance $Z_0\cos\theta_i$ in the E-plane, $\cos\theta_i=k_{z,i}/\kappa$, $Z_{c,i}/2$ being Wheeler's scan impedance \cite{Wheeler1965}, and $Z_s$ is the conduction loss of each sheet. A loaded layer is a shunt impedance on that line, and no layer acts alone within a mode.
\end{corollary}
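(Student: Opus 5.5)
The corollary is a bookkeeping consequence of Lemma~\ref{lem:diag3d} and Theorem~\ref{thm:dichotomy3d}, so the plan is to substitute (\ref{eq:zetahat3d}) into (\ref{eq:Lam}) and then identify the resulting plane-indexed matrices with a transmission-line mutual-impedance matrix.

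\emph{Step 1: regroup the Gram by mode.} By (\ref{eq:Lam}), every block $\bY^{(ll')}$ tends to a diagonal matrix in the mode index. Permuting the stacked coefficient vector $\bq$ from plane-major to mode-major order therefore turns the full modal Gram $\bY$ into a block-diagonal matrix $\blkdiag_i\{\bLam_i\}$. Each block $\bLam_i$ is $(L+1)\times(L+1)$ with entries $[\bLam_i]_{ll'}=\hat\zeta_p(\bk_i;|z_l-z_{l'}|)$. The complex power then splits as $P_{\rm cx}=\tfrac12\sum_i\bq_{(i)}^{\mathsf H}\bLam_i\bq_{(i)}$, where $\bq_{(i)}=[q_{0i},\ldots,q_{Li}]^{\mathsf T}$.

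\emph{Step 2: insert the spectrum.} Read off (\ref{eq:zetahat3d}) at $\bk=\bk_i$ and $h=|z_l-z_{l'}|$. The ohmic term $Z_s\ind{h=0}$ survives only for $l=l'$, which gives $Z_s\delta_{ll'}$. The remaining term is $\kappa Z_0(1-c_{\bk_i}^2)e^{\im k_{z,i}h}/(2k_{z,i})$, which is $\tfrac12 Z_{c,i}e^{\im k_{z,i}|z_l-z_{l'}|}$ by the definition of $Z_{c,i}$. This is exactly (\ref{eq:tline}). For evanescent modes, Lemma~\ref{lem:diag3d}(ii) licenses the diagonal limit when $h>0$. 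I would flag that the diagonal $l=l'$ entries of evanescent modes inherit the in-plane caveat of Lemma~\ref{lem:diag3d}(iii), and that these are carried at port level as the paper already states.

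\emph{Step 3: the transmission-line identification.} Take a uniform line along $z$ with propagation constant $\beta=k_{z,i}$ and characteristic impedance $Z_c$, and place a shunt current source $I$ at $z'$. The source sees two half-lines in parallel, so the voltage at $z'$ is $IZ_c/2$. The outgoing waves then give $V(z)=\tfrac{Z_c}{2}Ie^{\im\beta|z-z'|}$ under the $e^{+\im\kappa R}$ outgoing convention. The mutual impedance between sources at $z_l$ and $z_{l'}$ is therefore $\tfrac{Z_c}{2}e^{\im\beta|z_l-z_{l'}|}$, which matches $\bar\bLam_i$. For the polarization readings, put $\bk_i\parallel\bpb$ (E-plane): then $c_{\bk}=\sin\theta_i$, so $1-c^2=\cos^2\theta_i$ and $Z_{c,i}=\kappa Z_0\cos^2\theta_i/(\kappa\cos\theta_i)=Z_0\cos\theta_i$, the TM impedance. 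Put $\bk_i\perp\bpb$ (H-plane): then $c=0$ and $Z_{c,i}=Z_0/\cos\theta_i$, the TE impedance. The Wheeler scan-impedance remark follows because a phased infinite current sheet with phase gradient $\bk_i$ is exactly the source that excites mode $i$. The diagonal $Z_s$ is the sheet's own conduction loss, as in (\ref{eq:balance}).

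\emph{Step 4: loads.} Because the in-plane port-level terms and the network $\bM$ act through the per-plane maps $\bq_l=\bXi\bPhi_l\biota_l$, a loaded layer adds a term on the diagonal plane entry of the line, which is a shunt element. The off-diagonal entries $e^{\im k_z h}$ never vanish, so within a mode every layer is coupled to every other. The only delicate point is Step~1's use of the limit for the rim and evanescent diagonal entries. I would handle it by restricting the exact claim to visible modes and to off-diagonal plane pairs, and treating the remainder as the port-level correction already prescribed after (\ref{eq:Lam}).
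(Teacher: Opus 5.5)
Your proposal is correct and follows the paper's own argument, which the paper gives inline rather than in an appendix: (\ref{eq:tline}) is obtained exactly as you do, by reordering the diagonal limit (\ref{eq:Lam}) mode-major and reading $Z_s\delta_{ll'}+\tfrac12 Z_{c,i}e^{\im k_{z,i}|z_l-z_{l'}|}$ off (\ref{eq:zetahat3d}), and your half-line argument for a shunt current source and your E-/H-plane evaluations of $Z_{c,i}$ spell out the transmission-line and TE/TM identifications that the paper only asserts. The one point to tighten is Step~4: a loaded layer is a shunt element on the line of mode $i$ alone only if the load is spectrally local, i.e., $\bPhi_l\bM_{ll'}\bPhi_{l'}^{\mathsf H}$ is diagonal in the mode index, and it is a single shunt per node only if the load is also layer-diagonal; the paper states this condition right after the corollary, because, for instance, a D-RIS layer with non-uniform reactances mixes the modes.
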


Corollary~\ref{cor:tline} is the transverse-resonance picture of layered media, delivered as the wavenumber-domain face of the port model. Whether the loaded stack decouples depends on the network. Call it \emph{spectrally local} if all its blocks are diagonal in the wavenumber domain, $\bPhi_l\bM_{ll'}\bPhi_{l'}^{\mathsf H}=\diag_i\{\hat M_{ll',i}\}$, i.e., circulant on a port lattice common to the sheets. Mode $i$ then sees the $L\times L$ load $\hat\bM_i\triangleq[\hat M_{ll',i}]_{ll'}/\hat\xi^2(\bk_i)$, and its stack coefficients solve the modal image of (\ref{eq:induced}), $(\bar\bLam_i^{\rm ss}+Z_s\bI_L+\hat\bM_i)\bq_{{\rm s},(i)}=-\bar\bLam_i^{\rm s0}q_{0i}$, the superscripts selecting the sheet rows and the transmit column of $\bar\bLam_i$. A D-RIS layer, $\bM_{ll}=\im\diag\{X_{l,n}\}_n$, generally is not: $[\bPhi_l\bM_{ll}\bPhi_l^{\mathsf H}]_{ii'}$ is the Fourier coefficient of its reactance profile at $\bk_i-\bk_{i'}$, a convolution that mixes the modes unless all elements carry the same reactance. The simplest architecture in the wavenumber domain is thus the shift-invariant beyond-diagonal layer, a reversal of the element-domain ladder \cite{Li2022BDRIS,Nerini2024graph}.

\begin{corollary}[Ladder form of a spectrally local stack]\label{cor:ladder}
Let the load be spectrally local, so that mode $i$ sees the $L\times L$ load $\hat\bM_i$. Then $\bar\bLam_i^{\rm ss}$ is semiseparable and its inverse is tridiagonal: it is the nodal admittance matrix of the sheets as shunt nodes on the modal line, each coupled only to its neighbours through the section between them, of electrical length $\mu_l=k_{z,i}(z_{l+1}-z_l)$, and it exists unless a gap spans a multiple of half the axial wavelength, $\sin\mu_l=0$, which never happens for an evanescent mode. Multiplying the modal system by $(\bar\bLam_i^{\rm ss})^{-1}$ leaves $\big(\bI_L+(\bar\bLam_i^{\rm ss})^{-1}(Z_s\bI_L+\hat\bM_i)\big)\bq_{{\rm s},(i)}=-e^{\im k_{z,i}(z_1-z_0)}q_{0i}\,\be_1$, in which the transmitter enters the line at the first sheet only, delayed over the standoff $z_1-z_0$. If moreover the sheets are not interconnected, $\hat\bM_i$ is diagonal, the system is tridiagonal, and each mode costs $O(L)$ operations.
\end{corollary}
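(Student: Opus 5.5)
The plan is to exploit the one-sided exponential structure of the modal line matrix (\ref{eq:tline}) restricted to the sheets. For $l,l'\ge1$ write $[\bar\bLam_i^{\rm ss}]_{ll'}=\tfrac{Z_{c,i}}{2}e^{\im k_{z,i}|z_l-z_{l'}|}$. Since the planes are ordered, $|z_l-z_{l'}|=z_{\max(l,l')}-z_{\min(l,l')}$, so every entry factors as
\[
\tfrac{Z_{c,i}}{2}\,a_{\max(l,l')}\,b_{\min(l,l')},\qquad a_l=e^{\im k_{z,i}z_l},\quad b_l=e^{-\im k_{z,i}z_l}.
\]
This is exactly the definition of a (generator-representable) semiseparable matrix of Kac--Murdock--Szeg\H{o} type. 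I take $Z_{c,i}\ne0$, i.e., I exclude the grazing E-plane modes with $c_{\bk_i}^2=1$. I also use the adjacent ratios $\varrho_l\triangleq a_{l+1}b_l=e^{\im\mu_l}$, $l=1,\ldots,L-1$, since every entry is $\tfrac{Z_{c,i}}2$ times a product of consecutive $\varrho$'s.

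\textbf{Inverse.} I will write down the candidate inverse explicitly and verify it by multiplication; this is routine but is the step that must be done carefully. The candidate is the tridiagonal matrix $\tfrac{2}{Z_{c,i}}\bA_i$ with
\[
[\bA_i]_{l,l+1}=[\bA_i]_{l+1,l}=-\frac{\varrho_l}{1-\varrho_l^2},\qquad
[\bA_i]_{ll}=\frac{1}{1-\varrho_{l-1}^2}+\frac{\varrho_l^2}{1-\varrho_l^2},
\]
with the conventions $\varrho_0=\varrho_L=0$. The check that this inverts $\bar\bLam_i^{\rm ss}$ reduces, row by row, to the two-term recursion satisfied by the columns $e^{\im k_{z,i}|z_l-z_{l'}|}$ on either side of the diagonal; the boundary rows use the conventions.

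The key identity is $1-\varrho_l^2=-2\im e^{\im\mu_l}\sin\mu_l$. It shows that the inverse exists exactly when $\sin\mu_l\ne0$ for all $l$. Consistently, the semiseparable determinant is $(Z_{c,i}/2)^L\prod_l(1-\varrho_l^2)$, which vanishes only at those gaps. For an evanescent mode, $\mu_l=\im|k_{z,i}|(z_{l+1}-z_l)$ with a positive gap, so $\varrho_l\in(0,1)$ and the condition $\sin\mu_l\ne0$ always holds. The circuit reading follows directly: a two-port line section of electrical length $\mu_l$ and impedance $Z_{c,i}/2$ has nodal admittance entries $-\im\cot\mu_l$ and $\im\csc\mu_l$ up to scaling, and these match the entries of $\bA_i$. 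This identifies $(\bar\bLam_i^{\rm ss})^{-1}$ as the nodal admittance matrix of shunt nodes joined only to their neighbours.

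\textbf{Transmitter column.} Since $z_0<z_1\le z_l$, the column $\bar\bLam_i^{\rm s0}$ has entries $\tfrac{Z_{c,i}}2e^{\im k_{z,i}(z_l-z_0)}=e^{\im k_{z,i}(z_1-z_0)}[\bar\bLam_i^{\rm ss}]_{l1}$. Hence
\[
\bar\bLam_i^{\rm s0}=e^{\im k_{z,i}(z_1-z_0)}\,\bar\bLam_i^{\rm ss}\be_1,
\]
and multiplying the modal system $(\bar\bLam_i^{\rm ss}+Z_s\bI_L+\hat\bM_i)\bq_{{\rm s},(i)}=-\bar\bLam_i^{\rm s0}q_{0i}$ on the left by $(\bar\bLam_i^{\rm ss})^{-1}$ gives the stated form at once, with the transmitter entering at the first sheet only.

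\textbf{Tridiagonality and cost.} If $\hat\bM_i$ is diagonal, then $(\bar\bLam_i^{\rm ss})^{-1}(Z_s\bI_L+\hat\bM_i)$ is a tridiagonal matrix times a diagonal one, hence tridiagonal, and adding $\bI_L$ keeps it so. The Thomas algorithm, or equivalently the ladder recursion, then solves each mode in $O(L)$ operations. The only real obstacle is verifying the explicit inverse entries, which I would do by the column recursion rather than by cofactors.
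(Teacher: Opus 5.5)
Your proposal is correct and follows the paper's proof essentially step for step. It uses the same $\min/\max$ exponential factorization for semiseparability, and your $\varrho_l$-form inverse coincides exactly with the paper's $-\im/(Z_{c,i}\sin\mu_l)$ and $\im(\cot\mu_{l-1}+\cot\mu_l)/Z_{c,i}$, your convention $\varrho_0=\varrho_L=0$ playing the role of the paper's matched end sections. It then uses the same column identity $(\bar\bLam_i^{\rm ss})^{-1}\bar\bLam_i^{\rm s0}=e^{\im k_{z,i}(z_1-z_0)}\be_1$ and the same tridiagonal-times-diagonal argument for the $O(L)$ cost.

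There is one small slip in the circuit reading, and it does not affect the mathematical claims. The sections between sheets have characteristic impedance $Z_{c,i}$, not $Z_{c,i}/2$; the latter is the driving-point impedance of a node on the matched line. Under the paper's $e^{-\im2\pi ft}$ convention the section entries are $\im\cot\mu_l/Z_{c,i}$ and $-\im\csc\mu_l/Z_{c,i}$, which your ``up to scaling'' covers only loosely.
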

\begin{proof}
See \apx{app:ladder}{G}.
\end{proof}

\begin{corollary}[Site-invariant vertical networks are per-mode $L$-ports]\label{cor:vertical}
Let all sheets share a lattice matched to the modes, $\bPhi_l=\bPhi$ with $\bPhi\bPhi^{\mathsf H}=\bI_{N_{\Iset}}/A_{\rm c}$, $A_{\rm c}$ the cell area, and let the vertical network of Section~\ref{sec:loaded} be the same lossless $L$-port at every site, $\bM=\im\bX_{\rm v}\otimes\bI_{N_1}$ with $\bX_{\rm v}\in\mathbb S^{L}$. Then $\bPhi\bM_{ll'}\bPhi^{\mathsf H}=\im[\bX_{\rm v}]_{ll'}\bI_{N_{\Iset}}/A_{\rm c}$, so the network is spectrally local and every mode is terminated by the same $L$-port, $\hat\bM_i=\im\bX_{\rm v}/(A_{\rm c}\hat\xi^2(\bk_i))$, rather than by $L$ independent shunt reactances. A nearest-neighbour ladder, $\bX_{\rm v}$ tridiagonal with $2L-1$ reactances, keeps the system of Corollary~\ref{cor:ladder} banded and each mode at $O(L)$; the full $L$-port, $\bX_{\rm v}$ dense with $L(L+1)/2$, costs a coupler per column and $O(L^3)$ per mode, still below the $O(N_l^2)$ of one fully connected layer.
\end{corollary}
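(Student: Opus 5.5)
The proof is a direct block computation on the Kronecker structure of $\bM$. I will order the stack ports site-major-within-sheet, so that $\bM=\im\bX_{\rm v}\otimes\bI_{N_1}$ has blocks $\bM_{ll'}=\im[\bX_{\rm v}]_{ll'}\bI_{N_1}$. Here $N_l=N_1$ for all $l$, because the sheets share one lattice.

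The first step is the spectral-locality claim. Since every sheet uses the same codeword matrix $\bPhi_l=\bPhi$, we get $\bPhi\bM_{ll'}\bPhi^{\mathsf H}=\im[\bX_{\rm v}]_{ll'}\bPhi\bPhi^{\mathsf H}$. The hypothesis $\bPhi\bPhi^{\mathsf H}=\bI_{N_{\Iset}}/A_{\rm c}$ then gives the stated diagonal, indeed scalar, form. Every block is therefore diagonal in the mode index with $\hat M_{ll',i}=\im[\bX_{\rm v}]_{ll'}/A_{\rm c}$, and this is exactly the definition of spectral locality given before Corollary~\ref{cor:ladder}. Dividing by $\hat\xi^2(\bk_i)$, as in the definition of $\hat\bM_i$, yields $\hat\bM_i=\im\bX_{\rm v}/(A_{\rm c}\hat\xi^2(\bk_i))$. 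This is the same $L$-port for every mode up to the scalar taper, and it is dense whenever $\bX_{\rm v}$ is. It is therefore not $L$ independent shunts.

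The one point needing care is that the per-mode reduction actually holds here. I would justify it by mapping the port-domain system (\ref{eq:induced}) to the mode domain with $\bq_l=\bXi\bPhi\biota_l$ from (\ref{eq:qmap}). The field blocks become $\bLam^{(ll')}$ by Lemma~\ref{lem:diag3d} and (\ref{eq:Lam}), and the load becomes $\bXi^{-1}\bPhi\bM_{ll'}\bPhi^{\mathsf H}\bXi^{-1}$, with the factor $\bXi^{-1}$ accounting for the $\hat\xi^{-2}$. The dual-frame relation $\bPhi\bPhi^{\mathsf H}\propto\bI$ is the matching condition that makes this change of variables consistent. I expect this bookkeeping step, namely checking that $\bPhi^{\mathsf H}$ acts as a scaled inverse on the lattice so that the modal image of the terminal law is exactly $\hat\bM_i$, to be the main obstacle. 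The rest is algebra.

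Finally, the complexity claims. If $\bX_{\rm v}$ is tridiagonal, with $L$ diagonal and $L-1$ off-diagonal reactances, i.e.\ $2L-1$, then $\hat\bM_i$ is tridiagonal. In Corollary~\ref{cor:ladder} the matrix $(\bar\bLam_i^{\rm ss})^{-1}$ is tridiagonal, so multiplying the modal system by it leaves $(\bar\bLam_i^{\rm ss})^{-1}(Z_s\bI_L+\hat\bM_i)$ pentadiagonal. The system is thus banded with bounded bandwidth and solvable in $O(L)$ per mode. A dense symmetric $\bX_{\rm v}$ has $L(L+1)/2$ free entries, and the $L\times L$ dense solve costs $O(L^3)$ per mode. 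This is below the $O(N_l^2)$ entries of one fully connected layer whenever $L^3\ll N_l^2$, which holds in the regime of interest.
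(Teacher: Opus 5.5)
Your computation is the paper's own argument. The corollary has no proof beyond three steps: substitute $\bM_{ll'}=\im[\bX_{\rm v}]_{ll'}\bI_{N_1}$ into $\bPhi\bM_{ll'}\bPhi^{\mathsf H}$, use $\bPhi\bPhi^{\mathsf H}=\bI_{N_{\Iset}}/A_{\rm c}$, and read off $\hat\bM_i$ from the definition given before Corollary~\ref{cor:ladder}. (That block form puts the sheet index outer, which is what your block formula uses, despite the ``site-major'' wording.) Your reactance counts are also right, and so is your reading of ``banded'' as tridiagonal times tridiagonal, i.e.\ pentadiagonal.

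The step you single out as the main obstacle is not part of that argument. The paper takes the per-mode reduction as given by its definition of spectral locality. As you sketch it, that step is also wrong by a factor $A_{\rm c}^2$: the mapped load is not $\bXi^{-1}\bPhi\bM_{ll'}\bPhi^{\mathsf H}\bXi^{-1}$.

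The clean way to do it needs no inverse of $\bPhi$: project row $l$ of (\ref{eq:induced}) with $\bXi\bPhi$. With $\bZ_{ll'}=\bPhi^{\mathsf H}\bXi\bLam^{(ll')}\bXi\bPhi$ and $\bPhi\bPhi^{\mathsf H}=\bI_{N_{\Iset}}/A_{\rm c}$, the field terms become $A_{\rm c}^{-1}\bXi^2\bLam^{(ll')}\bq_{l'}$, and likewise for the transmit column. The load term becomes $\im[\bX_{\rm v}]_{ll'}\bq_{l'}$. Multiplying by $A_{\rm c}\bXi^{-2}$, mode $i$ is terminated by $\im A_{\rm c}\bX_{\rm v}/\hat\xi^2(\bk_i)$.

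That load is $A_{\rm c}^2$ times the stated $\hat\bM_i$, and it has the units of $\bLam_i$ (ohms). The stated $\im\bX_{\rm v}/(A_{\rm c}\hat\xi^2(\bk_i))$ instead carries an extra inverse area squared. The discrepancy is one positive constant, the same for every mode. So spectral locality, the fact that every mode sees the same $L$-port up to the taper, and all the complexity claims are unaffected. However, the verification you planned would expose the normalization of $\hat\bM_i$, inherited from its definition before Corollary~\ref{cor:ladder}, rather than confirm it.
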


The useful dimension of a layer is then its visible mode count rather than its port count, the low-rank mechanism of \cite{Iacovelli2026XLBD}, so on a dense fluid grid the loads can be optimized in the mode domain. For a network that is not spectrally local, instead, the modal image is full, and $\bT$ of (\ref{eq:induced}) is best formed in the port domain.

\subsection{Fluid ports on every plane}

The two codeword maps of \cite{Iacovelli2026MC} carry over plane by plane.

\begin{proposition}[Positioning on planes]\label{prop:codeword3d}
Let $\bPhi_l$ and $\bXi$ be as in Section~\ref{sec:modes3d}, and $\bLam^{(ll')}$ as in (\ref{eq:Lam}).

(i) The profile-smoothed channel on plane $l$ is the transposed codeword map, dual to (\ref{eq:qmap}),
\begin{equation}
\bg_{k,l}=\bPhi_l^{\mathsf T}\bXi\,\hat\bg_k^{(l)},\label{eq:codewordmaps}
\end{equation}
where $\hat\bg_k^{(l)}$ collects the Fourier coefficients of the user functional on that plane,
\begin{equation}
\hat g^{(l)}_{k,i}=\frac{1}{\sqrt{|\surf|}}\int_{\surf_l}\bg_k^{\mathsf H}([\bsb^{\mathsf T},z_l]^{\mathsf T})\,\bp\;e^{\im\bk_i\cdot\bsb}\,\dd\bsb,\label{eq:ghat}
\end{equation}
so that $[\bg_{k,l}]_n=\sum_i\hat g_{k,i}^{(l)}\hat\xi(\bk_i)[\bphi(\bu_{l,n})]_i$ is a superposition of codeword evaluations.

(ii) In the diagonal limit of Lemma~\ref{lem:diag3d} the complex power splits over modes,
\begin{equation}
P_{\rm cx}=\tfrac12\sum_{i}\bq_{(i)}^{\mathsf H}\bLam_i\bq_{(i)},\label{eq:permodepower}
\end{equation}
with $\bq_{(i)}\triangleq[q_{0i},\ldots,q_{Li}]^{\mathsf T}$ the mode-major reordering of $\bq$ and $\bLam_i$ the per-mode block (\ref{eq:tline}); substituting (\ref{eq:qmap}) and collecting the port indices returns $\tfrac12\biota^{\mathsf H}\bZ\biota$, with $\bZ_{ll'}=\bPhi_l^{\mathsf H}\bXi\bLam^{(ll')}\bXi\bPhi_{l'}$, that is, with $\btau$ and $h$ as in (\ref{eq:seps}), the lattice sum
\begin{equation}
[\bZ_{ll'}]_{nm}=\frac{1}{|\surf|}\sum_i\hat\xi^2(\bk_i)\,\hat\zeta_p(\bk_i;|h|)\,e^{\im\bk_i\cdot\btau},\label{eq:pullback3d}
\end{equation}
every block being the pullback of the propagated kernel, $[\bZ_{ll'}]_{nm}\to(\xi\star\zeta_p(\cdot\,;|h|)\star\xi)(\btau)$: on cells of area $4\pi^2/|\surf|$, (\ref{eq:pullback3d}) is a Riemann sum of the integral $(4\pi^2)^{-1}\!\int\hat\xi^2\hat\zeta_p\,e^{\im\bk\cdot\btau}\dd\bk$ of Theorem~\ref{thm:dichotomy3d}, which closes the loop on (\ref{eq:Zports}).
\end{proposition}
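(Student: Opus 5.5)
Both parts follow by substituting the modal expansion of Section~\ref{sec:modes3d} into the quantity and reading off the codeword matrices $\bPhi_l$, exactly as in the single-plane case of \cite{Iacovelli2026MC}. The only new ingredient is that the plane index enters through the propagator $e^{\im k_z h}$ carried by $\hat\zeta_p$ of Theorem~\ref{thm:dichotomy3d}.

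For (i), I would start from the definition $[\bg_{k,l}]_n=\int\bg_k^{\mathsf H}(\mathbf s)\bp\,\xi(\bsb-\bu_{l,n})\,d\bsb$ of Section~\ref{sec:effch}. The step is to expand the restriction of the scalar function $f_{k,l}(\bsb)\triangleq\bg_k^{\mathsf H}([\bsb^{\mathsf T},z_l]^{\mathsf T})\bp$ to $\surf$ in the Fourier series $f_{k,l}=\sum_i\hat g^{(l)}_{k,i}\,e^{-\im\bk_i\cdot\bsb}/\sqrt{|\surf|}$. By orthonormality of the $\phi_i$, its coefficients are exactly (\ref{eq:ghat}); I would take the $e^{-\im\bk_i\cdot\bsb}$ convention so that this matches the sign in (\ref{eq:ghat}). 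Inserting the series and changing variables $\bsb=\bu_{l,n}+\mathbf t$ gives, for each term, $|\surf|^{-1/2}e^{-\im\bk_i\cdot\bu_{l,n}}\int\xi(\mathbf t)e^{-\im\bk_i\cdot\mathbf t}d\mathbf t=[\bphi(\bu_{l,n})]_i\hat\xi(\bk_i)$. This is valid provided the support of the profile stays inside $\surf$, with the usual periodic-extension caveat at the edge, which I would state as an assumption. Stacking over $n$ yields $\bPhi_l^{\mathsf T}\bXi\hat\bg_k^{(l)}$. Duality with (\ref{eq:qmap}) is then the remark that the same matrix $\bXi\bPhi_l$ appears transposed rather than conjugate-transposed, because the channel functional is bilinear in the current.

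For (ii), I would write $P_{\rm cx}=\tfrac12\bq^{\mathsf H}\bY\bq=\tfrac12\sum_{l,l'}\bq_l^{\mathsf H}\bY^{(ll')}\bq_{l'}$ and replace each block by its diagonal limit $\bLam^{(ll')}$ from (\ref{eq:Lam}), invoking Lemma~\ref{lem:diag3d}. The double sum then collapses to $\sum_i\sum_{l,l'}q_{li}^*[\bLam_i]_{ll'}q_{l'i}$, which is (\ref{eq:permodepower}) after the mode-major reordering and the definition (\ref{eq:tline}). Substituting $\bq_l=\bXi\bPhi_l\biota_l$ and using that $\bXi$ is real, since $\xi$ is real and even, gives the blocks $\bZ_{ll'}=\bPhi_l^{\mathsf H}\bXi\bLam^{(ll')}\bXi\bPhi_{l'}$. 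The $(n,m)$ entry is $\sum_i|\surf|^{-1}e^{\im\bk_i\cdot\bu_{l,n}}\hat\xi^2\hat\zeta_p e^{-\im\bk_i\cdot\bu_{l',m}}$, which is (\ref{eq:pullback3d}) with $\btau$ as in (\ref{eq:seps}).

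The last step is the Riemann-sum limit. I would rewrite $|\surf|^{-1}=(4\pi^2)^{-1}\cdot(4\pi^2/|\surf|)$, identify the lattice cell area, and apply Theorem~\ref{thm:dichotomy3d} together with the convolution theorem: the transform of $\xi\star\zeta_p\star\xi$ is $\hat\xi^2\hat\zeta_p$ because $\xi$ is even. This reproduces (\ref{eq:Zports}).

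I expect the main obstacle to be justifying convergence of that Riemann sum. The integrand has the integrable $1/k_z$ singularity on the light circle, and for $h=0$ it has the slowly decaying reactive tail. I would handle it as follows. For $h>0$, the factor $e^{-|k_z|h}$ together with $\hat\xi^2$ makes the integrand absolutely integrable off the rim. Near the rim I would excise a shrinking annulus whose contribution vanishes, since $\int dk/\sqrt{|\kappa^2-k^2|}$ is locally finite, and bound the lattice sum on it by the same integral. For $h=0$, the decay of $\hat\xi^2$ alone controls the tail, in the same way as the single-plane argument of \cite{Iacovelli2026MC}, which I would cite.
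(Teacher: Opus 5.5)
Your computation for (i) and the algebra of (ii) are correct. They are the plane-by-plane version of the single-plane codeword theorem of \cite{Iacovelli2026MC}, to which the paper's proof simply delegates. The sign you choose for the Fourier series matches (\ref{eq:ghat}), the shift produces $[\bphi(\bu_{l,n})]_i\hat\xi(\bk_i)$, the reality of $\bXi$ gives $\bZ_{ll'}=\bPhi_l^{\mathsf H}\bXi\bLam^{(ll')}\bXi\bPhi_{l'}$, and its entries are (\ref{eq:pullback3d}).

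The step that fails is the one you single out as the obstacle. Excising an annulus of width $\delta$ around the light circle makes the \emph{integral} over it $O(\sqrt\delta)$. The \emph{lattice sum} over that annulus, however, is not bounded by this integral, because (\ref{eq:pullback3d}) evaluates $(1-c_{\bk}^2)/(2k_z)$ at points. A mode at distance $d$ from the circle contributes a term of size $\propto|\surf|^{-1}(\kappa d)^{-1/2}$, and no annulus integral controls this as $d\to0$. Concretely, take a square aperture with $\kappa D/(2\pi)\in\mathbb Z$. The modes $(\kappa,0)$ and $(0,\kappa)$ lie on the circle, $1-c_{\bk}^2$ vanishes at most at one of them, and the sum has an infinite term; for apertures close to such values it is arbitrarily large. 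Integrability of $1/k_z$ therefore does not give convergence of the point-sampled Riemann sum along every sequence $\kappa D_x,\kappa D_y\to\infty$.

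The repair is what the paper does when it evaluates these sums, noting that the singularity is integrable but not samplable: replace the value on every cell cut by the light circle by its cell average \cite{Iacovelli2026VA}.
\begin{itemize}
\item The cut cells then contribute $(4\pi^2)^{-1}\sum_i\hat\xi^2(\bk_i)e^{\im\bk_i\cdot\btau}\int_{\mathrm{cell}_i}\hat\zeta_p$. This is bounded by $\sup\hat\xi^2$ times the integral of $|\hat\zeta_p|$ over an annulus one cell diameter wider.
\item Every remaining mode in the annulus lies at least half a cell width from the circle. There the point value of $|k_z|^{-1}$ is at most a constant times its cell average, because the distance to the circle anywhere in such a cell is at most a fixed multiple of that of its center ($1+\sqrt2$ for square cells).
\end{itemize}
With that, your bound of the lattice sum by the annulus integral holds and the excision argument closes. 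Alternatively, keep point samples and restrict the limit to aperture sequences with $|\surf|^{-1}\sum_{\mathrm{rim}}|k_{z,i}|^{-1}\to0$. Your treatment of the tails, by $e^{-|k_z|h}$ for $h>0$ and by the decay of $\hat\xi^2$ for $h=0$, needs no change.
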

\begin{proof}
(i) and (ii) follow from \cite[Thm.~2]{Iacovelli2026MC} applied plane by plane, with Lemma~\ref{lem:diag3d} in place of \cite[Lemma~2]{Iacovelli2026MC}. The convergence of the lattice sum for $h>0$ holds for the reactive part as well, by Lemma~\ref{lem:diag3d}(ii).
\end{proof}

The dictionary of the stack is thus complete: positioning is modulation, layering is propagation, and interconnection is a load, $\bPhi_l\bM_{ll'}\bPhi_{l'}^{\mathsf H}$ in the mode basis; movable and fluid layers are the same object over two position sets \cite[Remark~4]{Iacovelli2026MC}.

\section{Coupling-Aware Joint Design}\label{sec:design}

\subsection{Problem}

With user weights $\alpha_k\ge0$, taking as the utility the weighted sum \emph{spectral efficiency} in bit/s/Hz rather than a rate in bit/s,\footnote{The model is monochromatic: it does not price stored energy, hence not the bandwidth over which a fixed network delivers the designed excitation, which the stored energy fixes through the antenna $Q$ \cite{Yaghjian2005Q}. A rate in bit/s would therefore claim a bandwidth the model does not certify.} and taking the loads lossless for concreteness ($\bM=\im\bX$; the lossy case only adds the fixed $\bR_M$ inside $\bT$), the master problem over $\Utt$, $\bX$, and $\bW$ collects the constraints of Section~\ref{sec:loaded}:
\begin{align}
(\mathrm{P1}):\ \max_{\Utt,\bX,\bW}\ &\sum_{k=1}^K\alpha_k\log_2\big(1+\Gamma_k(\Utt,\bX,\bW)\big)\label{eq:P1}\\
\text{s.t.}\ \ \mathrm{C1}:&\ \textstyle\sum_k\bw_k^{\mathsf H}\bC_{\rm in}(\Utt,\bX)\bw_k\le P,\nonumber\\
\mathrm{C2}:&\ \textstyle\sum_k|\be_n^{\mathsf T}\bar\bZ_{\rm in}\bw_k|^2\le\bar V_{\max}^2,\ n=1,\ldots,N_0,\nonumber\\
\mathrm{C3}:&\ \textstyle\sum_k|\be_n^{\mathsf T}\bB\bw_k|^2\le\bar I_{\max}^2,\ n=1,\ldots,N_{\rm s},\nonumber\\
\mathrm{C4}:&\ \bX\in\netset,\nonumber\\
\mathrm{C5}:&\ \bu_{l,n}\in\surf_l,\ \|\bu_{l,n}-\bu_{l,m}\|\ge d_{\min},\ n\ne m.\nonumber
\end{align}
C1--C3 are the constraints of Section~\ref{sec:constraints}, C4 is the interconnection pattern, and C5 keeps every port in its aperture and at least $d_{\min}$ from the other ports of its plane. Coupling shapes both the feasible set and the effective channel, and both move with the positions and the loads. (P1) is non-convex, and we alternate over three blocks whose ingredients are all closed-form.

\subsection{Precoder block}

Fix $\Utt$ and $\bX$. By Lemma~\ref{lem:passivity}, $\bC_{\rm in}\succ0$; whitening $\tilde\bw_k=\bC_{\rm in}^{1/2}\bw_k$, $\tilde\bh_k=\bC_{\rm in}^{-1/2}\tilde\bg_k^*$ leaves $\tilde\bg_k^{\mathsf T}\bw_{k'}=\tilde\bh_k^{\mathsf H}\tilde\bw_{k'}$ invariant and reduces C1 to $\sum_k\|\tilde\bw_k\|^2\le P$. The block is then the canonical MU-MISO weighted sum spectral efficiency problem, solved by the WMMSE iterations of \cite{Shi2011WMMSE} exactly as in \cite[Sec.~IV-A]{Iacovelli2026MC}, with C2 and C3, both second-order cone constraints in $\bW$, enforced by the penalty $\sum_k\big(\sum_n\varsigma_n|\be_n^{\mathsf T}\bar\bZ_{\rm in}\bw_k|^2+\sum_n\varpi_n|\be_n^{\mathsf T}\bB\bw_k|^2\big)$ and projected-subgradient multiplier updates. Each pass is monotone in the weighted sum spectral efficiency.

\subsection{Load block: projected gradient on the interconnection pattern}

Fix $\Utt$ and $\bW$. The rate depends on $\bX$ only through $\bT$ in (\ref{eq:effchannel}), (\ref{eq:Zin}), and (\ref{eq:induced}). Let $\ba_k\triangleq\bT\bg_{k,\rm s}$ and $\bB=\bT\bZ_{\rm s0}$; since $\bZ_{\rm ss}+\im\bX$ is symmetric, so is $\bT$.
\begin{lemma}[Load gradients]\label{lem:loadgrad}
For the symmetric entry $X_{nm}$ ($n\le m$, both $(n,m)$ and $(m,n)$ moving), with $\bE_{nm}=\be_n\be_m^{\mathsf T}+\be_m\be_n^{\mathsf T}$ for $n\ne m$ and $\bE_{nn}=\be_n\be_n^{\mathsf T}$,
\begin{align}
\frac{\partial\tilde\bg_k^{\mathsf T}}{\partial X_{nm}}&=\im\,\ba_k^{\mathsf T}\bE_{nm}\bB=\im\big(a_{k,n}\bB_{m,:}+a_{k,m}\bB_{n,:}\big),\label{eq:dgdX}\\
\frac{\partial\bZ_{\rm in}}{\partial X_{nm}}&=\im\,\bB^{\mathsf T}\bE_{nm}\bB,\qquad
\frac{\partial\bB}{\partial X_{nm}}=-\im\,\bT\bE_{nm}\bB,\label{eq:dZindX}
\end{align}
(for $n=m$ drop the second term in (\ref{eq:dgdX})). The gradient of the weighted sum spectral efficiency follows by the chain rule through (\ref{eq:sinr}) with the rate gradient of \cite[Lemma~3(ii)]{Iacovelli2026MC} evaluated on $\tilde\bg_k$, and the gradients of the penalties of C1-C3 through (\ref{eq:dZindX}).
\end{lemma}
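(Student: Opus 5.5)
The plan is to differentiate the resolvent $\bT=(\bZ_{\rm ss}+\im\bX)^{-1}$ once and push that derivative through the three closed-form expressions (\ref{eq:effchannel}), (\ref{eq:Zin}) and (\ref{eq:induced}). $\bZ_{\rm ss}$, $\bZ_{\rm s0}$, $\bZ_{00}$ and $\bg_{k,\rm s}$ depend only on $\Utt$, which is fixed in this block, so $\bT$ is the only factor that moves with $X_{nm}$. Lemma~\ref{lem:passivity} guarantees that $\bZ_{\rm ss}+\im\bX$ is invertible for every $\bX$, so $\bT$ is a smooth (rational) function of $\bX$ and the derivatives exist everywhere on $\netset$.

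The first step is the parameterization. The symmetric entry $X_{nm}$ moves both $(n,m)$ and $(m,n)$, so $\partial\bX/\partial X_{nm}=\bE_{nm}$, with the stated convention for $n=m$. The standard identity $\partial\bA^{-1}=-\bA^{-1}(\partial\bA)\bA^{-1}$ then gives
\[
\partial\bT/\partial X_{nm}=-\im\,\bT\bE_{nm}\bT .
\]
Right-multiplying by the fixed $\bZ_{\rm s0}$ yields $\partial\bB/\partial X_{nm}=-\im\bT\bE_{nm}\bB$.

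The second step is the input impedance. From (\ref{eq:Zin}),
\[
\partial\bZ_{\rm in}/\partial X_{nm}=\im\,\bZ_{0\rm s}\bT\bE_{nm}\bT\bZ_{\rm s0}.
\]
By Theorem~\ref{thm:cfe3d}(iii), $\bZ_{0\rm s}=\bZ_{\rm s0}^{\mathsf T}$, and since $\bT$ is symmetric, $\bZ_{0\rm s}\bT=(\bT\bZ_{\rm s0})^{\mathsf T}=\bB^{\mathsf T}$. This gives $\im\bB^{\mathsf T}\bE_{nm}\bB$. Using the symmetry of $\bT$ at exactly this point is the only place where some care is needed.

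The third step is the channel. From (\ref{eq:effchannel}), $\partial\tilde\bg_k^{\mathsf T}=\im\,\bg_{k,\rm s}^{\mathsf T}\bT\bE_{nm}\bT\bZ_{\rm s0}$. Again by the symmetry of $\bT$, $\bg_{k,\rm s}^{\mathsf T}\bT=(\bT\bg_{k,\rm s})^{\mathsf T}=\ba_k^{\mathsf T}$, so the derivative is $\im\ba_k^{\mathsf T}\bE_{nm}\bB$. Expanding $\bE_{nm}=\be_n\be_m^{\mathsf T}+\be_m\be_n^{\mathsf T}$ gives the row form $\im(a_{k,n}\bB_{m,:}+a_{k,m}\bB_{n,:})$. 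For $n=m$ only one outer product is present, so the second term is dropped.

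The fourth step is the chain rule. The SINR (\ref{eq:sinr}) depends on $\bX$ only through $\tilde\bg_k$, with $\bW$ fixed. Its gradient is therefore the rate gradient of \cite[Lemma~3(ii)]{Iacovelli2026MC} evaluated at $\tilde\bg_k$ and contracted with (\ref{eq:dgdX}).

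The penalty terms are quadratic forms in $\bW$ and so can be handled the same way. The C2 penalty uses $\bar\bZ_{\rm in}$, whose derivative is (\ref{eq:dZindX}) divided by $R_r$. The C3 penalty uses $\partial\bB$. C1 involves $\bC_{\rm in}=\Real\{\bZ_{\rm in}\}/R_r$, and since $X_{nm}$ is real, its derivative is $\Real\{\partial\bZ_{\rm in}\}/R_r$.

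The real-variable derivative of a real objective $f(\bz)$ with complex argument is taken as $2\Real\{(\partial f/\partial\bz)\,\partial\bz/\partial X_{nm}\}$ in Wirtinger form. No step is a real obstacle; the only thing that needs care is keeping the symmetry conventions consistent (the factor $\bE_{nm}$ versus a single outer product, and transposes rather than conjugate transposes).
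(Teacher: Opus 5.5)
Your proposal is correct and follows the paper's own proof. Like the paper, you differentiate the resolvent, $\partial\bT=-\im\,\bT\bE_{nm}\bT$, substitute into (\ref{eq:effchannel}), (\ref{eq:Zin}) and (\ref{eq:induced}), and use the symmetry of $\bT$ (with $\bZ_{0\rm s}=\bZ_{\rm s0}^{\mathsf T}$) to collapse the products into $\ba_k^{\mathsf T}$ and $\bB^{\mathsf T}$; your treatment of the chain rule and the C1--C3 penalties only spells out what the paper leaves implicit.
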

\begin{proof}
$\partial\bT=-\bT(\im\partial\bX)\bT$ with $\partial\bX=\bE_{nm}$; substituting in (\ref{eq:effchannel}), (\ref{eq:Zin}), (\ref{eq:induced}) and using the symmetry of $\bT$ gives the claims.
\end{proof}
The pattern $\netset$ is a linear subspace of $\mathbb S^{N_{\rm s}}$, whatever its layer structure, so its projection $\Pi_{\netset}$ is a mask, and the block update is $\bX^+=\Pi_{\netset}\big(\bX+\gamma_{\bX}\nabla_{\bX}\lagr\big)$. For $\lagr$ we take the objective evaluated with $\bW$ rescaled to C1 at every $\bX$, which is what is scored. At a feasible point its gradient is that of the partial Lagrangian of (P1) with multiplier $S/(2P)$, where $S=2\sum_k\Real\{\tilde\bg_k^{\mathsf T}\bm\zeta_k\}$ is the sensitivity of the objective to a common scaling of the precoders and $\bm\zeta_k$ are the rate-gradient vectors of \cite[Lemma~3(ii)]{Iacovelli2026MC}, so the C1 term enters through (\ref{eq:dZindX}); the step $\gamma_{\bX}>0$ is set by Armijo backtracking on the same function. After one $O(N_{\rm s}^3)$ inversion all $\dim\netset$ derivatives cost $O(KN_{\rm s}^2)$, being assembled once from $\ba_k$ and $\bB$ and then masked, so the cost is the same for every pattern; for a diagonal layer the step is a per-element reactance update, the physically consistent replacement of the per-element phase. For a single user and one fully connected reflective layer global closed forms under coupling exist \cite{Nerini2024closedform} and benchmark the block; the projected gradient trades that global optimality for multiple users, layers, and sparse patterns.

\subsection{Position block: closed-form 3-D gradients}

Fix $\bX$ and $\bW$. Ports move in their planes, so gradients are with respect to $\bu_{l,n}\in\mathbb R^2$, but the kernel they differentiate is three-dimensional.
\begin{lemma}[3-D kernel gradient]\label{lem:kernelgrad}
For two ports with planar separation $\btau$ and axial separation $h$, $R=\sqrt{\|\btau\|^2+h^2}$, $x=\kappa R$, $c=\bpb^{\mathsf T}\btau/R$,
\begin{equation}
\nabla_{\btau}\vartheta(x,c)=\vartheta'(x,c)\,\frac{\kappa\btau}{R}+3c\,h_2(x)\,\frac{\bpb-c\,\btau/R}{R},\label{eq:gradtheta3d}
\end{equation}
with $\vartheta'(x,c)=\tfrac32\big[-h_1-\tfrac{h_0}{x}+\tfrac{3h_1}{x^2}+c^2(h_1-\tfrac{3h_2}{x})\big]$ the $x$-derivative at fixed $c$; at $h=0$, (\ref{eq:gradtheta3d}) reduces to \cite[Eq.~(47)]{Iacovelli2026MC}. The substitution $h_\ell\mapsto j_\ell$ gives $\nabla\rho$, and for a profile of finite size the gradient of (\ref{eq:Zports}) replaces (\ref{eq:gradtheta3d}). Across planes the gradient is bounded, $\|\nabla\vartheta\|\le C(\kappa h)$, since $R\ge h$.
\end{lemma}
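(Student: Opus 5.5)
The plan is to treat $\vartheta$ as a function of the two scalar variables $x=\kappa R$ and $c=\bpb^{\mathsf T}\btau/R$, and to differentiate through both by the chain rule, holding $h$ fixed because ports move only within their own planes. The first step is to compute the gradients of the two inner variables with respect to the planar offset. Since $R^2=\|\btau\|^2+h^2$, one has $\nabla_{\btau}R=\btau/R$, so $\nabla_{\btau}x=\kappa\btau/R$. Next, $\nabla_{\btau}c=\bpb/R-(\bpb^{\mathsf T}\btau)\btau/R^3=(\bpb-c\,\btau/R)/R$. The axial offset enters only through $R$, so these formulas differ from the planar case of \cite[Eq.~(47)]{Iacovelli2026MC} only in that $R$ replaces $\|\btau\|$. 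Setting $h=0$ therefore recovers that equation immediately.

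The second step is to compute the partial derivatives of (\ref{eq:theta}). The $c$-derivative is straightforward: $\partial_c\vartheta=\tfrac32\cdot2c\,h_2(x)=3c\,h_2(x)$. Multiplying by $\nabla_{\btau}c$ produces the second term of (\ref{eq:gradtheta3d}). For the $x$-derivative I will use the standard recurrences for spherical Hankel functions, which apply equally to the $j_\ell$ and $y_\ell$ families:
\[
h_0'=-h_1,\qquad h_1'=h_0-\tfrac{2}{x}h_1,\qquad h_2'=h_1-\tfrac{3}{x}h_2.
\]
With these, $\tfrac{\dd}{\dd x}(h_1/x)=h_0/x-3h_1/x^2$. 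Combining the terms gives
\[
\partial_x\vartheta=\tfrac32\big[-h_1-\tfrac{h_0}{x}+\tfrac{3h_1}{x^2}+c^2\big(h_1-\tfrac{3h_2}{x}\big)\big]=\vartheta'(x,c),
\]
and multiplying by $\nabla_{\btau}x$ yields the first term. Because the recurrences are linear and hold for $j_\ell$ and $y_\ell$ separately, replacing $h_\ell$ by $j_\ell$ gives $\nabla\rho$, and taking the $y_\ell$ part gives $\nabla\chi$.

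The third step handles the finite-profile statement. Since (\ref{eq:Zports}) is a planar correlation $\xi\star\zeta_p(\cdot\,;h)\star\xi$, its $\btau$-gradient is the correlation of the profiles with $\nabla_{\btau}\zeta_p$. Equivalently, it is the derivative of the smoothed kernel. For $h>0$, differentiation under the integral is justified because the integrand is smooth.

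The last step is the boundedness claim, which I expect to need the most care, although it is not deep. Across planes $R\ge h>0$, so $x\ge\kappa h$. Each $h_\ell(x)$ is continuous on $[\kappa h,\infty)$ and decays like $1/x$, so $|h_\ell(x)|$ and $|h_\ell(x)|/x^k$ are bounded by constants that depend only on $\kappa h$. The geometric factors are bounded too: $\|\kappa\btau/R\|\le\kappa$, $|c|\le1$, and $\|\bpb-c\btau/R\|/R\le2/h$. Collecting these gives $\|\nabla\vartheta\|\le C(\kappa h)$, with $C$ growing like $(\kappa h)^{-4}$ as $h\to0$ because of the $h_2/x$ and $h_1/x^2$ terms. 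The main thing to get right is this uniformity in $\btau$ as $\btau$ ranges over the whole plane, and the decay of $h_\ell$ at large $x$ settles it.
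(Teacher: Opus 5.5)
Your proposal is correct and follows the paper's own proof, which is the chain rule through $x$ and $c$ with $\nabla_{\btau}x=\kappa\btau/R$, $\nabla_{\btau}c=(\bpb-c\,\btau/R)/R$, and $\partial_c\vartheta=3c\,h_2$. You also check $\vartheta'$ from the recurrences $h_0'=-h_1$, $h_1'=h_0-2h_1/x$, $h_2'=h_1-3h_2/x$, and you make the bound explicit through $R\ge h$; the paper leaves both implicit, and your constant carries an overall factor $\kappa$, behaving as $\kappa(\kappa h)^{-4}$ when $h\to0$.
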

\begin{proof}
$\nabla_{\btau}x=\kappa\btau/R$ and $\nabla_{\btau}c=(\bpb-c\,\btau/R)/R$ by direct differentiation of $c=\bpb^{\mathsf T}\btau/\sqrt{\|\btau\|^2+h^2}$; the chain rule with $\partial_c\vartheta=3c\,h_2$ gives (\ref{eq:gradtheta3d}).
\end{proof}

\begin{lemma}[Position gradients through the stack]\label{lem:posgrad}
Let $\partial$ denote the derivative with respect to one coordinate of one port position. With $\ba_k=\bT\bg_{k,\rm s}$ and $\bB=\bT\bZ_{\rm s0}$,
\begin{align}
\partial\tilde\bg_k^{\mathsf T}&=\partial\bg_{k,0}^{\mathsf T}-\partial\bg_{k,\rm s}^{\mathsf T}\bB\nonumber\\
&\quad+\ba_k^{\mathsf T}(\partial\bZ_{\rm ss})\bB-\ba_k^{\mathsf T}\partial\bZ_{\rm s0},\label{eq:dgdu}\\
\partial\bZ_{\rm in}&=\partial\bZ_{00}-\partial\bZ_{0\rm s}\bB\nonumber\\
&\quad-\bB^{\mathsf T}\partial\bZ_{\rm s0}+\bB^{\mathsf T}(\partial\bZ_{\rm ss})\bB,\label{eq:dZindu}
\end{align}
where, if the moving port is the transmit port $n$, only column $n$ of $\bZ_{\rm s0}$, row and column $n$ of $\bZ_{00}$, and entry $n$ of $\bg_{k,0}$ are nonzero in the derivatives. If it is a stack port $n$, only row $n$ of $\bZ_{\rm s0}$, row and column $n$ of $\bZ_{\rm ss}$, and entry $n$ of $\bg_{k,\rm s}$ are. Every nonzero entry is $R_r$ times (\ref{eq:gradtheta3d}) evaluated at the corresponding 3-D separation, and the channel entries follow \cite[Eq.~(51)]{Iacovelli2026MC} with $\bm\tau=\br_k-\br_{l,n}$ three-dimensional.
\end{lemma}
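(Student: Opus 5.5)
The plan is to differentiate the two closed forms (\ref{eq:effchannel}) and (\ref{eq:Zin}) directly, treating every factor that depends on the moving coordinate as a perturbation. The only nontrivial ingredient is the derivative of the inverse. The load $\bM=\im\bX$ is fixed in this block, so $\partial(\bZ_{\rm ss}+\bM)=\partial\bZ_{\rm ss}$. I will use the identity $\partial\bT=-\bT(\partial\bZ_{\rm ss})\bT$, obtained by differentiating $\bT(\bZ_{\rm ss}+\bM)=\bI$; this is justified because $\bT$ exists for every geometry by Lemma~\ref{lem:passivity}. I will also use that $\bT$ is complex symmetric, since $\bZ_{\rm ss}$ is symmetric by Theorem~\ref{thm:cfe3d}(iii) and $\bM$ is symmetric by reciprocity.

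\textbf{Channel.} The product rule on $\tilde\bg_k^{\mathsf T}=\bg_{k,0}^{\mathsf T}-\bg_{k,\rm s}^{\mathsf T}\bT\bZ_{\rm s0}$ gives three terms:
\[
\partial\bg_{k,0}^{\mathsf T}-\partial\bg_{k,\rm s}^{\mathsf T}\bT\bZ_{\rm s0}+\bg_{k,\rm s}^{\mathsf T}\bT(\partial\bZ_{\rm ss})\bT\bZ_{\rm s0}-\bg_{k,\rm s}^{\mathsf T}\bT\,\partial\bZ_{\rm s0}.
\]
Symmetry of $\bT$ yields $\bg_{k,\rm s}^{\mathsf T}\bT=(\bT\bg_{k,\rm s})^{\mathsf T}=\ba_k^{\mathsf T}$. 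Together with $\bT\bZ_{\rm s0}=\bB$, this gives (\ref{eq:dgdu}).

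\textbf{Input impedance.} Applying the same rule to $\bZ_{00}-\bZ_{0\rm s}\bT\bZ_{\rm s0}$ produces $\partial\bZ_{00}-\partial\bZ_{0\rm s}\bT\bZ_{\rm s0}+\bZ_{0\rm s}\bT(\partial\bZ_{\rm ss})\bT\bZ_{\rm s0}-\bZ_{0\rm s}\bT\partial\bZ_{\rm s0}$. Since $\bZ_{0\rm s}=\bZ_{\rm s0}^{\mathsf T}$ by Theorem~\ref{thm:cfe3d}(iii), we get $\bZ_{0\rm s}\bT=(\bT\bZ_{\rm s0})^{\mathsf T}=\bB^{\mathsf T}$, which gives (\ref{eq:dZindu}). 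No step beyond linear algebra is needed here.

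\textbf{Sparsity pattern.} The entry $[\bZ]_{(l,n),(l',m)}$ depends only on $\bu_{l,n}-\bu_{l',m}$ and on $h$, which is fixed because ports move in their planes. Hence moving port $(l,n)$ changes only row and column $(l,n)$ of $\bZ$, and $[\bg_{k,l}]_n$ depends only on $\bu_{l,n}$. Partitioning $\bZ$ into the $0$/s blocks gives the stated sparsity. For a transmit port, $\bZ_{\rm ss}$ is untouched, so the $\partial\bZ_{\rm ss}$ terms vanish. For a stack port, $\bZ_{00}$ and $\bg_{k,0}$ are untouched. The self-entries are constant, as in (\ref{eq:Znorm}), so the diagonal of $\partial\bZ$ is zero.

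\textbf{Nonzero entries.} By (\ref{eq:Znorm}) each off-diagonal entry is $R_r\vartheta(\kappa R,c)$ with $\btau=\pm(\bu_{l,n}-\bu_{l',m})$. Its derivative along a coordinate of $\bu_{l,n}$ is therefore $R_r$ times the matching component of (\ref{eq:gradtheta3d}), evaluated at the three-dimensional separation. The sign flips for the partner index, but $\vartheta$ is even in $\btau$ (through $c^2$ and $R$), so the row and column entries agree. For finite-size profiles one uses the gradient of (\ref{eq:Zports}) instead. The channel entries are obtained by differentiating $\bg_k^{\mathsf H}(\mathbf s)$ with respect to the source point, exactly as in \cite[Eq.~(51)]{Iacovelli2026MC}, with the three-dimensional $\bm\tau=\br_k-\br_{l,n}$.

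\textbf{Main obstacle.} The main risk is bookkeeping rather than analysis. I must use the transposes $\bZ_{0\rm s}=\bZ_{\rm s0}^{\mathsf T}$ and $\bT=\bT^{\mathsf T}$ rather than conjugate transposes; this is where the complex symmetry, not Hermitian symmetry, of Theorem~\ref{thm:cfe3d} is essential. I must also check that the self-terms, including the ohmic and $X_A$ parts, carry no position dependence on any plane.
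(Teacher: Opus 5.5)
Your proposal is correct and follows the paper's own proof: you differentiate (\ref{eq:effchannel}) and (\ref{eq:Zin}) with $\partial\bT=-\bT(\partial\bZ_{\rm ss})\bT$ and use the complex symmetry of $\bT$. You also make explicit two points the paper leaves implicit: the relation $\bZ_{0\rm s}=\bZ_{\rm s0}^{\mathsf T}$ behind the $\bB^{\mathsf T}$ factors, and the translation-invariance and evenness arguments behind the stated sparsity pattern and entries.
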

\begin{proof}
Differentiate (\ref{eq:effchannel}) and (\ref{eq:Zin}) using $\partial\bT=-\bT(\partial\bZ_{\rm ss})\bT$ and the symmetry of $\bT$.
\end{proof}

The position update is $\bu_{l,n}^+=\Pi_{\surf_l}\big(\bu_{l,n}+\gamma_{\Utt}\nabla_{\bu_{l,n}}\lagr\big)$ on every plane with a pairwise push-out for C5, on the same rescaled objective and with the same Armijo rule. The sparsity of Lemma~\ref{lem:posgrad} makes one port's gradient cost $O(KN_{\rm s}+N_{\rm s})$ after the inversion, so a full sweep over all ports costs $O(KN_{\rm s}^2)$, the same order as the load block.

\subsection{Certificates, initialization, algorithm, and complexity}\label{sec:cert}

The certificate relaxes the transmit plane, i.e., the codeword map (\ref{eq:qmap}). Under stream $k$ that plane carries $\biota_0=\bw_k$, hence the modal coefficients $\tilde\bq_k\triangleq\bq_0=\bF\bw_k$ with the transmit codeword map $\bF\triangleq\bXi\bPhi_0$, and the holographic upper bound (HUB) of \cite[Sec.~IV-C]{Iacovelli2026MC} lets $\tilde\bq_k$ range over all of $\mathbb C^{N_{\Iset}}$ rather than over the codeword-generated set $\{\bF\bw:\bw\in\mathbb C^{N_0}\}$ a port configuration can reach. The stack is not relaxed: it enters through the two mixed blocks obtained by stripping $\bF$ off the couplings that leave and reach the transmit plane,
\begin{equation}
\bZ^{\rm m}_{\rm s0}\triangleq\bPhi_{\rm s}^{\mathsf H}\bXi_{\rm s}\bLam^{(\rm s0)},\quad \bZ^{\rm m}_{0\rm s}\triangleq\bLam^{(0\rm s)}\bXi_{\rm s}\bPhi_{\rm s},\quad \bB^{\rm m}\triangleq\bT\bZ^{\rm m}_{\rm s0},\label{eq:mixedblock}
\end{equation}
with $\bPhi_{\rm s}\triangleq\blkdiag(\bPhi_1,\ldots,\bPhi_L)$, $\bXi_{\rm s}\triangleq\bI_L\otimes\bXi$, and $\bLam^{(\rm s0)}$ the column of blocks $\bLam^{(l0)}$.

\begin{proposition}[Holographic bound in front of a stack]\label{prop:hub}
Read on the modes, the induced currents of (\ref{eq:induced}) are $\biota_{\rm s}=-\bB^{\rm m}\tilde\bq_k$, and the effective channel of (\ref{eq:effchannel}) and the delivered power of Lemma~\ref{lem:passivity} are carried by
\begin{equation}
(\tilde\bg_k^{\rm m})^{\mathsf T}=(\hat\bg_k^{(0)})^{\mathsf T}-\bg_{k,\rm s}^{\mathsf T}\bB^{\rm m},\quad R_r\bC^{\rm m}_{\rm in}=\Herm\big\{\bLam^{(00)}-\bZ^{\rm m}_{0\rm s}\bB^{\rm m}\big\},\label{eq:gm}
\end{equation}
$\Herm\{\bA\}=\tfrac12(\bA+\bA^{\mathsf H})$, in the sense that they return $\tilde\bg_k^{\mathsf T}\bw$ and $\bw^{\mathsf H}\bC_{\rm in}\bw$ whenever $\tilde\bq=\bF\bw$. Consequently the value of
\begin{align}
(\mathrm{P2}):\ \max_{\{\tilde\bq_k\}}\ &\sum_{k=1}^{K}\alpha_k\log_2\!\Big(1+\frac{|(\tilde\bg_k^{\rm m})^{\mathsf T}\tilde\bq_k|^2}{\sum_{k'\ne k}|(\tilde\bg_k^{\rm m})^{\mathsf T}\tilde\bq_{k'}|^2+\sigma_k^2}\Big)\nonumber\\
\text{s.t.}\ &\textstyle\sum_k\tilde\bq_k^{\mathsf H}\bC^{\rm m}_{\rm in}\tilde\bq_k\le P,\label{eq:P2}
\end{align}
bounds (P1) from above for every transmit configuration at fixed stack geometry and loads.
\end{proposition}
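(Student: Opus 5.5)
The bound is a relaxation argument in two steps. First, show that every quantity in (P1) that depends on the transmit configuration factors through the transmit modal coefficients $\tilde\bq=\bF\bw$ with $\bF=\bXi\bPhi_0$. Second, observe that (P2) maximizes the same objective over a superset of feasible points.

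For the factorization I would use the diagonal-limit representation of Proposition~\ref{prop:codeword3d}(ii): $\bZ_{l0}=\bPhi_l^{\mathsf H}\bXi\bLam^{(l0)}\bXi\bPhi_0$. Stacking over $l\ge1$ gives $\bZ_{\rm s0}=\bPhi_{\rm s}^{\mathsf H}\bXi_{\rm s}\bLam^{({\rm s}0)}\bF=\bZ^{\rm m}_{\rm s0}\bF$. In the same way $\bZ_{0\rm s}=\bF^{\mathsf H}\bZ^{\rm m}_{0\rm s}$ and $\bZ_{00}=\bF^{\mathsf H}\bLam^{(00)}\bF$. From (\ref{eq:induced}), $\biota_{\rm s}=-\bT\bZ^{\rm m}_{\rm s0}\bF\bw=-\bB^{\rm m}\tilde\bq$. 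Proposition~\ref{prop:codeword3d}(i) on plane $0$ gives $\bg_{k,0}^{\mathsf T}\bw=(\hat\bg_k^{(0)})^{\mathsf T}\bXi\bPhi_0\bw$. Here I would either absorb $\bXi$ into the definition of $\tilde\bq$, consistent with $\bF$, or check that the paper's $\hat\bg_k^{(0)}$ is meant in that convention. Then (\ref{eq:effchannel}) yields $\tilde\bg_k^{\mathsf T}\bw=(\tilde\bg_k^{\rm m})^{\mathsf T}\tilde\bq$.

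For the power, $\bZ_{\rm in}=\bF^{\mathsf H}(\bLam^{(00)}-\bZ^{\rm m}_{0\rm s}\bB^{\rm m})\bF$. The quadratic form $\bw^{\mathsf H}\Real\{\bZ_{\rm in}\}\bw$ equals $\Real\{\bw^{\mathsf H}\bZ_{\rm in}\bw\}$ because $\bZ_{\rm in}$ is complex symmetric, so its real part is its Hermitian part. This equals $\tilde\bq^{\mathsf H}\Herm\{\bLam^{(00)}-\bZ^{\rm m}_{0\rm s}\bB^{\rm m}\}\tilde\bq=R_r\tilde\bq^{\mathsf H}\bC^{\rm m}_{\rm in}\tilde\bq$. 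This is why $\Herm$ rather than $\Real$ appears in (\ref{eq:gm}): the mixed matrix is no longer complex symmetric, so only its Hermitian part prices $\Real\{P_{\rm cx}\}$.

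Given the factorization, take any feasible point $(\Utt_0,\bW)$ of (P1) at the fixed stack geometry and loads. Set $\tilde\bq_k=\bF\bw_k$. Each SINR of (\ref{eq:sinr}) equals the corresponding SINR in (P2), and C1 becomes the (P2) power constraint. The (P2) point is therefore feasible with the same objective. Dropping C2, C3 and C5 only enlarges the set, and the maximization over $\Utt_0$ is absorbed because $\bF\bw$ ranges over a subset of $\mathbb C^{N_{\Iset}}$ for every configuration. So $\mathrm{val}(\mathrm{P2})\ge\mathrm{val}(\mathrm{P1})$.

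I expect the main obstacle to be rigor in the factorization. The identity $\bZ_{l0}=\bPhi_l^{\mathsf H}\bXi\bLam^{(l0)}\bXi\bPhi_0$ holds only in the diagonal limit (\ref{eq:Lam}), or with the exact modal Gram blocks $\bY^{(l0)}$ in place of $\bLam^{(l0)}$, and with a truncation of $\Iset$ that is complete for the transmit currents. I would state the bound for the exact Gram representation, which makes the factorization an identity, and then invoke Lemma~\ref{lem:diag3d}(i)–(ii). For $h=z_l>0$ the inter-plane blocks converge even on evanescent modes, so only the self-block $\bLam^{(00)}$ is affected by the $h=0$ logarithmic issue of Lemma~\ref{lem:diag3d}(iii). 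That block enters only through its resistive, visible-disc part in $\bC^{\rm m}_{\rm in}$, where the limit is benign.
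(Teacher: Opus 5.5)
Your proposal is correct and follows the paper's proof essentially step by step: you factor $\bZ_{00}$, $\bZ_{0\rm s}$, $\bZ_{\rm s0}$ and $\bg_{k,0}$ through $\bF$ via the diagonal-limit pullback and the codeword map, use the complex symmetry of $\bZ_{\rm in}$ to replace $\Real$ by $\Herm$, and conclude with the relaxation $\{\bF\bw\}\subseteq\mathbb C^{N_{\Iset}}$ with C2--C3 dropped. Your hesitation about $\bXi$ in the channel term is unnecessary, because $\bF=\bXi\bPhi_0$ already contains it and so $\bg_{k,0}^{\mathsf T}\bw=(\hat\bg_k^{(0)})^{\mathsf T}\bF\bw$ directly; your caveat that the identities are exact only in the diagonal limit matches the paper's own remark after the proposition.
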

\begin{proof}
See \apx{app:hub}{H}.
\end{proof}

These factorizations are exact in the diagonal limit of Lemma~\ref{lem:diag3d}. At finite aperture we evaluate $\bZ^{\rm m}_{\rm s0}$, $\bZ^{\rm m}_{0\rm s}$, and the resistive part of $\bLam^{(00)}$ against the modes themselves rather than their lattice samples, which leaves only the truncation of $\Iset$ as error, and report the residual of the identity. The flat ohmic part of $\bLam^{(00)}$, scaled to reproduce $\varepsilon$ on the port diagonal, keeps $\bC^{\rm m}_{\rm in}$ positive definite on the evanescent modes. Algorithm~\ref{alg:ao} evaluates (P2) at the configuration it returns; its solution, projected onto the codeword manifold by matching pursuit over all $K$ streams at once, seeds the transmit positions. C2 is dropped, a port voltage having no field-domain meaning, and so is C3, whose field counterpart would cost the closed-form WMMSE structure; dropping constraints keeps the bound valid. On the layers, the fully connected pattern upper-bounds every sparser pattern at fixed geometry. The layer positions are initialized on the $\lambda/2$ grid, where the in-plane resistive coupling nearly vanishes \cite[Cor.~1]{Iacovelli2026MC}. Loads start from the best of a few uniform reactances at the two extremes of the terminal law (\ref{eq:terminal}): every atom at its own resonance, $\bX=-\Imag\{\diag(\bZ_{\rm ss})\}$, which draws the largest currents and dissipation (Corollary~\ref{cor:eff}), and the open circuit, $\bX=\diag\{X_{l,n}\}$ with every $|X_{l,n}|\to\infty$, where $\bT\to\mathbf0$, which is transparent and has no gradient, so the returned point is compared with that inert stack.

\begin{algorithm}[t]
\caption{Coupling-aware design of the stacked fluid metasurface}
\label{alg:ao}
\begin{algorithmic}[1]
\STATE \textbf{Input:} apertures $\surf_l$, heights $z_l$, $\bp$, $\varepsilon$, $\zeta_A$, pattern $\netset$, $P$, $\bar V_{\max}$, $\bar I_{\max}$, $d_{\min}$, $\alpha_k$, user spectra on every plane
\STATE initialize $\Utt^{(0)}$ (plane $0$: matching pursuit on the solution of (P2); planes $l\ge1$: $\lambda/2$ grid), $\bX^{(0)}$ (best uniform reactance, Section~\ref{sec:cert}), $\bW^{(0)}$ (whitened equal-power MRT)
\REPEAT
\STATE build $\bZ(\Utt)$ by (\ref{eq:Znorm}); invert $\bT$; form $\bB$, $\bZ_{\rm in}$, $\bC_{\rm in}$, $\{\tilde\bg_k\}$
\STATE \textit{precoder block:} whiten by $\bC_{\rm in}$, WMMSE passes with C2-C3 penalties
\STATE \textit{load block:} $T_X$ projected-gradient steps on $\bX$ with Lemma~\ref{lem:loadgrad}
\STATE \textit{position block:} $T_U$ projected-gradient steps on every plane with Lemmas~\ref{lem:kernelgrad}-\ref{lem:posgrad}; push-out for C5
\UNTIL{objective change $<$ tolerance}
\STATE \textbf{Output:} $\Utt$, $\bX$, $\bW$, and the certificate gaps to the HUB and to the fully connected anchor
\end{algorithmic}
\end{algorithm}

Each block is monotone, the precoder block by \cite{Shi2011WMMSE} and the other two by the Armijo rule, and the objective is bounded, so the objective sequence converges and standard arguments yield a partial stationary point of (P1). An outer iteration costs one $O(N_{\rm s}^3)$ inversion, one $O(N_0^3)$ whitening, $O(K^2N_0+KN_0^2)$ per WMMSE pass, and $O(KN_{\rm s}^2)$ per load or position sweep, with no full-wave solver and no nested cascade. For spectrally local layers, Corollary~\ref{cor:ladder} replaces the inversion by $N_{\Iset}$ tridiagonal solves of size $L$, $O(N_{\Iset}L)$ in total and independent of the layer port count.

\section{Numerical Results}\label{sec:numerics}

Whatever model a scheme is \emph{designed} with, it is \emph{scored} under the exact one, with the channel (\ref{eq:effchannel}) and the power of Lemma~\ref{lem:passivity}, and measured against the direct link, obtained by open-circuiting every atom, $\bT=\mathbf0$, so that $\tilde\bg_k=\bg_{k,0}$. We check the convergence of Algorithm~\ref{alg:ao} first, then map the regimes of the stack, and compare designs in spectral efficiency last.

\subsection{Scenario, conventions, and schemes}\label{sec:setup}

Lengths are in wavelengths and impedances are normalized by $R_r$. The elements are Gaussian, $\xi(\bsb)\propto e^{-\|\bsb\|^2/(2\sigma^2)}$ with $\sigma=\lambda/20$, and every entry of $\bZ$ is evaluated from (\ref{eq:Zports}), not from its point limit, by tabulating the order-zero and order-two Hankel transforms of (\ref{eq:zetahat3d}) weighted by $\hat\xi^2$, which carry the $\cos2\varphi$ dependence on the angle $\varphi$ between $\btau$ and $\bpb$, together with the derivatives that give the gradients of Lemma~\ref{lem:kernelgrad}. The diagonal is $\rho_\sigma(0)+\varepsilon+\im\zeta_A$ with $\rho_\sigma(0)=0.943$, $\varepsilon=0.05$, and $\zeta_A=0.58$, and the channels carry the element taper. Ports keep $d_{\min}=0.15\lambda$ and $3\sigma$ from the aperture edge.

$L=2$ sheets of $5\times5$ atoms on a $\lambda/2$ grid share the $4\lambda\times4\lambda$ aperture of the transmit plane, the first at the standoff $z_1-z_0=\lambda/2$ and the two $h=\lambda/4$ apart. The $N_0=4$ transmit ports start on the central $\lambda/2$ square. $K=3$ users with random polarization, handsets of unknown orientation, are dropped uniformly within a lateral half-width of $4\lambda$, $8\lambda$ to $14\lambda$ beyond the last sheet, inside the radiating near field of the aperture, and their channels are normalized to unit mean transmit-port gain, so that $P/\sigma^2=10$~dB is the SNR of the direct link. Curves average eight drops, each with the same seed in every experiment and at every sweep point, so sweeps are paired and figures with the same scenario share the users.

Every design is rescaled to meet C1 with equality under the exact $\bC_{\rm in}$ before it is scored, and keeps any point of its own feasible set that another run has found and that scores higher: the inert stack, which is feasible but cannot start the load block (Section~\ref{sec:cert}), and, in the nested comparisons below, the point of the smaller design. Lattice cells cut by the light circle, where the $1/k_z$ singularity of (\ref{eq:zetahat3d}) is integrable but not samplable, are cell-averaged on an $8\times8$ sub-grid \cite{Iacovelli2026VA}.

MA is Algorithm~\ref{alg:ao} in full, the transmit ports moving continuously as movable antennas do. FAS keeps them on a grid of pitch $\lambda/8$, as the ports of a fluid antenna, at the positions of the (P2) projection that also seeds MA, and optimizes loads and precoders; MIMO does the same with the ports frozen on the $\lambda/2$ square. The square lies on the grid, and grid ports are feasible for MA, so MA$\,\ge\,$FAS$\,\ge\,$MIMO in every drop. Two designs run the MIMO pipeline under one modeling error each: the norm power model (NPM), $\bC_{\rm in}\to\bI_{N_0}$, blind to the pull of Lemma~\ref{lem:passivity}, and the cascade-approximated stack (CAS), under (A1)+(A2) of Proposition~\ref{prop:cascade}; both are scored as designed, since their errors are what they show. DIRECT open-circuits the stack, and HUB is (P2) of Proposition~\ref{prop:hub} at the configuration MA returns, on the modes $\|\bk_i\|\le2.5\kappa$, with the finite-aperture blocks of Section~\ref{sec:cert}. Unless stated, every sheet is fully connected.

\subsection{Convergence}

Fig.~\ref{fig:conv} runs Algorithm~\ref{alg:ao} in the default scenario from three initializations, every curve rising since each block is monotone (Section~\ref{sec:cert}). From the (P2) projection of Proposition~\ref{prop:hub} the design starts at $9.63$~bit/s/Hz, $1.1$ above the $\lambda/2$ square, reaches $99\%$ of its limit after seven outer iterations, one before the square, and ends at $13.15$, within $0.04$ of it; a random constellation ends $0.9$ lower. The projection buys speed and robustness to the start rather than a better point, and eight outer iterations suffice below.

\begin{figure}[t]\centering
\includegraphics[width=.7\columnwidth]{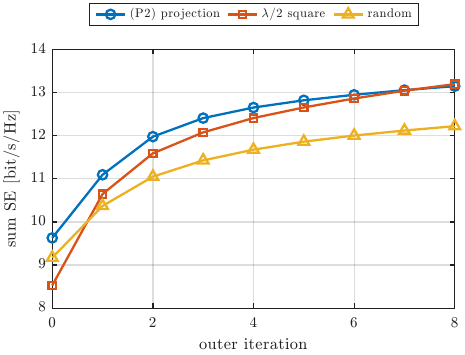}
\caption{Convergence of Algorithm~\ref{alg:ao} from three initializations, default scenario, averaged over the drops.}\label{fig:conv}
\end{figure}

\begin{figure}[t]\centering
\includegraphics[width=.7\columnwidth]{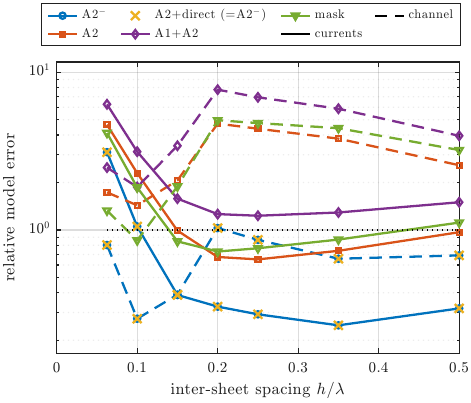}
\caption{Relative model error of the resonant stack against the spacing, on the induced currents (solid) and on the channel (dashed); the crosses, A2+direct, coincide with A2$^-$ at $L=2$.}\label{fig:stack}
\end{figure}

\subsection{What the cascade costs}

Fig.~\ref{fig:stack} prices the assumptions of Proposition~\ref{prop:cascade} one at a time, at resonant loads, together with the unit-modulus phase mask $\diag(e^{\im\theta_{ln}})$ of Proposition~\ref{prop:cascade}(iii) on the scalar propagator of Remark~\ref{rem:RS}, scaled to the port kernel at broadside. The models err most where the exact stack chokes itself (Remark~\ref{rem:stiffness}): at $h=\lambda/16$ the relative error on the induced currents is $3.1$ for (A2$^-$), $4.7$ for (A2), $6.3$ for (A1)+(A2), and $4.1$ for the mask, and every reduced model is still above $1$ at $\lambda/10$. Beyond $0.2\lambda$ the (A2$^-$) error settles between $0.25$ and $0.33$, the backward blocks it drops being the forward ones transposed (Theorem~\ref{thm:cfe3d}(iii)), whereas the (A2) error stays between $0.65$ and $0.97$, because the direct blocks it drops stay strong, $\|\bZ_{20}\|/\|\bZ_{10}\|$ being $0.80$ at $h=\lambda/2$ and $0.66$ at $h=\lambda$; restoring them, the crosses, puts (A2) exactly on (A2$^-$), as Remark~\ref{rem:stiffness} requires at $L=2$. A precoder designed under a reduced model and scored exactly loses up to $1.71$~bit/s/Hz under (A2), $1.74$ under (A1)+(A2), $1.66$ under the mask, and $0.66$ under (A2$^-$), every maximum at $h=\lambda/2$, the widest spacing tested: the models err most in the currents where the stack is choked, but cost most in rate where it is active.

\subsection{How to wire a stack}

The patterns of Section~\ref{sec:loaded} are compared at $L=4$, where the ladder ($2L-1=7$ freedoms per site) and the full per-column $L$-port ($10$) differ. In increasing $\dim\netset$, from $100$ to $5050$ on the axis of Fig.~\ref{fig:wiring}, they are diagonal layers, $\bX$ diagonal (diag); in-plane pairs, each $\bX_{ll}$ block-diagonal in $2\times2$ blocks (group); the vertical ladder, $\bX_{ll}$ and $\bX_{l,l+1}$ diagonal (vertical); an in-plane nearest-neighbour chain per sheet, each $\bX_{ll}$ tridiagonal (tree); all pairs within a column, every $\bX_{ll'}$ diagonal (vertFull); fully connected layers, each $\bX_{ll}$ dense and $\bX_{ll'}=\mathbf0$ for $l\neq l'$ (fullLayer); and the stack-wide network, $\bX$ dense (stackwide). In-plane links follow the serpentine order of the lattice, and the vertical patterns carry one value per site, so they are the supports of the networks of Corollary~\ref{cor:vertical} rather than those networks; the load block of Lemma~\ref{lem:loadgrad} costs the same for all of them.

Fig.~\ref{fig:wiring} (top) shows the rate growing with $\dim\netset$ at both spacings, but not in the order of the axis alone. At comparable dimension the in-plane chain beats the vertical ladder, $9.12$ against $8.76$~bit/s/Hz at $h=0.125\lambda$ and $12.97$ against $12.44$ at $0.375\lambda$, while the full per-column $L$-port beats both, with $9.54$ and $13.39$: vertical connectivity pays when each column is fully connected, not when it is a ladder. The dense patterns keep improving, fullLayer reaching $11.51$ and $14.30$ and stackwide $12.50$ and $15.59$, with $5$ to $29$ times the reactances of the column patterns. Widening the gaps from $0.125\lambda$ to $0.375\lambda$ adds $2.8$ to $3.9$~bit/s/Hz to every pattern, the choking of Remark~\ref{rem:stiffness} at $L=4$. Every pattern reached at least the rate of the patterns nested in it on its own, without the fallback of Section~\ref{sec:setup}.

Fig.~\ref{fig:wiring} (bottom) crosses fixed or movable ports, the latter moving the transmit ports and the atoms of both sheets by Lemma~\ref{lem:posgrad}, with layer-diagonal or fully connected sheets. Here the users are dropped once, at the widest spacing, so both references are flat in $h$. Connecting is the better purchase between $0.15\lambda$ and $0.35\lambda$, by up to $0.43$~bit/s/Hz, and moving at the ends, by $0.62$ at $\lambda/16$, $0.29$ at $\lambda/10$, and $0.43$ at $\lambda/2$; over that range their combination keeps $81\%$ to $89\%$ of the sum of the two gains, both reshaping the one operator $\bZ_{\rm ss}+\bM$ that (\ref{eq:induced}) inverts. Every two-sheet design peaks at $h=0.35\lambda$, beyond the spacing $h\approx\sqrt2\lambda/(2\pi)\approx0.225\lambda$ at which the resonant stack draws the most current, where $\chi_{\rm ev}$ changes sign (Corollary~\ref{cor:axial}), on a plateau from $0.15\lambda$ over which the movable connected design stays within $1.7\%$ of its $13.94$~bit/s/Hz. There the second sheet adds $1.20$ to $1.44$~bit/s/Hz to a single movable connected sheet, and at most $0.38$ elsewhere; at $\lambda/16$ the best is always the single sheet with the second one open.

\begin{figure}[t]\centering
\includegraphics[width=.7\columnwidth]{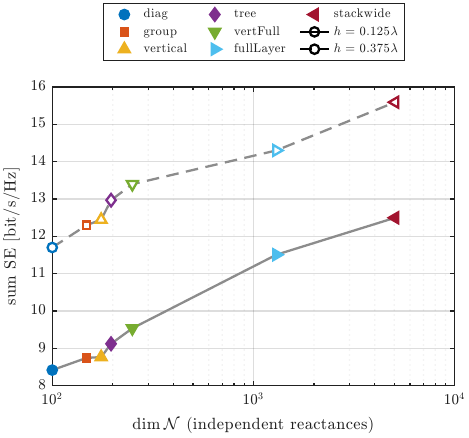}\\[2pt]
\includegraphics[width=.7\columnwidth]{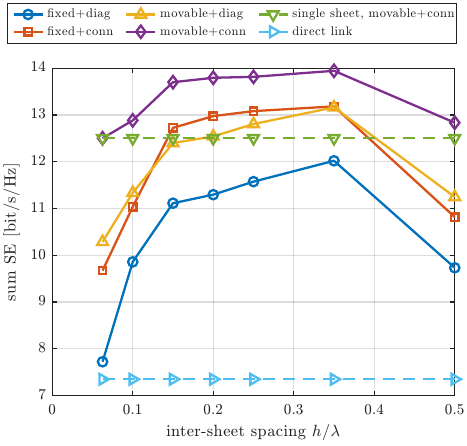}
\caption{How to wire a stack. Top: architecture ladder against the interconnection dimension, $L=4$; filled markers $h=0.125\lambda$, open markers $h=0.375\lambda$. Bottom: moving, interconnecting, or both.}\label{fig:wiring}
\end{figure}

\subsection{How deep, and how many ports}

By (\ref{eq:effchannel}) the effective channel of every user is a row of length $N_0$ whatever the depth, so sheets cannot add streams, only improve the $K$-user channel. Fig.~\ref{fig:scale} (top) sweeps $L$ for $N_0=K\in\{2,4\}$, with users within a lateral half-width of $4\lambda$ (spread) or $\lambda$ (clustered), on sheets of $4\times4$ atoms; users are dropped once, beyond the deepest stack, and each depth keeps the previous design, with the new sheet open, when that is better. Depth pays more with more streams, and not more for close users: six sheets double the rate with four streams, from $7.56$ to $15.48$~bit/s/Hz with spread users and from $5.05$ to $10.38$ with clustered ones, and raise it by $82\%$ and $60\%$ with two. With four streams $88\%$ of the gain is in place at $L=3$, and each further sheet adds at most $0.44$~bit/s/Hz. As $L$ is fixed at build time, this makes $L=3$ the robust choice for four streams, whatever the users' layout.

Fig.~\ref{fig:scale} (bottom) prices a two-sheet stack in ports, with $K=4$ clustered users, the spacing of maximum current, $h=0.225\lambda$, and nested frozen arrays, the first $N_0$ cells of a $\lambda/2$ lattice ordered outward from the centre. The stack pays at every port count: with frozen ports it adds $2.7$ to $3.9$~bit/s/Hz for $N_0\ge2$, and one port behind it, at $7.22$~bit/s/Hz, beats eight bare ports, at $6.33$. Positioning adds $1.7$ to $4.4$~bit/s/Hz on top for $N_0\ge2$, growing with $N_0$, and nothing for a single port, best left at the centre. The $\lambda/8$ grid keeps almost all of it, a fluid port being the codeword of Proposition~\ref{prop:codeword3d} sampled on a grid: FAS trails MA by at most $0.36$~bit/s/Hz ($3\%$) up to $N_0=4$, and by $0.74$ and $1.10$ ($6\%$ and $8\%$) at $N_0=6$ and $8$, where the ports crowd and MA moves them $0.17\lambda$ on average. With loads and precoders optimized for both, the two position sets thus perform alike, unlike in \cite{Iacovelli2026MC}, where the fluid design stopped at the projection, which makes the fluid grid the practical choice over mechanical motion. The same runs price the modeling errors: ignoring the pull of Lemma~\ref{lem:passivity} (NPM) costs at most $1.21$~bit/s/Hz, at $N_0=2$, and nothing at $N_0=4$, whereas designing with the cascade of Proposition~\ref{prop:cascade} (CAS) costs $1.45$ at $N_0=1$ and $9.52$ at $N_0=8$ and, from $N_0=4$ on, falls below the direct link. That loss is in the loads, tuned for currents the true stack does not carry (Remark~\ref{rem:stiffness}): redesigning the precoder on the truth at the same loads recovers $0.06$~bit/s/Hz at $N_0=1$ and $1.34$ at $N_0=8$, leaving $1.39$ and $8.18$ to the loads, which keep the stack below the direct link from $N_0=6$ on. Swept in SNR instead, the same stack returns the certificates: MA reaches $48\%$ of the bound of Proposition~\ref{prop:hub} at $0$~dB and $70\%$ at $20$~dB, at efficiency (\ref{eq:eta}) $0.92$ to $0.95$, clear of the resonant point where Corollary~\ref{cor:eff} makes control and dissipation coincide, with C2 and C3 inactive at a peak port voltage of $7.2$ and a peak induced current of $0.56$, and the identity behind the bound holds to within $12\%$ in power, its rate residual growing to $0.81$~bit/s/Hz at $20$~dB.

\begin{figure}[t]\centering
\includegraphics[width=.7\columnwidth]{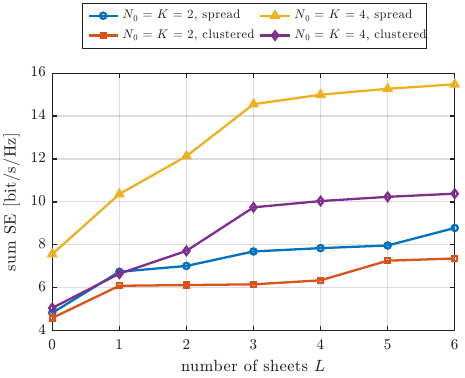}\\[2pt]
\includegraphics[width=.7\columnwidth]{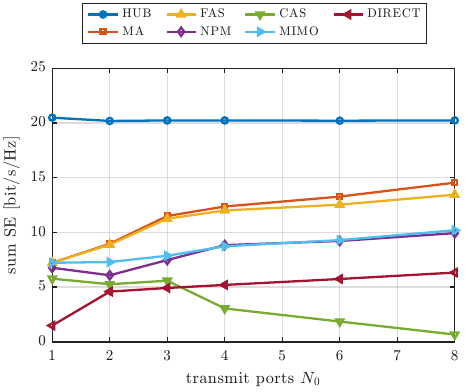}
\caption{How deep, and how many ports. Top: returns of depth, $N_0=K$, spread and clustered users. Bottom: a two-sheet stack priced in transmit ports, $K=4$ clustered users, nested frozen arrays.}\label{fig:scale}
\end{figure}

\section{Conclusion}

A stacked fluid metasurface was modeled on parallel current sheets with only the transmit sheet driven. The impedance kernel at three-dimensional separations is the coupling matrix of the whole constellation, and one inversion gives the induced currents, the multi-user channel, and the stack-loaded input impedance, differentiable in every position and every load. The cascade of the literature is its single-pass limit under two assumptions that fail where stacks are most compact, and the price of a stack is the induced-current norm its sheets dissipate at the transmitter's expense. In the wavenumber domain every mode sees a loaded transmission line, a passive sheet transmits one plus what it reflects, and vertical networks repeated at every site are spectrally local. Numerically, designing with the cascade can leave a stack below the direct link, depth pays more with more streams, and fluid ports on a $\lambda/8$ grid come within $3\%$ of continuously movable ones, so the mechanics of movable antennas buy little in a stacked transmitter. Future work includes bandwidth accounting for resonant layers, coupling-aware acquisition of the per-plane spectra \cite{Iacovelli2026VA}, dielectric spacers, and vertical connections of non-negligible electrical length.

\bibliographystyle{IEEEtran}
\bibliography{refs}

\ifwithappendix
\appendices

\section{Proof of Theorem~\ref{thm:cfe3d}}\label{app:cfe3d}
(i) Substituting $\bj=\sum_{l,n}\iota_{l,n}\xi(\cdot-\bu_{l,n})\bp\,\delta(z-z_l)$ into (\ref{eq:Pcx}) and exchanging the finite sums with the integrals gives $P_{\rm cx}=\tfrac12\sum\iota^*_{l,n}\iota_{l',m}\langle\xi_{l,n}\bp,\kerZ\xi_{l',m}\bp\rangle$; the sifting property of the two deltas reduces the volume integrals to the two planes, on which $\kerZ$ depends on the arguments only through $\bsb-\bsb'$ and $z_l-z_{l'}$, and the substitution $\bsb=\bu_{l,n}+\bm\alpha$, $\bsb'=\bu_{l',m}+\bm\beta$ yields the correlation form (\ref{eq:Zports}); the conjugate on $\xi$ drops because the profile is real, which makes $\bZ$ complex symmetric. Uniqueness and the descent of the split follow as in \cite[App.~C]{Iacovelli2026MC}, since the argument uses only the reality and symmetry of $\kerR$, $\kerX$, which hold for any set of planes.
(ii) For distinct ports the kernel is smooth on the integration region: on the same plane as soon as the supports are disjoint, across planes for every $\btau$ since $R\ge h>0$. Expanding around the separation, the zeroth order contributes $\ell_{\rm e}^2\zeta_p(\btau;h)$, the first vanishes by evenness of $\xi$, and the second is smaller by $O\big((\kappa\Delta)^2+(\Delta/R)^2\big)$, which is why the limit needs $\Delta\ll R$ as well as $\kappa\Delta\ll1$ (Section~\ref{sec:model}). Evaluating $-\im\kappa Z_0\ell_{\rm e}^2\bp^{\mathsf T}\bG(\bm\tau)\bp$ with the Hankel form of \cite[Eq.~(4)]{Iacovelli2026MC} at the three-dimensional $\bm\tau=[\btau^{\mathsf T},h]^{\mathsf T}$ gives $R_r\vartheta(\kappa R,\bp^{\mathsf T}\hat{\bm\tau})$, and $\bp^{\mathsf T}\hat{\bm\tau}=\bpb^{\mathsf T}\btau/R$ because $\bp\perp\hat{\mathbf z}$. The self-terms are on a single plane and are those of \cite[Thm.~1(iii)-(iv)]{Iacovelli2026MC}.
(iii) Symmetry is reciprocity: $\bG(\bm\tau)=\bG(-\bm\tau)^{\mathsf T}$ and the profiles are real. \hfill$\blacksquare$

\section{Proof of Lemma~\ref{lem:passivity}}\label{app:passivity}
The Hermitian part of $\bZ_{\rm ss}+\bM$ is $\Real\{\bZ_{\rm ss}\}+\bR_M$. Since $\Real\{\bZ_{\rm ss}\}=R_r(\varepsilon\bI_{N_{\rm s}}+\bC_{\rm ss})$ with $\bC_{\rm ss}\succeq0$ the port-sampled radiation kernel of the stack ports, and $\bR_M\succeq0$, the Hermitian part is $\succeq\varepsilon R_r\bI_{N_{\rm s}}$ and the matrix is invertible. Eliminating $\biota_{\rm s}$ by (\ref{eq:induced}) from the first row of (\ref{eq:multiport}) gives (\ref{eq:Zin}), symmetric as a Schur complement of a symmetric matrix. For the power balance, the real power crossing all port planes is, by (\ref{eq:balance}), $\sum_lP_{\Omega,l}+P_{\rm rad}=\tfrac12\Real\{\biota^{\mathsf H}\bZ\biota\}$ with $\biota=[\biota_0;\biota_{\rm s}]$ and $P_{\Omega,l}=\tfrac{\varepsilon R_r}{2}\|\biota_l\|^2$ on every sheet, driven or not. The generators at the stack ports are the load networks, which deliver $\tfrac12\Real\{\biota_{\rm s}^{\mathsf H}\bv_{\rm s}\}=-\tfrac12\Real\{\biota_{\rm s}^{\mathsf H}\bM\biota_{\rm s}\}=-\tfrac12\biota_{\rm s}^{\mathsf H}\bR_M\biota_{\rm s}=-P_M$, i.e., they absorb $P_M$. Hence the transmit generators supply $\tfrac12\Real\{\biota_0^{\mathsf H}\bv_0\}=\tfrac12\biota_0^{\mathsf H}\Real\{\bZ_{\rm in}\}\biota_0=\sum_lP_{\Omega,l}+P_M+P_{\rm rad}$, which is (\ref{eq:power3d}). Explicitly, expanding $\biota^{\mathsf H}\bZ\biota$ with $\biota_{\rm s}=-\bB\biota_0$ and using $(\bZ_{\rm ss}+\bM)\bB=\bZ_{\rm s0}$ in the stack block gives $\biota_0^{\mathsf H}(\bZ_{00}-\bZ_{0\rm s}\bB-\bB^{\mathsf H}\bM\bB)\biota_0$, whose Hermitian part is $\Real\{\bZ_{\rm in}\}-\bB^{\mathsf H}\bR_M\bB$, the reactive part $\im\bB^{\mathsf H}\bX\bB$ of the last term being anti-Hermitian. Thus $\Real\{\bZ_{00}\}$ prices the transmit sheet alone, $-\Real\{\bZ_{0\rm s}\bB\}$ everything the stack does, its radiation, its conduction loss, and the interference of its field with the transmitter's, which admit no finer split, and $\bB^{\mathsf H}\bR_M\bB$ the loads. Every term on the right of (\ref{eq:power3d}) is non-negative and $P_{\Omega,0}=\tfrac{\varepsilon R_r}{2}\|\biota_0\|^2$ for every $\biota_0$, so $\bC_{\rm in}\succeq\varepsilon\bI_{N_0}$. \hfill$\blacksquare$

\section{Proof of Lemma~\ref{lem:sheet}}\label{app:star}
By the Weyl identity, the field radiated by tangential currents on $z=z_l$ has, at any $z\ne z_l$, the planar spectrum $\propto(\bI_3-\mathbf k_\pm\mathbf k_\pm^{\mathsf T}/\kappa^2)\hat{\bj}(\bk)\,e^{\im k_z|z-z_l|}/k_z$ with $\mathbf k_\pm=[\bk^{\mathsf T},\pm k_z]^{\mathsf T}$, the sign following that of $z-z_l$. For tangential $\hat{\bj}$ the transverse-transverse block of the projector is $\bI_2-\bk\bk^{\mathsf T}/\kappa^2$ on both sides, so the transverse scattered spectrum is the same function of $|z-z_l|$ above and below the sheet: the sheet scatters symmetrically, mode by mode. Whatever the loads and the interconnection, the induced currents are tangential and produce this symmetric field of amplitude $\beta$. On the transmission side the total transverse field is the incident plus the scattered wave, $t=1+\beta$, and on the reflection side it is the scattered wave alone, $r=\beta$, which is (\ref{eq:star}). For a lossless sheet, per-mode power conservation, $|t|^2+|r|^2=1$, combined with $t-r=1$ gives $\Real\{r\}=-|r|^2$, whose solutions are $r=\tfrac12(e^{\im\psi}-1)$. At the operator level the same argument gives $\bT_{\rm sh}=\bI_{N_{\Iset}}+\bR_{\rm sh}$, losslessness forces $\bR_{\rm sh}=\tfrac12(\bU-\bI_{N_{\Iset}})$ with $\bU$ unitary, and reciprocity makes $\bU$ complex symmetric, so its real and imaginary parts are commuting real symmetric matrices and $\bU=\bQ\diag(e^{\im\psi_i})\bQ^{\mathsf T}$ with $\bQ$ real orthogonal. A two-sheet layer scatters the superposition of two sheets with a phase delay between them, the symmetric argument no longer applies to the pair, and with a network across the sheets the pair realizes an arbitrary lossless two-port per mode, in particular the Huygens condition $r=0$, $|t|=1$, $\arg t$ free. \hfill$\blacksquare$

\section{Proof of Theorem~\ref{thm:dichotomy3d}}\label{app:dichotomy3d}
The Weyl identity \cite[Ch.~2]{Chew1995} expands the scalar spherical wave into plane waves over the aperture plane: for $z\ne0$, $e^{\im\kappa\|\bm\tau\|}/(4\pi\|\bm\tau\|)=\tfrac{\im}{8\pi^2}\int e^{\im(\bk\cdot\btau+k_z|z|)}k_z^{-1}d\bk$. Applying $(\bI_3+\kappa^{-2}\nabla\nabla^{\mathsf T})$ under the integral turns each plane wave into $(\bI_3-\mathbf k_\pm\mathbf k_\pm^{\mathsf T}/\kappa^2)e^{\im(\bk\cdot\btau+k_z|z|)}$, and for tangential $\bp$ the sandwich $\bp^{\mathsf T}(\cdot)\bp=1-(\bk\cdot\bpb)^2/\kappa^2$ is independent of the hemisphere. Hence the planar transform of $-\im\kappa Z_0\bp^{\mathsf T}\bG(\btau,h)\bp$ is $\kappa Z_0(1-c_{\bk}^2)e^{\im k_zh}/(2k_z)$ for $h>0$; for $h=0$ the limit $z\to0^\pm$ is unambiguous for the transverse block and adds the flat transform of the ohmic delta, as in \cite[App.~D]{Iacovelli2026MC}, which is (\ref{eq:zetahat3d}). The split is pointwise: on the disc $k_z>0$ is real, so $e^{\im k_zh}/(2k_z)$ has real part $\cos(k_zh)/(2k_z)$ and imaginary part $\sin(k_zh)/(2k_z)$; outside, $k_z=\im|k_z|$ turns the factor into $-\im e^{-|k_z|h}/(2|k_z|)$, purely imaginary and decaying. \hfill$\blacksquare$

\section{Proof of Corollary~\ref{cor:axial}}\label{app:axial}
Put $\varrho=\|\bk\|$ and specialize Theorem~\ref{thm:dichotomy3d} to facing ports, $\btau=\mathbf0$, where the plane-wave factor $e^{\im\bk\cdot\btau}$ is unity and $\zeta_p(\mathbf0;h)=\frac{1}{4\pi^2}\int\hat\zeta_p(\bk;h)\,d\bk$. Write that integral in polar coordinates, $d\bk=\varrho\,d\varrho\,d\varphi$. The azimuth enters only through $c_{\bk}=\bk^{\mathsf T}\bpb/\kappa=(\varrho/\kappa)\cos\varphi$, whose square averages to $\varrho^2/(2\kappa^2)$ over the circle of radius $\varrho$, so integrating $\varphi$ out leaves the radial weight $1-\varrho^2/(2\kappa^2)$ and the Jacobian $\varrho$, $\zeta_p(\mathbf0;h)=\frac{\kappa Z_0}{4\pi}\int_0^\infty\big(1-\frac{\varrho^2}{2\kappa^2}\big)\frac{e^{\im k_zh}}{k_z}\varrho\,d\varrho$, and dividing by $R_r=Z_0\kappa^2/(6\pi)$ turns the prefactor into $3/(2\kappa)$. The disc and the ring are then integrated apart, in both cases by differentiating $k_z^2=\kappa^2-\varrho^2$. On the disc, $\varrho<\kappa$, where $k_z$ is real, it gives $\varrho\,d\varrho=-k_z\,dk_z$ with $k_z\in(0,\kappa]$. Since $\varrho:0\to\kappa$ runs $k_z:\kappa\to0$, the minus sign of the substitution is cancelled by $-\int_\kappa^0=\int_0^\kappa$, leaving $\frac{3}{2\kappa}\int_0^\kappa\big(\frac12+\frac{k_z^2}{2\kappa^2}\big)e^{\im k_zh}\,dk_z$, whose real part integrates by parts to $\rho(0,h)$, the radiation thus coming from the disc alone, and whose imaginary part is $\chi_{\rm pr}$ of (\ref{eq:axialsplit}). On the ring, $\varrho>\kappa$, where $k_z=\im\nu$ with $\nu=\sqrt{\varrho^2-\kappa^2}\ge0$, it gives $\varrho\,d\varrho=\nu\,d\nu$ and leaves $-\im\frac{3}{2\kappa}\int_0^\infty\big(\frac12-\frac{\nu^2}{2\kappa^2}\big)e^{-\nu h}\,d\nu$, purely imaginary, and the elementary integrals $\int_0^\infty e^{-\nu h}d\nu=1/h$ and $\int_0^\infty\nu^2e^{-\nu h}d\nu=2/h^3$ give $\chi_{\rm ev}$. The two parts sum to $\chi(0,h)$, and the sign change at $x=\sqrt2$, the $3/(4x)$ decay, and the shares quoted in the corollary follow by evaluating (\ref{eq:axialsplit}) numerically. \hfill$\blacksquare$

\section{Proof of Lemma~\ref{lem:diag3d}}\label{app:diag3d}
(i) The proof of \cite[Lemma~2]{Iacovelli2026MC} uses three properties of the weight: smoothness on the disc away from the rim, integrability on the rim ring, and vanishing outside. The weight $\hat r_p(\cdot\,;h)$ has the first two, with $\cos(k_zh)$ bounded, so the same three-region argument, with $\omega_h$ the modulus of continuity at separation $h$, gives the rate. The imaginary part on the disc, $\propto\sin(k_zh)/k_z$, is bounded at the rim, hence smoother than the resistive weight, and the same bound holds. (ii) Outside the disc the weight is $\propto e^{-|k_z|h}/|k_z|$, integrable on every annulus and exponentially decaying; against the Dirichlet kernels, whose tails decay as $1/k$ per axis, the product is absolutely integrable for $h>0$, the dominated-convergence argument of the interior region applies to the whole plane, and the evanescent diagonal converges to $\hat x_p(\bk_i;h)$. (iii) is \cite[Lemma~2]{Iacovelli2026MC} and the discussion following it. \hfill$\blacksquare$

\section{Proof of Corollary~\ref{cor:ladder}}\label{app:ladder}
With $u_l=e^{-\im k_{z,i}z_l}$ and $v_l=e^{\im k_{z,i}z_l}$, (\ref{eq:tline}) reads $[\bar\bLam_i^{\rm ss}]_{ll'}=\frac{Z_{c,i}}{2}u_{\min(l,l')}v_{\max(l,l')}$, the semiseparable form of a transmission line seen at $L$ shunt nodes, and the inverse of a semiseparable matrix is tridiagonal. Computing it with the electrical lengths $\mu_l=k_{z,i}(z_{l+1}-z_l)$ of the gaps gives
\begin{align}
[(\bar\bLam_i^{\rm ss})^{-1}]_{l,l+1}&=\frac{-\im}{Z_{c,i}\sin\mu_l},\nonumber\\
[(\bar\bLam_i^{\rm ss})^{-1}]_{ll}&=\frac{\im\,(\cot\mu_{l-1}+\cot\mu_l)}{Z_{c,i}},\nonumber
\end{align}
where the cotangent missing at $l=1$ and at $l=L$ is replaced by $-\im$, the half-infinite sections below the first sheet and above the last being matched. These are the entries of the nodal admittance matrix of $L$ nodes joined by line sections, so the inverse fails to exist only if $\sin\mu_l=0$ for some gap, which cannot happen for an evanescent mode, $k_{z,i}$ being imaginary there and $\sin\mu_l$ a hyperbolic sine. The transmit column continues the same pattern, whence $(\bar\bLam_i^{\rm ss})^{-1}\bar\bLam_i^{\rm s0}=e^{\im k_{z,i}(z_1-z_0)}\be_1$: seen from the nodes, the transmitter is a source injected at the first sheet alone. Multiplying $(\bar\bLam_i^{\rm ss}+Z_s\bI_L+\hat\bM_i)\bq_{{\rm s},(i)}=-\bar\bLam_i^{\rm s0}q_{0i}$ by $(\bar\bLam_i^{\rm ss})^{-1}$ therefore gives the stated system, whose matrix is tridiagonal plus $(\bar\bLam_i^{\rm ss})^{-1}\hat\bM_i$; when the sheets are not interconnected $\hat\bM_i$ is diagonal, the sum stays tridiagonal, and the solve costs $O(L)$. \hfill$\blacksquare$

\section{Proof of Proposition~\ref{prop:hub}}\label{app:hub}
Stacking the codeword maps over the sheets, (\ref{eq:pullback3d}) gives $\bZ_{\rm s0}=\bPhi_{\rm s}^{\mathsf H}\bXi_{\rm s}\bLam^{(\rm s0)}\bXi\bPhi_0$ and (\ref{eq:codewordmaps}) gives $\bg_{k,\rm s}=\bPhi_{\rm s}^{\mathsf T}\bXi_{\rm s}\hat\bg^{(\rm s)}_k$, so that, with (\ref{eq:mixedblock}), $\bZ_{00}=\bF^{\mathsf H}\bLam^{(00)}\bF$, $\bZ_{0\rm s}=\bF^{\mathsf H}\bZ^{\rm m}_{0\rm s}$, and $\bZ_{\rm s0}=\bZ^{\rm m}_{\rm s0}\bF$: the transmit map stands on the right of every block that leaves the transmit plane and on the left of every block that reaches it. Hence $\bZ_{\rm s0}\biota_0=\bZ^{\rm m}_{\rm s0}\tilde\bq_k$, the induced currents of (\ref{eq:induced}) are $\biota_{\rm s}=-\bB^{\rm m}\tilde\bq_k$ with $\bB^{\rm m}=\bT\bZ^{\rm m}_{\rm s0}$, and substituting in (\ref{eq:effchannel}) gives the first identity of (\ref{eq:gm}). For the second, the input impedance factors through $\bF$ exactly, $\bZ_{\rm in}=\bF^{\mathsf H}[\bLam^{(00)}-\bZ^{\rm m}_{0\rm s}\bB^{\rm m}]\bF$ by (\ref{eq:Zin}), and $\bZ_{\rm in}$ is complex symmetric by Lemma~\ref{lem:passivity}, so $\Real\{\bZ_{\rm in}\}=\Herm\{\bZ_{\rm in}\}$ and the Hermitian form of $\bC^{\rm m}_{\rm in}$ at $\tilde\bq=\bF\bw$ is $\bw^{\mathsf H}\bC_{\rm in}\bw$, the power of Lemma~\ref{lem:passivity}. For the bound, fix the stack geometry and the loads and let $\bW$ be feasible for (P1) at any transmit configuration. Its modal image $\tilde\bq_k=\bF\bw_k$ is feasible for (P2), by the power identity just proved, and attains the same objective value, by the channel identity; (P2) maximizes the same objective over all of $\mathbb C^{N_{\Iset}}$, a superset of the codeword-generated set, and carries neither C2 nor C3. Enlarging the feasible set cannot lower the optimum. \hfill$\blacksquare$
\fi

\end{document}